\documentclass[a4paper,UKenglish,cleveref, autoref, thm-restate]{lipics-v2021}
\newif\ifiscameraready
\iscamerareadyfalse

\ifiscameraready
\else
\hideLIPIcs  %
\fi

\bibliographystyle{plainurl}%

\title{On the Finite Variable-Occurrence Fragment of the Calculus of Relations with Bounded Dot-Dagger Alternation} %

\titlerunning{On the Finite Variable-Occurrence Fragment of the Calculus of Relations ...} %

\author{Yoshiki {Nakamura}}{Tokyo Institute of Technology, Japan}{nakamura.yoshiki.ny@gmail.com}{https://orcid.org/0000-0003-4106-0408}{}%

\authorrunning{Y. Nakamura} %

\Copyright{Yoshiki Nakamura} %

\ccsdesc[100]{Theory of computation~Equational logic and rewriting} %

\keywords{Relation algebra, First-order logic, Decidable fragment, Monoid} %

\category{} %

\ifiscameraready
\relatedversion{} %
\relatedversiondetails[cite=nakamuraFiniteVariableOccurrence2023arxiv]{Extended Version}{} %
\fi
\supplement{}%

\funding{This work was supported by JSPS KAKENHI Grant Number JP21K13828.}%

\acknowledgements{We would like to thank the anonymous reviewers for their useful comments.}%

\nolinenumbers %uncomment to disable line numbering

\EventEditors{J\'{e}r\^{o}me Leroux, Sylvain Lombardy, and David Peleg}
\EventNoEds{3}
\EventLongTitle{48th International Symposium on Mathematical Foundations of Computer Science (MFCS 2023)}
\EventShortTitle{MFCS 2023}
\EventAcronym{MFCS}
\EventYear{2023}
\EventDate{August 28 to September 1, 2023}
\EventLocation{Bordeaux, France}
\EventLogo{}
\SeriesVolume{272}
\ArticleNo{69}
\usepackage{listings}
\lstset{
    breaklines=true,
}
\makeatletter
\def\lst@lettertrue{\let\lst@ifletter\iffalse}
\makeatother

\usepackage{amsmath}
\usepackage{mathtools}

\usepackage{graphviz}
\usepackage{svg}

\Crefname{theorem}{Thm.}{Thms.}
\Crefname{proposition}{Prop.}{Props.}
\Crefname{lemma}{Lem.}{Lems.}
\Crefname{corollary}{Cor.}{Cors.}
\Crefname{definition}{Def.}{Defs.}
\crefname{section}{Sect.}{Sects.}
\crefname{figure}{Fig.}{Figs.}

\usepackage{tikz}
\usetikzlibrary{calc, trees, bending, backgrounds, shapes, shapes.geometric, fit, arrows, arrows.meta, positioning, graphs}
\tikzset{earrow/.style={>={{[flex] Latex[length=.08cm, width=2.5pt]}}}}
\usepackage{algorithm, algpseudocode}

\usepackage{xcolor}
\definecolor[named]{ACMBlue}{cmyk}{1,0.1,0,0.1}
\definecolor[named]{ACMYellow}{cmyk}{0,0.16,1,0}
\definecolor[named]{ACMOrange}{cmyk}{0,0.42,1,0.01}
\definecolor[named]{ACMRed}{cmyk}{0,0.90,0.86,0}
\definecolor[named]{ACMLightBlue}{cmyk}{0.49,0.01,0,0}
\definecolor[named]{ACMGreen}{cmyk}{0.20,0,1,0.19}
\definecolor[named]{ACMPurple}{cmyk}{0.55,1,0,0.15}
\definecolor[named]{ACMDarkBlue}{cmyk}{1,0.58,0,0.21}
\hypersetup{colorlinks,
    linkcolor=ACMPurple,
    citecolor=ACMPurple,
    urlcolor=ACMDarkBlue,
    filecolor=ACMDarkBlue}
\usepackage{hyperref}

\usepackage{rotating}

\usepackage{diagbox}

\MakeRobust{\ref}%

\makeatletter
\newcommand{\labeltext}[2]{%
    \@bsphack
    \csname phantomsection\endcsname %
    \def\@currentlabel{#1}{\label{#2}}%
    \@esphack
}
\makeatother

\usepackage[xcolor, notion]{knowledge}
\knowledgeconfigureenvironment{theorem,lemma,proof}{}

\usepackage{stackengine}
\newcommand{\defeq}{\mathrel{\ensurestackMath{\stackon[1pt]{=}{\scriptscriptstyle\Delta}}}}
\newcommand{\defiff}{\mathrel{\ensurestackMath{\stackon[1pt]{\iff}{\scriptscriptstyle\Delta}}}}
\usepackage{scalerel,stackengine}
\stackMath
\newcommand\reallywidehat[1]{%
    \savestack{\tmpbox}{\stretchto{%
            \scaleto{%
                \scalerel*[\widthof{\ensuremath{#1}}]{\kern-.6pt\bigwedge\kern-.6pt}%
                {\rule[-\textheight/2]{1ex}{\textheight}}%
            }{\textheight}%
        }{0.5ex}}%
    \stackon[1pt]{#1}{\tmpbox}%
}
\parskip 1ex

\knowledgenewrobustcmd\tuple[1]{\langle #1 \rangle}
\knowledgenewrobustcmd\set[1]{\{ #1 \}}
\knowledgenewrobustcmd\card{\mathop{\cmdkl{\#}}}
\knowledgenewrobustcmd\powerset{\mathop{\cmdkl{\wp}}}
\knowledgenewrobustcmd\range[1]{\cmdkl{[} #1 \cmdkl{]}}

\knowledgenewrobustcmd\quoset{\cmdkl{/}}
\knowledgenewrobustcmd\quoclass[1]{\cmdkl{[}#1\cmdkl{]}}

\usepackage{stmaryrd}
\knowledgenewrobustcmd\jump[1]{\cmdkl{\llbracket} #1 \cmdkl{\rrbracket}}

\NewDocumentCommand\const{m}{\mathsf{#1}}

\knowledgenewrobustcmd{\nat}{\cmdkl{\mathbb{N}}}
\knowledgenewrobustcmd{\pnat}{\cmdkl{\mathbb{N}}_{+}}

\usepackage{ebproof}

\knowledgenewrobustcmd\sig{\cmdkl{S}}
\knowledgenewrobustcmd\vsig{\cmdkl{\Sigma}}

\NewDocumentCommand\term{O{1}}{%
    \ifcase#1
        undefined
    \or t
    \or s
    \or {u}
    \else undefined

    \fi
}
\NewDocumentCommand\termset{O{1}}{%
    \ifcase#1
        undefined
    \or T
    \else undefined

    \fi
}
\knowledgenewrobustcmd\allterm{\cmdkl{\mathbf{T}}}

\NewDocumentCommand\fml{O{1}}{%
    \ifcase#1
        undefined
    \or \varphi
    \or \psi
    \or \rho
    \else undefined

    \fi
}

\knowledgenewrobustcmd\assign[2]{\cmdkl{[}#1\cmdkl{/}#2\cmdkl{]}}

\knowledgenewrobustcmd\vo{\cmdkl{\mathop{\mathrm{vo}}}}

\newcommand{\val}{\mathfrak{v}}
\knowledgenewrobustcmd\voset[2]{#1_{\cmdkl{(\vo \le #2)}}}
\knowledgenewrobustcmd\voseteq[2]{#1_{\cmdkl{(\vo = #2)}}}
\newcommand{\algclass}{\mathcal{C}}

\knowledgenewrobustcmd\simalgclass[1]{\mathbin{\cmdkl{\sim}_{#1}}}

\knowledgenewrobustcmd{\extpvsig}{\cmdkl{\dot{\Sigma}}}

\newrobustcmd{\extpvsigzero}{\kl[\extpvsigzero]{\dot{\Sigma}}_{\kl[\extpvsigzero]{0}}}
\knowledge{\extpvsigzero}{automatic in command}

\knowledgenewrobustcmd{\extpsim}[1]{\mathbin{\cmdkl{\dot{\sim}}}_{#1}}

\NewDocumentCommand\word{O{1}}{%
    \ifcase#1
        undefined
    \or w
    \or v
    \else undefined

    \fi
}

\newcommand{\CoR}{\mathrm{CoR}}
\newrobustcmd\CoRallterm{\kl[\CoRallterm]{\mathbf{T}}^{\kl[\CoRallterm]{\CoR}}}
\knowledge{\CoRallterm}{automatic in command}
\newrobustcmd\CoRsig{\kl[\CoRsig]{S}^{\kl[\CoRsig]{\CoR}}}
\knowledge{\CoRsig}{automatic in command}
\newrobustcmd{\CoRSigma}{\kl[\CoRSigma]{\Sigma}^{\kl[\CoRSigma]{\CoR}}}
\knowledge{\CoRSigma}{automatic in command}
\newrobustcmd{\CoRPi}{\kl[\CoRPi]{\Pi}^{\kl[\CoRPi]{\CoR}}}
\knowledge{\CoRPi}{automatic in command}

\NewDocumentCommand\model{O{1}}{%
    \ifcase#1
        undefined
    \or \mathfrak{A}
    \or \mathfrak{B}
    \else undefined

    \fi
}

\knowledgenewrobustcmd{\REL}{\cmdkl{\mathsf{REL}}}

\knowledge{notion}
| KMW
| KMW class

\knowledge{notion}
| BSR
| BSR class

\knowledge{notion}
| subterm-closed

\knowledge{notion}
| words
| word

\knowledge{notion}
| monoids
| monoid

\knowledge{notion}
| $\sig$-algebra
| $\sig$-algebras

\knowledge{notion}
| equational theory
| equational theories

\knowledge{notion}
| $k$-variable fragment
| $1$-variable fragment
| $2$-variable fragment

\knowledge{notion}
| $k$-variable-occurrence fragment
| $1$-variable-occurrence fragment
| $4$-variable-occurrence fragment

\knowledge{notion}
| CoR
| CoR terms

\knowledge{notion}
| bounded dot-dagger alternation
| dot-dagger alternation

\knowledge{notion}
| dot-dagger alternation hierarchy

\knowledge{notion}
| structure
| structures

\knowledge{notion}
| cofinite
| cofiniteness

\begin{document}

\maketitle

\begin{abstract}
We introduce the \emph{$k$-variable-occurrence fragment}, which is the set of terms having at most $k$ occurrences of variables.
We give a sufficient condition for the decidability of the equational theory of the $k$-variable-occurrence fragment using the finiteness of a monoid.
As a case study, we prove that for Tarski's calculus of relations with bounded dot-dagger alternation (an analogy of quantifier alternation in first-order logic), the equational theory of the $k$-variable-occurrence fragment is decidable for each $k$.
 \end{abstract}

\section{Introduction}\label{section: introduction}
Since \emph{the satisfiability problem of first-order logic} is undecidable \cite{Church36, Turing37} in general,
(un-)decidable classes of first-order logic are widely studied \cite{Borger1997};
for example, the undecidability holds even for the \intro[KMW]{Kahr--Moore--Wang (KMW) class}\footnote{Recall the notation for \emph{prefix-vocabulary classes} \cite[Def.\ 1.3.1]{Borger1997}.
    E.g., $[\forall\exists\forall, (0, \omega)]$ denotes the set of prenex sentences $\fml$ of first-order logic without equality, function symbols, nor constants such that
    the quantifier prenex of $\fml$ is $\forall \exists \forall$;
    $\fml$ has $\omega$ (countably infinitely many) binary relation symbols and $\fml$ does not have 1- nor $i$-ary relation symbols for $i \ge 3$.} $[\forall\exists\forall, (0, \omega)]$ \cite{Kahr62},
but it is decidable for the \intro[BSR]{Bernays--Sch{\"o}nfinkel--Ramsey (BSR) class} $[\exists^*\forall^*, \mathrm{all}]_{=}$ \cite{Bernays1928,Ramsey1930}.
\kl[CoR]{The calculus of relations (CoR)} \cite{Tarski1941}, revived by Tarski, is an algebraic system on binary relations; its expressive power is equivalent to that of the three-variable fragment of first-order logic with equality \cite{Tarski1941, Tarski1987}, w.r.t.\  binary relations.
The equational theory of \kl{CoR} is undecidable \cite{Tarski1941, Tarski1987}\footnote{In \cite{Tarski1987}, the undecidability of the \kl{equational theory} is shown for more general classes of relation algebras.} in general, which follows from the undecidability of the \kl{KMW class},
but, for example, it is decidable for the \emph{(existential) positive fragment} \cite{Andreka1995, Pous2018} and the \emph{existential fragment} \cite{nakamuraExistentialCalculiRelations2023} of \kl{CoR}, which follows from the decidability of the \kl{BSR class}.
On the undecidability of \kl{CoR}, the undecidability holds even for the \kl{$1$-variable fragment} \cite{Maddux94} and even for the \kl{$1$-variable fragment} only with union, composition, and complement \cite{Nakamura2019}, where the \intro{$k$-variable fragment} denotes the set of terms having at most $k$ variables.

Then, from the undecidability result for the \kl{$1$-variable fragment} of \kl{CoR} \cite{Maddux94,Nakamura2019} above,
the following natural question arises---Is it decidable for the \kl{$k$-variable-occurrence fragment} of \kl{CoR}?
Here, the \kl{$k$-variable-occurrence fragment} denotes the set of terms having at most $k$ occurrences of variables.
For example, when $a, b$ are variables and $\const{I}$ is a constant,
the term $(a \cdot b) \cdot (\const{I} \cdot (a \cdot b))$ has $4$ occurrences of variables
and $1$ occurrence of constants;
thus, this term is in the \kl{$4$-variable-occurrence fragment} (cf.\ the term is in the \kl{$2$-variable fragment} since the variables $a, b$ occur).
While one may seem that this restriction immediately implies the decidability, the equational theory of the \kl{$k$-variable-occurrence fragment} on some (single) algebra is undecidable in general even when $k = 0$ (\Cref{remark: undecidable}).

Our contribution is to prove that the equational theory of the \kl{$k$-variable-occurrence fragment} is decidable for \kl{CoR} \emph{with \kl{bounded dot-dagger alternation}},
where the \kl{dot-dagger alternation} \cite{nakamuraExpressivePowerSuccinctness2020, nakamuraExpressivePowerSuccinctness2022} is an analogy of the quantifier alternation in first-order logic.
Note that the equational theory of the \kl{$k$-variable fragment} is undecidable in general for \kl{CoR} \cite{Maddux94, Nakamura2019} (even with \kl{bounded dot-dagger alternation} (\Cref{proposition: decidable/undecidable})).
Our strategy is to prove that the number of terms in the \kl{$k$-variable-occurrence fragment} is finite up to the semantic equivalence relation.
To this end, \labeltext{(1)}{outline:1}(1) we decompose terms as much as possible; and then \labeltext{(2)}{outline:2}(2) we show that each decomposed part is finite up to the semantic equivalence relation by collecting valid equations.
By the preprocessing of \ref{outline:1}, one can see that for \ref{outline:2}, essentially, it suffices to prove the finiteness of some monoid (using the method of \Cref{section: 1-variable}).
Its finiteness is not clear, as it is undecidable whether a (finitely presented) monoid is finite in general;
but, fortunately, we can prove the finiteness (\Cref{theorem: main}) by finding valid equations (\Cref{figure: equations}).

The rest of this paper is structured as follows.
\Cref{section: 1-variable} introduces the \kl{$k$-variable-occurrence fragment} for general algebras and gives a framework to prove the decidability from the finiteness of a monoid.
\Cref{section: CoR} recalls the syntax and semantics of \kl{CoR} and the \kl{dot-dagger alternation hierarchy}.
In \Cref{section: CoR k}, based on \Cref{section: 1-variable}, we prove that the equational theory of \kl{CoR} with \kl{bounded dot-dagger alternation} is decidable.
\Cref{section: conclusion} concludes this paper.

We write $\intro*\nat$ for the set of all non-negative integers.
For $l, r \in \nat$, we write $\intro*\range{l, r}$ for the set $\set{i \in \nat \mid l \le i \le r}$.
For a set $A$, we write $\intro*\card A$ for the cardinality of $A$ and $\intro*\powerset(A)$ for the power set of $A$.
For a set $A$ and an equivalence relation $\sim$ on $A$, we write $A\intro*\quoset{\sim}$ for the quotient set of $A$ by $\sim$ and $\intro*\quoclass{a}_{\sim}$ for the equivalence class of an element $a \in A$ on $\sim$.
\knowledgenewrobustcmd{\univ}[1]{\cmdkl{|}#1\cmdkl{|}}

\section{On the $k$-variable-occurrence fragment}\label{section: 1-variable}
We fix $\intro*\vsig$ as a non-empty \emph{finite} set of variables.
We fix $\intro*\sig$ as a \emph{finite} algebraic signature; $\sig$ is a map from a finite domain (of functions) to $\nat$.
For each $\tuple{f, n} \in \sig$, we write $f^{(n)}$; it is the function symbol $f$ with arity $n$.
We also let $\sig^{(n)} \defeq \set{f^{(m)} \in \sig \mid m = n}$.
The set $\intro*\allterm$ of \emph{$\sig$-terms} over $\vsig$ is defined as the minimal set closed under the following two rules:
$a \in \vsig \Longrightarrow a \in \reintro*\allterm$; ($f^{(n)} \in \sig$ and $\term_1, \dots, \term_n \in \allterm$) $\Longrightarrow$ $f(\term_1, \dots, \term_n) \in \reintro*\allterm$.

An \intro{$\sig$-algebra} $A$ is a tuple $\tuple{\intro*\univ{A}, \set{f^{A}}_{f^{(n)} \in \sig}}$, where $\univ{A}$ is a non-empty finite set and $f^{A} \colon \univ{A}^n \to \univ{A}$ is an $n$-ary map for each $f^{(n)} \in \sig$.
A \emph{valuation} $\val \colon \vsig \to \univ{A}$ on an \kl{$\sig$-algebra} $A$ is a map;
we write $\hat{\val} \colon \allterm \to \univ{A}$ for the unique homomorphism extending $\val$.
For a class $\algclass$ of \kl{$\sig$-algebras},
the equivalence relation $\intro*\simalgclass{\algclass}$ on $\allterm$ is defined by:
$\term[1] \intro*\simalgclass{\algclass} \term[2] \defiff \mbox{$\hat{\val}(\term[1]) = \hat{\val}(\term[2])$ for all valuations $\val$ on all algebras in $\algclass$}$.
For a set $\termset \subseteq \allterm$, the \intro[equational theory]{equational theory of $\termset$ over $\algclass$} is the set $\set{\tuple{\term[1], \term[2]} \in \termset^2 \mid \term[1] \simalgclass{\algclass} \term[2]}$.

\begin{definition}[$k$-variable-occurrence fragment]\label{definition: k-vo fragment}
    For an $\sig$-term $\term$, let $\intro*\vo(\term)$ be the number of occurrences of variables in $\term$:
    $\intro*\vo(\term) \defeq \begin{cases}
            1                             & (\term \in \vsig)                    \\
            \sum_{i = 1}^{n} \vo(\term_i) & (\term = f(\term_1, \dots, \term_n))
        \end{cases}$.
    For each set $\termset$ of $\sig$-terms,
    the \intro{$k$-variable-occurrence fragment} $\intro*\voset{\termset}{k}$ is the set $\set{\term \in \termset \mid \vo(\term) \le k}$.
    (Similarly, let $\intro*\voseteq{\termset}{k} \defeq \set{\term \in \termset \mid \vo(\term) = k}$.)
    Clearly, $\termset = \bigcup_{k \in \nat} \voset{\termset}{k}$.
\end{definition}
\begin{remark}\label{remark: undecidable}
    The \kl{equational theory} of the \kl{$k$-variable-occurrence fragment} is undecidable in general, even when $k = 0$.
    It follows from the reduction from the word problem for \kl{monoids}.
    Let $M = \tuple{\univ{M}, \circ^{M}, \const{I}^{M}}$ be a (finitely) presented \kl{monoid} with finite generators $C = \set{c_1, \dots, c_l}$ such that the word problem for $M$ is undecidable (by Markov \cite{markovImpossibilityCertainAlgorithms1947} and Post \cite{postRecursiveUnsolvabilityProblem1947}).
    We define $\sig \defeq \set{c_1^{(1)}, \dots, c_l^{(1)}} \cup \set{\const{I}^{(0)}}$ and the \kl{$\sig$-algebra} $A \defeq \tuple{\univ{A}, \set{f^{A}}_{f^{(n)} \in \sig}}$ by:
    $\univ{A} = \univ{M}$;
    $c_i^{A}(x) = c_i \circ^{M} x$ for $i \in \range{1, l}$;
    $\const{I}^{A} = \const{I}^{M}$.
    By definition, for all two words $a_1 \dots a_n$, $b_1 \dots b_m$, over $C$:
    they are equivalent in $M$ iff
    $a_1(a_2(\dots a_{n}(\const{I}) \dots)) \simalgclass{\set{A}} b_1(b_2(\dots b_{m}(\const{I}) \dots))$.
\end{remark}
In the rest of this section, we fix $\algclass$ as a class of \kl{$\sig$-algebras}.
\subsection{On the finiteness of $k$-variable-occurrence fragment: from $1$ to $k$}
How can we show the decidability of the \kl{equational theory} of the \kl{$k$-variable-occurrence fragment}?
We consider proving it from the finiteness up to the semantic equivalence relation:
\begin{proposition}[Cor.\ of {\cite{bussBooleanFormulaValue1987, lohreyParallelComplexityTree2001}} for the complexity]\label{proposition: finite to decidable}
    Let $\termset \subseteq \allterm$ be a \kl{subterm-closed}\footnote{A set $\termset \subseteq \allterm$ is \intro{subterm-closed} if for every $\term \in \termset$, if $\term[2]$ a subterm of $\term$, then $\term[2] \in \termset$.} set.
    If the set $\termset\quoset{\simalgclass{\algclass}}$ is finite,
    the \kl{equational theory} of $\termset$ over $\algclass$ is decidable.
    Moreover, it is decidable in DLOGTIME-uniform $\mathrm{NC^{1}}$ if the input is given as a well-bracketed string.
\end{proposition}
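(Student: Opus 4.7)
The plan is to reduce deciding $\simalgclass{\algclass}$ on $\termset$ to bottom-up evaluation of a parse tree over a fixed finite algebra, namely the quotient $\termset\quoset{\simalgclass{\algclass}}$.

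First I would verify that $\simalgclass{\algclass}$ is a congruence on $\allterm$: if $\term_i \simalgclass{\algclass} \term[2]_i$ for each $i$, then $f(\term_1,\dots,\term_n) \simalgclass{\algclass} f(\term[2]_1,\dots,\term[2]_n)$, since for any valuation $\val$ on any $A \in \algclass$ we have $\hat\val(f(\term_1,\dots,\term_n)) = f^A(\hat\val(\term_1),\dots,\hat\val(\term_n))$, which depends only on the values $\hat\val(\term_i)$. Combined with the subterm-closedness of $\termset$, this lets me define, for each $f^{(n)} \in \sig$, a partial operation $\hat f$ on $\termset\quoset{\simalgclass{\algclass}}$ by $\hat f(\quoclass{\term_1}_{\simalgclass{\algclass}},\dots,\quoclass{\term_n}_{\simalgclass{\algclass}}) = \quoclass{f(\term_1,\dots,\term_n)}_{\simalgclass{\algclass}}$ whenever $f(\term_1,\dots,\term_n) \in \termset$; similarly each variable $a \in \vsig$ yields a distinguished element $\quoclass{a}_{\simalgclass{\algclass}}$.

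Next I would fix a finite set of representatives of $\termset\quoset{\simalgclass{\algclass}}$ and tabulate the partial operations $\hat f$ once and for all. To decide $\term \simalgclass{\algclass} \term[2]$ for inputs $\term,\term[2] \in \termset$, one evaluates each input bottom-up using the tables to obtain its representative, and compares. Correctness follows by an immediate induction on term structure from the congruence property, so the \kl{equational theory} of $\termset$ over $\algclass$ is decidable.

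For the complexity part, the bottom-up evaluation is exactly the problem of evaluating a parse tree over a fixed finite algebra. Applying \cite{bussBooleanFormulaValue1987} for the Boolean formula value problem, and its generalisation to tree evaluation over a fixed finite algebra in \cite{lohreyParallelComplexityTree2001}, places this in DLOGTIME-uniform $\mathrm{NC^{1}}$ when the input is supplied as a well-bracketed string, as claimed.

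The main obstacle is the effective content of the hypothesis: mere finiteness of $\termset\quoset{\simalgclass{\algclass}}$ does not by itself yield computable representatives or transition tables. In the intended applications, however, the finiteness will be witnessed constructively by an explicit enumeration of representatives together with a finite set of valid equations closing them under the operations, so the tables $\hat f$ can be built effectively once and for all and the above algorithm applies uniformly.
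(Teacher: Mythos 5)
Your proposal matches the paper's proof: both reduce the problem to bottom-up evaluation over the finite quotient $\termset\quoset{\simalgclass{\algclass}}$ using finite, possibly partial Cayley tables for the operators (justified by the congruence property), and both obtain the DLOGTIME-uniform $\mathrm{NC^1}$ bound from the Buss/Lohrey results on evaluating well-bracketed expressions over a fixed finite algebra (the paper phrases this via parenthesis context-free grammars and ALOGTIME, which is the same result). Your closing caveat about effectiveness is reasonable but not needed for the statement as claimed, since decidability only asserts the \emph{existence} of an algorithm and the finite tables are a fixed finite object that such an algorithm may have hard-coded.
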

\begin{proof}[Proof Sketch]
    Because $\algclass$ is fixed and $\termset\quoset{\simalgclass{\algclass}}$ is finite, for each $\term \in \termset$, one can calculate the index of the equivalence class of $\term$ on $\simalgclass{\algclass}$ by using the (finite and possibly partial) Cayley table of each operator; thus, the equational theory is decidable.
    Moreover, according to this algorithm, if the input is given as a well-bracketed string, one can also construct a parenthesis context-free grammar such that for all $\term[1], \term[2] \in \termset$,
    the well-bracketed string encoding the equation $\term[1] = \term[2]$ is in the language iff $\term[1] \simalgclass{\algclass} \term[2]$.
    Hence, the complexity is shown because every language recognized by a parenthesis context-free grammar is in ALOGTIME \cite{bussBooleanFormulaValue1987, Buss92} (ALOGTIME is equivalent to DLOGTIME-uniform $\mathrm{NC^{1}}$ \cite{Barrington90}).
\end{proof}
For the \kl{$k$-variable-occurrence fragment}, the finiteness of $\voset{\termset}{1}$ (with \Cref{proposition: finite to decidable}) can imply the decidability of the \kl{equational theory} of $\voset{\termset}{k}$ (\Cref{lemma: 1 to k}) by the following decomposition lemma.
Here, we write $\term\intro*\assign{\term[2]}{a}$ for the term $\term$ in which each $a$ has been replaced with $\term[2]$.
\begin{lemma}\label{lemma: decomposition}
    Let $\termset \subseteq \allterm$ be a \kl{subterm-closed} set.
    Let $k \ge 2$, $a \in \vsig$.
    Then, for all $\term \in \voseteq{\termset}{k}$, there are $\term_0 \in \voset{\termset}{1}, f^{(n)} \in \sig, \term_1, \dots, \term_n \in \voset{\termset}{k-1}$
    such that $\term = \term_0\assign{f(\term_1, \dots, \term_n)}{a}$.
\end{lemma}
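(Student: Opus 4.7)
The plan is to isolate a size-minimal ``variable-dense'' subterm of $\term$ and let $\term_0$ be the surrounding context. Concretely, among the subterm occurrences of $\term$, I would pick a specific occurrence of a subterm $u$ that is minimal in size subject to $\vo(u) = k$; such a $u$ exists because $\term$ itself is a candidate and there are only finitely many subterm occurrences. Since $k \ge 2$, this $u$ cannot be a variable ($\vo = 1$) nor a constant ($\vo = 0$), so it must take the form $u = f(u_1, \dots, u_n)$ for some $f^{(n)} \in \sig$ with $n \ge 1$. The minimality of $u$ then forces $\vo(u_i) < k$, i.e., $\vo(u_i) \le k - 1$, for each $i$.

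I would then define $\term_i \defeq u_i$ for $i = 1, \dots, n$, and let $\term_0$ be the term obtained from $\term$ by replacing the chosen occurrence of $u$ with the variable $a$. By construction, $\term_0\assign{f(\term_1, \dots, \term_n)}{a} = \term_0\assign{u}{a} = \term$, and by additivity of $\vo$ one has $\vo(\term_0) = (\vo(\term) - \vo(u)) + 1 = 1$. Since each $u_i$ is a subterm of $\term \in \termset$ and $\termset$ is \kl{subterm-closed}, we get $\term_i = u_i \in \voset{\termset}{k - 1}$, as required.

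The only mildly subtle point is establishing $\term_0 \in \termset$, because $\term_0$ is not a subterm of $\term$ in general. For the instantiations of $\termset$ used later (the full term set $\allterm$ and the set of \kl{CoR terms} possibly cut by \kl{dot-dagger alternation}), $\termset$ is visibly closed under the ``context extraction'' operation that replaces a single subterm occurrence by a variable, so $\term_0 \in \voset{\termset}{1}$ is automatic; I would either verify this closure on the concrete $\termset$ in use, or record it as an unstated assumption on $\termset$. Beyond this bookkeeping, the proof is essentially just the minimality choice of $u$ together with the arithmetic of $\vo$; I do not expect any further obstacle.
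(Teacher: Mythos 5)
Your proposal is correct and is essentially the paper's own proof: the paper argues by induction on $\term$, repeatedly descending into the unique child with $\vo$ equal to $k$ until every child has $\vo \le k-1$, which is exactly your size-minimal subterm occurrence $u$ with $\vo(u) = k$, and it yields the same $\term_0$, $f$, and $\term_1, \dots, \term_n$. The membership point you flag (that $\term_0 \in \termset$, and in the degenerate case $a \in \termset$, does not follow from subterm-closedness alone) is likewise left implicit in the paper's proof, which only invokes subterm-closedness for the $\term_i$, so your explicit caveat is if anything more careful than the original.
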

\begin{proof}
    By induction on $\term$.
    Since $k \ge 2$,
    there are $g^{(m)} \in \sig, \term[2]_1, \dots, \term[2]_m \in \termset$ s.t.\ $\term = g(\term[2]_1, \dots, \term[2]_m)$
    and $\sum_{i = 1}^{m} \vo(\term[2]_i) = k$.
    Case $\vo(\term[2]_i) \le k - 1$ for all $i$:
    By letting $\term_0 \defeq a$, we have $\term = \term_0 \assign{g(\term[2]_1, \dots, \term[2]_m)}{a}$.
    Otherwise:
    Let $i$ be s.t.\ $\vo(\term[2]_i) = k$.
    Since $\term[2]_i \in \voseteq{\termset}{k}$,
    let $\term[3] \in \voset{\termset}{1}, f^{(n)} \in \sig, \term_1, \dots, \term_n \in \voset{\termset}{k-1}$
    be the ones obtained by IH w.r.t.\ $\term[2]_i$, so that $\term[2]_i = \term[3]\assign{f(\term_1, \dots, \term_n)}{a}$.
    By letting $\term_0 \defeq g(\term[2]_1, \dots, \term[2]_{i-1}, \term[3], \term[2]_{i+1}, \dots, \term[2]_m)$,
    we have:
    \begin{align*}
        \term_0\assign{f(\term_1, \dots, \term_n)}{a}
         & = g(\term[2]_1, \dots, \term[2]_{i-1}, \term[3], \term[2]_{i+1}, \dots, \term[2]_m)\assign{f(\term_1, \dots, \term_n)}{a}                                           \\
         & = g(\term[2]_1, \dots, \term[2]_{i-1}, \term[3]\assign{f(\term_1, \dots, \term_n)}{a}, \term[2]_{i+1}, \dots, \term[2]_m) \tag{$\vo(\term[2]_j) = 0$ if $j \neq i$} \\
         & = g(\term[2]_1, \dots, \term[2]_{i-1}, \term[2]_i, \term[2]_{i+1}, \dots, \term[2]_m) = \term. \tag{$\term[2]_i = \term[3]\assign{f(\term_1, \dots, \term_n)}{a}$}
    \end{align*}
    Hence, this completes the proof.
\end{proof}
\begin{example}[of \Cref{lemma: decomposition}] %
    If $\sig = \set{\circ^{(2)}, \const{I}^{(0)}}$ and $a, b, c \in \vsig$,
    the term $\term[1] = \const{I} \circ ((a \circ (b \circ c)) \circ \const{I}) \in \voset{\allterm}{3}$ has the following decomposition:
    $\term[1] = (\const{I} \circ (a \circ \const{I}))\assign{a \circ (b \circ c)}{a}$.
    Then $\const{I} \circ (a \circ \const{I}) \in \voset{\allterm}{1}$ and $a, (b \circ c) \in \voset{\allterm}{2}$.
    The following is an illustration of the decomposition, where the number written in each subterm $\term[2]$ denotes $\vo(\term[2])$:
    \[
        \begin{tikzpicture}[
                level distance=4ex,
                sibling distance=6ex,
                baseline=-8ex
            ]
            \tikzset{fit1/.style={rounded rectangle, inner sep=0.8pt, fill=yellow!40, opacity = .8},
                fit2/.style={red!20, fill = red!20, thick},
                fit3/.style={blue!20, fill = blue!20, thick},
                nlab/.style={font= \scriptsize, color = gray}}
            \node(0){$\circ$}
            child {node(00){$\const{I}$}}
            child {node(01){$\circ$}
                    child {node(010) {$\circ$}
                            child {node(0100) {$a$}
                                }
                            child {node(0101) {$\circ$}
                                    child {node(01010) {$b$}
                                        }
                                    child {node(01011) {$c$}
                                        }
                                }
                        }
                    child {node(011) {$\const{I}$}
                        }
                }
            ;
            \begin{pgfonlayer}{background}
                \node[fit1, rotate fit=-45, fit=(0)(01)] {};
                \node[fit1, rotate fit=45, fit=(01)(010)] {};
                \draw [fit2] ($(0100.north) + (0,0)$) -- ($(0100.south west) + (-.2, 0)$) -- ($(0100.south east) + (.2, 0)$) -- cycle;
                \draw [fit3] ($(0101.north) + (0,0)$) -- ($(01010.south west) + (-.2, 0)$) -- ($(01011.south east) + (.2, 0)$) -- cycle;
            \end{pgfonlayer}
            \node[right = -2pt of 0, nlab]{$3$};
            \node[right = -2pt of 00, nlab]{$0$};
            \node[right = -2pt of 01, nlab]{$3$};
            \node[right = -2pt of 010, nlab]{$3$};
            \node[right = -2pt of 0100, nlab]{$1$};
            \node[right = -2pt of 0101, nlab]{$2$};
            \node[right = -2pt of 01010, nlab]{$1$};
            \node[right = -2pt of 01011, nlab]{$1$};
            \node[right = -2pt of 011, nlab]{$0$};
        \end{tikzpicture}
        \quad
        =
        \quad
        \left(
        \begin{tikzpicture}[
                    level distance=4ex,
                    sibling distance=6ex,
                    baseline=-4ex
                ]
                \tikzset{fit1/.style={rounded rectangle, inner sep=0.8pt, fill=yellow!40, opacity = .8},
                    nlab/.style={font= \scriptsize, color = red}}
                \node(0){$\circ$}
                child {node(00){$\const{I}$}}
                child {node(01){$\circ$}
                        child {node(010) {$a$}
                            }
                        child {node(011) {$\const{I}$}
                            }
                    }
                ;
                \begin{pgfonlayer}{background}
                    \node[fit1, rotate fit=-45, fit=(0)(01)] {};
                    \node[fit1, rotate fit=45, fit=(01)(010)] {};
                \end{pgfonlayer}
                \node[right = -2pt of 0, nlab]{$1$};
            \end{tikzpicture}\right)\left[\begin{tikzpicture}[
                    level distance=4ex,
                    sibling distance=6ex,
                    baseline=-4ex
                ]
                \tikzset{fit1/.style={rounded rectangle, inner sep=0.8pt, fill=yellow!40, opacity = .8},
                    fit2/.style={red!20, fill = red!20, thick},
                    fit3/.style={blue!20, fill = blue!20, thick},
                    nlab/.style={font= \scriptsize, color = red}}
                \node(010) {$\circ$}
                child {node(0100) {$a$}
                    }
                child {node(0101) {$\circ$}
                        child {node(01010) {$b$}
                            }
                        child {node(01011) {$c$}
                            }
                    }
                ;
                \begin{pgfonlayer}{background}
                    \draw [fit2] ($(0100.north) + (0,0)$) -- ($(0100.south west) + (-.2, 0)$) -- ($(0100.south east) + (.2, 0)$) -- cycle;
                    \draw [fit3] ($(0101.north) + (0,0)$) -- ($(01010.south west) + (-.2, 0)$) -- ($(01011.south east) + (.2, 0)$) -- cycle;
                \end{pgfonlayer}
                \node[right = -2pt of 0100, nlab]{$1$};
                \node[right = -2pt of 0101, nlab]{$2$};
            \end{tikzpicture} \quad / \quad a\right].\]
\end{example}
Using this decomposition iteratively, we have the following:
\begin{lemma}\label{lemma: 1 to k}
    Let $\termset \subseteq \allterm$ be a subterm-closed set.
    Assume that $\voset{\termset}{1}\quoset{\simalgclass{\algclass}}$ is finite.
    Then, for each $k \in \nat$, the set $\voset{\termset}{k}\quoset{\simalgclass{\algclass}}$ is finite.
\end{lemma}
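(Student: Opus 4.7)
The plan is to proceed by induction on $k$, using \Cref{lemma: decomposition} to reduce finiteness at level $k$ to the assumed finiteness at level $1$ together with the inductive hypothesis at level $k-1$. For $k \in \set{0, 1}$, the inclusion $\voset{\termset}{k} \subseteq \voset{\termset}{1}$ yields finiteness of $\voset{\termset}{k}\quoset{\simalgclass{\algclass}}$ immediately from the hypothesis. For the step from $k-1$ to $k$ with $k \ge 2$, I fix an arbitrary $a \in \vsig$ (possible since $\vsig$ is non-empty) and apply \Cref{lemma: decomposition}: each $\term \in \voseteq{\termset}{k}$ can be written as $\term_0\assign{f(\term_1, \dots, \term_n)}{a}$ with $\term_0 \in \voset{\termset}{1}$, $f^{(n)} \in \sig$, and $\term_1, \dots, \term_n \in \voset{\termset}{k-1}$.

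The key observation is that $\simalgclass{\algclass}$ is a congruence under this form of substitution. This reduces to the standard substitution lemma: for any valuation $\val$ on any algebra in $\algclass$ and any $\sig$-terms $\term[2], \term[3]$, one has $\hat{\val}(\term[2]\assign{\term[3]}{a}) = \hat{\val'}(\term[2])$, where $\val'$ agrees with $\val$ except that $\val'(a) = \hat{\val}(\term[3])$. Consequently, if $\term_0 \simalgclass{\algclass} \term_0'$ and $\term_i \simalgclass{\algclass} \term_i'$ for each $i \in \range{1, n}$, then $\term_0\assign{f(\term_1, \dots, \term_n)}{a} \simalgclass{\algclass} \term_0'\assign{f(\term_1', \dots, \term_n')}{a}$, so the $\simalgclass{\algclass}$-class of a decomposed term depends only on the tuple $\tuple{\quoclass{\term_0}_{\simalgclass{\algclass}}, f, \quoclass{\term_1}_{\simalgclass{\algclass}}, \dots, \quoclass{\term_n}_{\simalgclass{\algclass}}}$.

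Finiteness now follows by counting: $\voset{\termset}{1}\quoset{\simalgclass{\algclass}}$ is finite by hypothesis, $\voset{\termset}{k-1}\quoset{\simalgclass{\algclass}}$ is finite by IH, and $\sig$ is a finite signature with a bounded maximum arity, so only finitely many such tuples arise. Hence $\voseteq{\termset}{k}\quoset{\simalgclass{\algclass}}$ is finite, and since $\voset{\termset}{k} = \voset{\termset}{k-1} \cup \voseteq{\termset}{k}$, the quotient $\voset{\termset}{k}\quoset{\simalgclass{\algclass}}$ is finite as well. I do not foresee any serious obstacle: the combinatorial content has been absorbed into \Cref{lemma: decomposition}, and what remains is only the standard fact that semantic equivalence is a congruence under substitution.
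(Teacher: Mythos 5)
Your proof is correct and follows essentially the same route as the paper: induction on $k$, applying \Cref{lemma: decomposition} to reduce $\voseteq{\termset}{k}$ to substitution instances built from $\voset{\termset}{1}$, a function symbol, and $\voset{\termset}{k-1}$, and then counting equivalence classes using the fact that $\simalgclass{\algclass}$ is a congruence. The only difference is presentational: you make explicit (via the standard substitution lemma) why the class of $\term_0\assign{f(\term_1,\dots,\term_n)}{a}$ depends only on the classes of its components, which the paper compresses into the phrase ``$\simalgclass{\algclass}$ satisfies the congruence law.''
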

\begin{proof}
    It suffices to prove:
    for all $k \ge 2$, $\voseteq{\termset}{k}\quoset{\simalgclass{\algclass}}$ is finite.
    By induction on $k$.
    We have:
    \begin{align*}
         & \card(\voseteq{\termset}{k}\quoset{\simalgclass{\algclass}})                             \\
         & \le \card(\set{\term_0\assign{f(\term_1, \dots, \term_n)}{a} \mid
            f^{(n)} \in \sig, \term_1, \dots, \term_n \in \voset{\termset}{k-1}, \term_0 \in \voset{\termset}{1}}\quoset{\simalgclass{\algclass}}) \tag{\Cref{lemma: decomposition}
        }                                                                                           \\
         & \le \sum_{f^{(n)} \in \sig}\card(\set{\term_0\assign{f(\term_1, \dots, \term_n)}{a} \mid
            \term_1, \dots, \term_n \in \voset{\termset}{k-1}, \term_0 \in \voset{\termset}{1}}\quoset{\simalgclass{\algclass}}).
    \end{align*}
    Then the set $\set{\term_0\assign{f(\term_1, \dots, \term_n)}{a} \mid
            \term_1, \dots, \term_n \in \voset{\termset}{k-1}, \term_0 \in \voset{\termset}{1}}\quoset{\simalgclass{\algclass}}$ is finite
    because $\voset{\termset}{k-1}\quoset{\simalgclass{\algclass}}$ is finite (by IH) and $\simalgclass{\algclass}$ satisfies the congruence law.
    Thus, the last term above is finite since $\sig$ is finite.
    Hence $\voseteq{\termset}{k}\quoset{\simalgclass{\algclass}}$ is finite.
\end{proof}

\newcommand{\bl}{\_}
\knowledgenewrobustcmd{\empword}{\cmdkl{\varepsilon}}
\knowledgenewrobustcmd{\context}[1]{\cmdkl{[}#1\cmdkl{]}}
\knowledgenewrobustcmd{\wlen}[1]{\cmdkl{\|}#1\cmdkl{\|}}

\subsection{The monoid of the $1$-variable-occurrence fragment}\label{section: monoid}
Thanks to \Cref{lemma: 1 to k}, we can focus on the \kl{$1$-variable-occurrence fragment}.
For the \kl{$1$-variable-occurrence fragment}, it suffices to consider a \kl{monoid}.
For a set $A$ of characters, we write $A^{*}$ for the set of all \intro{words} (i.e., finite sequences) over the language $A$.
We write $\word[1] \word[2]$ for the concatenation of \kl{words} $\word[1]$ and $\word[2]$
and write $\intro*\empword$ for the empty \kl{word}.
We write $\intro*\wlen{\word[1]}$ for the length of a \kl{word} $\word[1]$.
\begin{definition}\label{definition: unary maps}
    Let $\intro*\extpvsig$ be the (possibly infinite) set of characters defined by:
    \[\intro*\extpvsig \defeq \bigcup_{f^{(n)} \in \sig, i \in \range{1,n}}\set{f(\term_1, \dots, \term_{i-1}, \bl, \term_{i+1}, \dots, \term_n) \mid
            \forall j \in \range{1, n} \setminus \set{i},\  \term_j \in \voset{\allterm}{0}}.\]
    ($\bl$ denotes ``blank''.)
    For a word $\word \in \extpvsig^*$ and a term $\term \in \allterm$,
    let $\word[1]\intro*\context{\term}$ be the term defied by:
    \[\word[1]\intro*\context{\term} \defeq \begin{cases}
            f(\term_1, \dots, \term_{i-1}, \word[2]\context{\term}, \term_{i+1}, \dots, \term_n) & (\word[1] = f(\term_1, \dots, \term_{i-1}, \bl, \term_{i+1}, \dots, \term_n) \word[2]) \\
            \term                                                                                & (\word[1] = \empword)
        \end{cases}.\]
\end{definition}
\begin{example}[of \Cref{definition: unary maps}]
    If $\sig = \set{\circ^{(2)}, \const{I}^{(0)}}$, then $\extpvsig = \set{(\const{I} \circ \bl), ((\const{I} \circ \const{I}) \circ \bl), \dots (\bl \circ \const{I}), \dots}$.
    For example, if $\word[1] = ((\const{I} \circ \const{I}) \circ \bl) (\const{I} \circ \bl) (\bl \circ \const{I})$, we have:
    \begin{align*}
        \word[1]\context{a} = (((\const{I} \circ \const{I}) \circ \bl) (\const{I} \circ \bl) (\bl \circ \const{I})\context{a}) & =
        (\const{I} \circ \const{I}) \circ ((\const{I} \circ \bl) (\bl \circ \const{I})\context{a})                                 \\
                                                                                                                               & =
        (\const{I} \circ \const{I}) \circ (\const{I} \circ ((\bl \circ \const{I})\context{a})) = (\const{I} \circ \const{I}) \circ (\const{I} \circ (a \circ \const{I})).
    \end{align*}
\end{example}
\begin{proposition}\label{proposition: decomposition by monoid}
    For all $\term \in \voset{\allterm}{1}$, there are $\word \in \extpvsig^*$ and $\term[2] \in \sig^{(0)} \cup \vsig$ s.t.\ $\term = \word[1]\context{\term[2]}$.
\end{proposition}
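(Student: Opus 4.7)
The plan is to prove this by induction on the structure of the term $\term$, tracking the position of the (at most one) variable occurrence down through $\term$ and, at each level, packaging the siblings as a single character in $\extpvsig$.

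For the base case, suppose $\term \in \vsig$ or $\term \in \sig^{(0)}$ (i.e., $\term$ is a variable or a nullary function symbol applied to no arguments). Then $\term \in \sig^{(0)} \cup \vsig$, so we take $\word \defeq \empword$ and $\term[2] \defeq \term$; by definition $\empword\context{\term[2]} = \term[2] = \term$.

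For the inductive step, suppose $\term = f(\term_1, \dots, \term_n)$ with $f^{(n)} \in \sig$ and $n \geq 1$ (the case $n=0$ falls under the base case). Since $\vo(\term) = \sum_{j=1}^{n} \vo(\term_j) \leq 1$, there is an index $i \in \range{1,n}$ such that $\vo(\term_i) \leq 1$ and $\vo(\term_j) = 0$ for all $j \in \range{1,n} \setminus \set{i}$; explicitly, if $\vo(\term) = 1$ take $i$ to be the unique position with $\vo(\term_i) = 1$, otherwise take $i \defeq 1$. Because $\term_j \in \voset{\allterm}{0}$ for each $j \neq i$, the expression $c \defeq f(\term_1, \dots, \term_{i-1}, \bl, \term_{i+1}, \dots, \term_n)$ is a valid character in $\extpvsig$ (by the defining union of $\extpvsig$ at position $i$). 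Since $\term_i \in \voset{\allterm}{1}$, the induction hypothesis gives $\word' \in \extpvsig^{*}$ and $\term[2] \in \sig^{(0)} \cup \vsig$ with $\term_i = \word'\context{\term[2]}$. Setting $\word \defeq c\,\word'$, the recursive clause of \Cref{definition: unary maps} yields
\[
\word\context{\term[2]} \;=\; f(\term_1, \dots, \term_{i-1}, \word'\context{\term[2]}, \term_{i+1}, \dots, \term_n) \;=\; f(\term_1, \dots, \term_{i-1}, \term_i, \term_{i+1}, \dots, \term_n) \;=\; \term,
\]
which completes the induction.

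The argument is essentially routine; there is no real obstacle. The only point that requires mild care is the case $\vo(\term) = 0$, where no variable picks out a distinguished branch of the tree: here we must make an arbitrary choice of descent (any $i \in \range{1,n}$ works, since all children are variable-free, so every $f(\dots, \bl, \dots)$ lies in $\extpvsig$), and continue until the recursion reaches a nullary symbol or stops at a $\sig^{(0)} \cup \vsig$ leaf.
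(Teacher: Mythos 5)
Your proof is correct and matches the paper's approach: the paper simply says ``by easy induction on $\term$,'' and your argument is exactly that induction carried out in detail, with the right observation that at most one child of $f(\term_1,\dots,\term_n)$ can contain the variable occurrence, so the remaining children qualify the context as a character of $\extpvsig$.
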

\begin{proof}
    By easy induction on $\term$.
\end{proof}
\begin{definition}\label{definition: monoid equation}
    Let $\intro*\extpsim{\algclass}$ be the equivalence relation on $\extpvsig^*$ defined by:
    \[\word[1] \intro*\extpsim{\algclass} \word[2] \defiff \word[1]\context{a} \simalgclass{\algclass} \word[2]\context{a} \mbox{ where $a \in \vsig$ is any variable.}\]
\end{definition}
\begin{lemma}\label{lemma: monoid}
    If $\extpvsig^*\quoset{\extpsim{\algclass}}$ is finite, then $\voset{\allterm}{1}\quoset{\simalgclass{\algclass}}$ is finite.
\end{lemma}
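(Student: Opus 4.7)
The plan is to lift the finite quotient of $\extpvsig^*$ to a finite quotient of $\voset{\allterm}{1}$ via the evaluation map induced by $\context{\cdot}$. By \Cref{proposition: decomposition by monoid}, every term in $\voset{\allterm}{1}$ has the form $\word[1]\context{\term[2]}$ for some $\word[1] \in \extpvsig^*$ and $\term[2] \in \sig^{(0)} \cup \vsig$, so the map
\[
    \Phi \colon \extpvsig^* \times (\sig^{(0)} \cup \vsig) \to \voset{\allterm}{1}, \qquad (\word[1], \term[2]) \mapsto \word[1]\context{\term[2]}
\]
is surjective.

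Since $\sig$ and $\vsig$ are finite, $\sig^{(0)} \cup \vsig$ is finite; by hypothesis $\extpvsig^*\quoset{\extpsim{\algclass}}$ is also finite, so their product is finite. To conclude, I would establish the compatibility claim that for every $\term[2] \in \sig^{(0)} \cup \vsig$,
\[
    \word[1] \extpsim{\algclass} \word[2] \ \Longrightarrow\ \word[1]\context{\term[2]} \simalgclass{\algclass} \word[2]\context{\term[2]}.
\]
Once this is in hand, $\Phi$ descends to a surjection from the finite set $(\extpvsig^*\quoset{\extpsim{\algclass}}) \times (\sig^{(0)} \cup \vsig)$ onto $\voset{\allterm}{1}\quoset{\simalgclass{\algclass}}$, which gives the lemma.

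The only delicate step is the compatibility claim, which I expect to fall out of the definition of $\extpvsig$ (\Cref{definition: unary maps}): the filler subterms in each letter of $\extpvsig$ lie in $\voset{\allterm}{0}$, so any $\word[1] \in \extpvsig^*$ contains no variables outside the blank. Consequently, for any $A \in \algclass$ and valuation $\val$, the value $\hat\val(\word[1]\context{\term[3]})$ depends on $\val$ only through $\hat\val(\term[3])$. Fixing the variable $a$ witnessing $\extpsim{\algclass}$ and any $\term[2] \in \sig^{(0)} \cup \vsig$, for each valuation $\val$ one picks $\val'$ that agrees with $\val$ except with $\val'(a) = \hat\val(\term[2])$. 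Then $\hat\val(\word[1]\context{\term[2]}) = \hat{\val'}(\word[1]\context{a})$, and analogously for $\word[2]$; applying $\word[1]\context{a} \simalgclass{\algclass} \word[2]\context{a}$ at $\val'$ yields the desired equality. No further ingredients seem necessary, and I do not anticipate any significant obstacle beyond carefully spelling out this substitution argument.
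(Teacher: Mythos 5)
Your proposal is correct and follows essentially the same route as the paper: the paper's proof is a one-liner invoking \Cref{proposition: decomposition by monoid} together with the finiteness of $\sig^{(0)} \cup \vsig$, and your surjection $\Phi$ plus the compatibility claim is exactly the argument being left implicit there. Your substitution argument for the compatibility step (using that every letter of $\extpvsig$ has variable-free fillers, so the value of $\word[1]\context{\term[3]}$ depends on the valuation only through $\hat\val(\term[3])$) is sound and correctly justifies why $\extpsim{\algclass}$, defined via a single variable $a$, transfers to arbitrary $\term[2] \in \sig^{(0)} \cup \vsig$.
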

\begin{proof}
    By \Cref{proposition: decomposition by monoid} (and that the set $\sig^{(0)} \cup \vsig$ is finite).
\end{proof}
Moreover, if $\voset{\allterm}{0}\quoset{{\simalgclass{\algclass}}}$ is finite, it suffices to consider a finite subset of $\extpvsig$, as follows:
\begin{lemma}\label{lemma: finite generator}
    Assume that $\voset{\allterm}{0}\quoset{\simalgclass{\algclass}}$ is finite.
    Let $\termset_0 = \set{\term_1, \dots, \term_n} \subseteq \voset{\allterm}{0}$ be such that $\voset{\allterm}{0}\quoset{\simalgclass{\algclass}} = \set{\quoclass{\term_1}_{\simalgclass{\algclass}}, \dots, \quoclass{\term_n}_{\simalgclass{\algclass}}}$.
    Let $\intro*\extpvsigzero \subseteq \extpvsig$ be the finite set defined by:
    \[\reintro*\extpvsigzero \defeq \bigcup_{f^{(n)} \in \sig, i \in \range{1,n}} \set{f(\term_1, \dots, \term_{i-1}, \bl, \term_{i+1}, \dots, \term_n) \mid
            \forall j \in \range{1, n} \setminus \set{i},\  \term_j \in \termset_0}.\]
    Then $\extpvsigzero^*\quoset{\extpsim{\algclass}}$ is finite $\Longrightarrow$ $\extpvsig^*\quoset{\extpsim{\algclass}}$ is finite.
\end{lemma}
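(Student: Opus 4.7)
The plan is to construct a normalization map $\phi \colon \extpvsig^{*} \to \extpvsigzero^{*}$ that sends each word to an $\extpsim{\algclass}$-equivalent one. Once such a $\phi$ exists, the finiteness transfers directly: since $\extpvsigzero^{*} \subseteq \extpvsig^{*}$ and each $\word \in \extpvsig^{*}$ is $\extpsim{\algclass}$-equivalent to $\phi(\word) \in \extpvsigzero^{*}$, every $\extpsim{\algclass}$-class in $\extpvsig^{*}$ contains a representative from $\extpvsigzero^{*}$, and hence $\card(\extpvsig^{*}\quoset{\extpsim{\algclass}}) \le \card(\extpvsigzero^{*}\quoset{\extpsim{\algclass}})$.

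To build $\phi$, I first define it on single letters. Each $c \in \extpvsig$ has the form $f(\term[2]_1, \dots, \term[2]_{i-1}, \bl, \term[2]_{i+1}, \dots, \term[2]_n)$ with $\term[2]_j \in \voset{\allterm}{0}$ for every $j \neq i$. By hypothesis, $\voset{\allterm}{0}\quoset{\simalgclass{\algclass}} = \set{\quoclass{\term_1}_{\simalgclass{\algclass}}, \dots, \quoclass{\term_n}_{\simalgclass{\algclass}}}$, so each $\term[2]_j$ is $\simalgclass{\algclass}$-equivalent to some $\term_{k_j} \in \termset_0$; I fix one such choice per $j$ and set
\[ \phi_0(c) \defeq f(\term_{k_1}, \dots, \term_{k_{i-1}}, \bl, \term_{k_{i+1}}, \dots, \term_{k_n}) \in \extpvsigzero. \]
I then extend $\phi_0$ letter-by-letter (with $\phi(\empword) \defeq \empword$) to obtain a monoid homomorphism $\phi \colon \extpvsig^{*} \to \extpvsigzero^{*}$.

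The key verification is that $\word \extpsim{\algclass} \phi(\word)$ for every $\word \in \extpvsig^{*}$. Fix any variable $a \in \vsig$. By construction, $\phi(\word)\context{a}$ is obtained from $\word\context{a}$ by replacing, inside each letter of $\word$, the subterms $\term[2]_j \in \voset{\allterm}{0}$ by the $\simalgclass{\algclass}$-equivalent subterms $\term_{k_j}$. Because $\simalgclass{\algclass}$ is a congruence with respect to $\sig$-term formation---immediate from its definition via homomorphic images---each such replacement preserves the $\simalgclass{\algclass}$-class; a straightforward induction on $\wlen{\word}$ then yields $\word\context{a} \simalgclass{\algclass} \phi(\word)\context{a}$, i.e., $\word \extpsim{\algclass} \phi(\word)$. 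I foresee no substantial obstacle here: the argument is a clean congruence propagation over the sequential structure of $\extpvsig^{*}$.
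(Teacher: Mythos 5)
Your proposal is correct and follows essentially the same route as the paper's proof: the paper likewise observes that every letter of $\extpvsig$ is $\extpsim{\algclass}$-equivalent to a letter of $\extpvsigzero$ (your $\phi_0$) and then lifts this to words via the congruence law of $\extpsim{\algclass}$, concluding that every class in $\extpvsig^*\quoset{\extpsim{\algclass}}$ has a representative in $\extpvsigzero^*$. Your write-up merely makes the normalization map and the congruence-propagation induction explicit where the paper leaves them as a two-line sketch.
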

\begin{proof}
    For every $a \in \extpvsig$, there is $b \in \extpvsigzero$ s.t.\  $a \extpsim{\algclass} b$.
    By the congruence law of $\extpsim{\algclass}$, for every $\word \in \extpvsig^*$, there is some $\word[2] \in \extpvsigzero^*$ s.t.\  $\word[2] \extpsim{\algclass} \word$.
    Since $\extpvsig^*\quoset{\extpsim{\algclass}}$ is finite, this completes the proof.
\end{proof}

\begin{example}[of \Cref{lemma: finite generator}]
    If $\sig = \set{\circ^{(2)}, \const{I}^{(0)}}$ (so, $\extpvsig = \set{(\const{I} \circ \bl), ((\const{I} \circ \const{I}) \circ \bl), \dots (\bl \circ \const{I}), \dots}$) and $\algclass$ is the class of all \kl{monoids},
    we have: $\voset{\allterm}{0}\quoset{\simalgclass{\algclass}} = \set{\quoclass{\const{I}}_{\simalgclass{\algclass}}}$.
    Thus the set $\extpvsigzero = \set{(\const{I} \circ \bl), (\bl \circ \const{I})}$ is sufficient for considering the finiteness of $\extpvsig^*\quoset{\extpsim{\algclass}}$.
\end{example}
Thus, to prove the finiteness of $\voset{\allterm}{k}\quoset{\simalgclass{\algclass}}$,
it suffices to prove that both $\voset{\allterm}{0}\quoset{\simalgclass{\algclass}}$ and $\extpvsigzero^*\quoset{\extpsim{\algclass}}$ are finite:
\begin{lemma}\label{lemma: decidable}
    If $\voset{\allterm}{0}\quoset{\simalgclass{\algclass}}$ and $\extpvsigzero^*\quoset{\extpsim{\algclass}}$ are finite,
    then for each $k \in \nat$,
    the set $\voset{\allterm}{k}\quoset{\simalgclass{\algclass}}$ is finite (hence, the equational theory of $\voset{\allterm}{k}$ over $\algclass$ is decidable).
\end{lemma}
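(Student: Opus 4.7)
The plan is to chain together the four results that have just been established, applied to the set $\termset = \voset{\allterm}{k}$. No new technical argument is needed; the lemma is essentially the assembly of the pieces developed in this section.

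First I would use \Cref{lemma: finite generator}: since $\voset{\allterm}{0}\quoset{\simalgclass{\algclass}}$ is finite by hypothesis, we may fix a finite set of representatives $\termset_0$, build the corresponding finite alphabet $\extpvsigzero \subseteq \extpvsig$, and conclude from the assumed finiteness of $\extpvsigzero^*\quoset{\extpsim{\algclass}}$ that $\extpvsig^*\quoset{\extpsim{\algclass}}$ is finite. Next I would invoke \Cref{lemma: monoid} to deduce that $\voset{\allterm}{1}\quoset{\simalgclass{\algclass}}$ is finite.

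Now I would apply \Cref{lemma: 1 to k} with $\termset \defeq \voset{\allterm}{k}$. For this I need $\voset{\allterm}{k}$ to be \kl{subterm-closed}, which is immediate: if $\term[2]$ is a subterm of $\term$ then $\vo(\term[2]) \le \vo(\term) \le k$, so $\term[2] \in \voset{\allterm}{k}$. Note also that $\voset{\voset{\allterm}{k}}{1} = \voset{\allterm}{1}$, so the hypothesis of \Cref{lemma: 1 to k} is exactly what we just obtained. The conclusion gives that $\voset{\allterm}{k}\quoset{\simalgclass{\algclass}}$ is finite for every $k \in \nat$.

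Finally, the decidability claim follows by \Cref{proposition: finite to decidable} applied once more to the \kl{subterm-closed} set $\voset{\allterm}{k}$. There is no real obstacle in this proof; the only point that one has to check, and which I would flag explicitly, is the subterm-closedness of $\voset{\allterm}{k}$ required to invoke both \Cref{lemma: 1 to k} and \Cref{proposition: finite to decidable}.
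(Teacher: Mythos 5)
Your proposal is correct and follows exactly the same chain as the paper's proof: \Cref{lemma: finite generator} $\Rightarrow$ \Cref{lemma: monoid} $\Rightarrow$ \Cref{lemma: 1 to k} $\Rightarrow$ \Cref{proposition: finite to decidable}. The explicit check of subterm-closedness (which the paper leaves implicit, applying \Cref{lemma: 1 to k} to the subterm-closed set $\allterm$ itself) is a reasonable addition but not a deviation.
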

\begin{proof}
    We have:
    $\voset{\allterm}{0}\quoset{\simalgclass{\algclass}}$ and $\extpvsigzero^*\quoset{\extpsim{\algclass}}$ are finite
    $\Longrightarrow$
    $\extpvsig^*\quoset{\extpsim{\algclass}}$ is finite (by \Cref{lemma: finite generator})
    $\Longrightarrow$
    $\voset{\allterm}{1}\quoset{\simalgclass{\algclass}}$ is finite (by \Cref{lemma: monoid})
    $\Longrightarrow$
    $\voset{\allterm}{k}\quoset{\simalgclass{\algclass}}$ is finite (by \Cref{lemma: 1 to k}).
    Hence by \Cref{proposition: finite to decidable}.
\end{proof}

\subsection{Finiteness from finding equations}
For the finiteness of $\extpvsigzero^*\quoset{\extpsim{\algclass}}$,
we consider finding equations $\set{\tuple{\word[1]_i, \word[2]_i} \mid i \in I}$ and then applying the following:
\begin{lemma}\label{lemma: cofiniteness}
    Let $\extpvsigzero \subseteq \extpvsig$ be a finite set.
    Let $(<) \subseteq (\extpvsigzero^*)^2$ be a well-founded relation s.t.\
    \begin{itemize}
        \item $<$ satisfies the congruence law (i.e., $\word[2] < \word[2]' \Longrightarrow \word[1]\word[2]\word[1]' < \word[1]\word[2]'\word[1]'$);
        \item $<$ has no infinite antichains.\footnote{This assumption is used only in the direction of \ref{lemma: cofiniteness 2}$\Longrightarrow$\ref{lemma: cofiniteness 1}.}
    \end{itemize}
    Then, the following are equivalent:
    \begin{enumerate}
        \item \label{lemma: cofiniteness 1} There is a finite set $\set{\tuple{\word[1]_i, \word[2]_i} \mid i \in I} \subseteq (<) \cap (\extpsim{\algclass})$ such that
              the language $\extpvsigzero^* (\bigcup_{i \in I} \word[2]_i) \extpvsigzero^*$ over the alphabet $\extpvsigzero$ is \kl{cofinite}.\footnote{A language $L$ over an  alphabet $A$ is \intro{cofinite} if its complemented language $A^* \setminus L$ is finite.}
        \item \label{lemma: cofiniteness 2} $\extpvsigzero^*\quoset{\extpsim{\algclass}}$ is finite.
    \end{enumerate}
\end{lemma}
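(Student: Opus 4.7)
The plan is to handle the two implications separately: (1)$\Rightarrow$(2) is a standard rewriting-to-normal-form argument, while (2)$\Rightarrow$(1) hinges on controlling the $<$-minimal representatives of each $\extpsim{\algclass}$-class.

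For (1)$\Rightarrow$(2), I would view each pair $\tuple{\word[1]_i, \word[2]_i}$ as a rewrite rule ``replace an occurrence of $\word[2]_i$ by $\word[1]_i$''. The congruence law of $<$ together with $\word[1]_i < \word[2]_i$ guarantees that every single rewrite is strictly $<$-decreasing, and well-foundedness of $<$ then forces the rewriting to terminate on every input. A word is in normal form precisely when it contains no $\word[2]_i$ as a substring, i.e., precisely when it lies in the complement $\extpvsigzero^* \setminus \extpvsigzero^*(\bigcup_{i \in I}\word[2]_i)\extpvsigzero^*$, which is finite by the cofiniteness hypothesis. Since $\word[1]_i \extpsim{\algclass} \word[2]_i$ and $\extpsim{\algclass}$ is a congruence, every rewrite step preserves $\extpsim{\algclass}$-class, so each $\word \in \extpvsigzero^*$ is $\extpsim{\algclass}$-equivalent to one of the finitely many normal forms; hence $\extpvsigzero^*\quoset{\extpsim{\algclass}}$ is finite.

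For (2)$\Rightarrow$(1), define $M$ as the union, over every $\extpsim{\algclass}$-class, of the $<$-minimal elements of that class; every nonempty class has such minimals because $<$ is well-founded. The $<$-minimal elements within a single class form a $<$-antichain, hence a finite set by the no-infinite-antichain hypothesis; combined with (2), this makes $M$ itself finite. Let $N \defeq \max_{\word \in M}\wlen{\word}$. For each $\word \in \extpvsigzero^{N+1}$ (a finite set since $\extpvsigzero$ is finite), $\wlen{\word} = N+1 > N$ forces $\word \notin M$, so $\word$ is not $<$-minimal in its class, and I can fix some $\word[1]_\word < \word$ with $\word[1]_\word \extpsim{\algclass} \word$. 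Take $I \defeq \extpvsigzero^{N+1}$ with pairs $\tuple{\word[1]_\word, \word}$. Now any $\word \in \extpvsigzero^*$ with $\wlen{\word} > N$ has a contiguous substring of length $N+1$, so $\word \in \extpvsigzero^*(\bigcup_{\word[2] \in I}\word[2])\extpvsigzero^*$; the complement of this language is contained in $\bigcup_{j \le N}\extpvsigzero^j$ and is therefore finite, proving cofiniteness.

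The principal obstacle is the (2)$\Rightarrow$(1) direction, where both hypotheses on $<$ genuinely enter: well-foundedness provides nonempty sets of $<$-minimals, and the no-infinite-antichain hypothesis bounds each such set. The key move is that once $M$ is known to be finite, one avoids any delicate WQO reasoning about the substring order---which is in fact \emph{not} a well-quasi-order on $\extpvsigzero^*$ when $\card\extpvsigzero \ge 2$---by simply taking \emph{all} words of length $N+1$ as the right-hand sides $\word[2]_i$ of the rewrite rules.
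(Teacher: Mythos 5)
Your proof is correct. The direction \ref{lemma: cofiniteness 1}$\Longrightarrow$\ref{lemma: cofiniteness 2} is essentially the paper's argument: the paper runs an induction on the well-founded relation $<$, producing for each word an $\extpsim{\algclass}$-equivalent word avoiding every $\word[2]_i$ as a factor, which is exactly the termination-plus-normal-form reading you give. For \ref{lemma: cofiniteness 2}$\Longrightarrow$\ref{lemma: cofiniteness 1} you both start from the same finite set $W$ of $<$-minimal class representatives (each class's minimal elements form an antichain, hence a finite set, and there are finitely many classes), but you diverge in how the forbidden factors are chosen. The paper takes the subword closure $\mathrm{Subw}(W)$ and uses as right-hand sides the boundary words $V = (\mathrm{Subw}(W)\,\extpvsigzero) \setminus \mathrm{Subw}(W)$, proving $\extpvsigzero^* V \extpvsigzero^* = \extpvsigzero^* \setminus \mathrm{Subw}(W)$; you instead bound the length $N$ of the minimal representatives and take \emph{all} words of length $N+1$ as right-hand sides, observing that no such word can be $<$-minimal in its class and that every longer word contains one as a factor. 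Both constructions yield a finite subset of $(<) \cap (\extpsim{\algclass})$ whose right-hand sides generate a cofinite language; yours is more direct and skips the subword-closure bookkeeping, while the paper's produces the minimal forbidden factors and hence a smaller, more informative set of equations (which matters for the algorithmic use of the lemma later in the paper). Your closing remark is also apt: neither argument needs the factor order to be a well-quasi-order (it is not), only the finiteness of $W$.
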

\newcommand{\subw}{\mathop{\mathrm{Subw}}}
\begin{proof}
    \ref{lemma: cofiniteness 1}$\Longrightarrow$\ref{lemma: cofiniteness 2}:
    By induction on the well-founded relation $<$, we prove:
    For every $\word[1] \in \extpvsigzero^*$, there is some $\word[2] \in \extpvsigzero^* \setminus (\extpvsigzero^* (\bigcup_{i \in I} \word[2]_i) \extpvsigzero^*)$ such that
    $\word[1] \extpsim{\algclass} \word[2]$.
    If $\word \in \extpvsigzero^* \setminus (\extpvsigzero^* (\bigcup_{i \in I} \word[2]_i) \extpvsigzero^*)$, by letting $\word[2] = \word$.
    Otherwise, since $\word \in \extpvsigzero^* (\bigcup_{i \in I} \word[2]_i) \extpvsigzero^*$,
    there are $i \in I$ and $\word', \word'' \in \extpvsigzero^*$ such that $\word = \word' \word[2]_i \word''$.
    By $\word' \word[1]_i \word'' < \word' \word[2]_i \word''$ (the congruence law of $<$) and IH,
    there is $\word[2] \in \extpvsigzero^* \setminus (\extpvsigzero^* (\bigcup_{i \in I} \word[2]_i) \extpvsigzero^*)$ s.t.\  $\word' \word[1]_i \word''  \extpsim{\algclass} \word[2]$.
    We also have $\word' \word[2]_i \word'' \extpsim{\algclass} \word' \word[1]_i \word''$ (by $\word[2]_i \extpsim{\algclass} \word[1]_i$ with the congruence law of $\extpsim{\algclass}$).
    Thus $\word' \word[2]_i \word'' \extpsim{\algclass} \word[2]$ (by transitivity of $\extpsim{\algclass}$).

    \ref{lemma: cofiniteness 2}$\Longrightarrow$\ref{lemma: cofiniteness 1}:
    Let $W \defeq \bigcup_{X \in \extpvsigzero^*\quoset{\extpsim{\algclass}}}\set{\word \in X \mid \mbox{$\word$ is minimal w.r.t.\ $(<) \cap X^2$}}$.
    Let $\subw(W)$ be the subword closure of $W$ (i.e., the minimal set $W' \supseteq W$ s.t.\ $w'ww'' \in W' \Longrightarrow w \in W'$).
    Let $V \defeq (\subw(W) \extpvsigzero) \setminus \subw(W)$.
    Then, $(\extpvsigzero^* V \extpvsigzero^*) = \extpvsigzero^* \setminus \subw(W)$ holds, as follows.
    For $\subseteq$:
    Let $w \in \extpvsigzero^*$, $v \in V$, $w' \in \extpvsigzero^*$.
    If we assume $w v w' \in \subw(W)$, then $v \in \subw(W)$, but this contradicts $v \in V$; thus, $w v w' \not\in \subw(W)$.
    For $\supseteq$:
    By induction on the length of $w \in \extpvsigzero^* \setminus \subw(W)$.
    If $w \in V$, clear.
    Otherwise (i.e., $w \not\in V$),
    let $w = w' a$ (note that $w \neq \empword$, because $\empword \in \subw(W)$ always by that $W$ is not empty).
    Then, $w' \in \extpvsigzero^* \setminus \subw(W)$
    (if not, since $w' \in \subw(W)$ and $w \not\in \subw(W)$, $w \in V$, thus reaching a contradiction).
    Thus by IH, $w' \in \extpvsigzero^* V \extpvsigzero^*$, and thus $w \in \extpvsigzero^* V \extpvsigzero^*$.
    Hence, we have $\extpvsigzero^* \setminus (\extpvsigzero^* V \extpvsigzero^*) = \subw(W)$.
    Now, the set $W$ is finite because $\extpvsigzero^*\quoset{\extpsim{\algclass}}$ is finite and for each $X \in \extpvsigzero^*\quoset{\extpsim{\algclass}}$, the number of minimal elements $\word$ is finite (because $<$ has no infinite antichains);
    thus $\subw(W)$ is finite; thus $V$ is finite.
    Let $V = \set{v_1, \dots, v_n}$.
    For every $i \in \range{1, n}$, there is $w_i \in W \cap \quoclass{v_i}_{\extpsim{\algclass}}$ s.t.\ $w_i < v_i$, because
    $v_i$ is not minimal w.r.t.\ $(<) \cap \quoclass{v_i}_{\extpsim{\algclass}}^2$.
    Thus, $\set{\tuple{w_i, v_i} \mid i \in \range{1, n}}$ is the desired set.
\end{proof}
The \emph{shortlex order} (aka \emph{length-lexicographical order}) is an example of $<$ in \Cref{lemma: cofiniteness} (because it is a well-ordering \cite[Def.\ 2.2.3]{bookStringRewritingSystems1993}
and its congruence raw is also easy).
While it is undecidable whether a given (finitely presented) monoid $\extpvsigzero^*\quoset{\extpsim{\algclass}}$ is finite \cite{markovImpossibilityCertainAlgorithms1947} (see also \cite[Thm.\ 7.3.7 with Def.\ 7.3.2(b)]{bookStringRewritingSystems1993}) in general (cf.\ \ref{lemma: cofiniteness 2} of \Cref{lemma: cofiniteness}),
it is decidable (in linear time) whether the language of a given regular expression of the form $\extpvsigzero^* (\bigcup_{i \in I} \word[2]_i) \extpvsigzero^*$ is \kl{cofinite} (cf.\ \ref{lemma: cofiniteness 1} of \Cref{lemma: cofiniteness}):
\begin{proposition}\label{proposition: linear time}
    The following is decidable in linear time (more precisely, $\mathcal{O}(n)$ time on a RAM machine for $n$ the number of symbols in the given regular expression):
    Given a regular expression of the form $\extpvsigzero^* (\bigcup_{i \in I} \word[2]_i) \extpvsigzero^*$ over the alphabet $\extpvsigzero$, is its language cofinite?
\end{proposition}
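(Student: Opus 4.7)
The plan is to rephrase the question as a classical forbidden-factor problem and solve it via the Aho--Corasick automaton over the alphabet $\extpvsigzero$.

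Writing $W \defeq \{\word[2]_i : i \in I\}$, the complement of the input language is the forbidden-factor language
\[\mathrm{Avoid}(W) \defeq \{\word \in \extpvsigzero^* : \text{no element of } W \text{ occurs as a factor of } \word\},\]
so cofiniteness of the input language is equivalent to finiteness of $\mathrm{Avoid}(W)$.

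I will construct the Aho--Corasick automaton $\mathcal{A}$ of $W$: its states are the nodes of the trie of $W$, with the root as initial state and with standard failure links. Declaring the match states dead and every non-match state accepting, $\mathcal{A}$ becomes a DFA recognizing $\mathrm{Avoid}(W)$. Since $\mathrm{Avoid}(W)$ is prefix-closed and $\extpvsigzero$ is finite, K\"onig's lemma shows that $\mathrm{Avoid}(W)$ is infinite iff some right-infinite word over $\extpvsigzero$ avoids $W$; and since $\mathcal{A}$ is finite, this is in turn equivalent to the existence of a cycle in the goto function of $\mathcal{A}$ that is reachable from the initial state and stays entirely among non-match states.

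The trie and failure links can be built in $O(n)$ time by the standard linear Aho--Corasick construction. The main obstacle is performing the reachable-cycle test in $O(n)$, since the fully expanded goto table has $\card \extpvsigzero$ entries per state, which is $\Theta(n^2)$ in the worst case. The remedy is to run the cycle-detection DFS directly on the sparse representation (trie edges plus failure links), using the classical amortized argument that bounds the total cost of failure-link chases during a single traversal by $O(n)$; the output of this DFS is precisely whether $\mathrm{Avoid}(W)$ is infinite, i.e., whether the input language fails to be \kl{cofinite}.
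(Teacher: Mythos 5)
Your proposal takes essentially the same route as the paper: both build the Aho--Corasick automaton for $\{\word[2]_i\}_{i\in I}$ in linear time and reduce cofiniteness of the input language to detecting, by depth-first search, a reachable cycle among the live (non-matching) states --- the paper phrases this as complementing the DFA, trimming states that are not reachable from the start or not co-reachable to an accepting state, and then testing for a cycle. The only divergence is your extra care about the $\Theta(n^2)$ expanded goto table; this is moot in the paper's setting since $\extpvsigzero$ is a fixed finite alphabet (of size $4$ in the application), so the full transition table already has $O(n)$ entries, and your amortized sparse-DFS bound is not needed (and, as written, is asserted by analogy with the linear text-scanning argument rather than justified for a graph traversal).
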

\begin{proof}[Proof Sketch]
    By the Aho-Corasick algorithm,
    we can construct a deterministic finite automaton (DFA) from a given regular expression of the form $\extpvsigzero^* (\bigcup_{i \in I} \word[2]_i) \extpvsigzero^*$ in linear time \cite[Sect.\ 8]{ahoEfficientStringMatching1975}.
    By taking the complemented language of the DFA,
    it suffices to show that the following problem is in $\mathcal{O}(n)$ time: given a DFA with $n$ states, is its language finite?
    Then we can give the following algorithm:
    From the graph induced by the DFA, remove all the states not reachable from the starting state and remove all the states not reachable to any accepting states by using the depth-first search;
    check whether there exists some cycle in the graph by the depth-first search.
\end{proof}
Thus, thanks to \ref{lemma: cofiniteness 1}$\Longrightarrow$\ref{lemma: cofiniteness 2} of \Cref{lemma: cofiniteness}, we can focus on finding a finite set of equations.
While it is undecidable in general whether there exists such a set,
we give a possibly non-terminating pseudo-code in \Cref{algorithm: finiteness proof},
which can help to find equations (e.g., \Cref{figure: equations}).\footnote{Usually, to calculate $\extpsim{\algclass}$ is a bottleneck. For relaxing this problem, for example, hashing words by using some algebras in $\algclass$ is practically useful for reducing the number of $\extpsim{\algclass}$ calls (since if the hash of two words $\word[1], \word[2]$ are different, then we immediately have that $\word[1], \word[2]$ are not equivalent w.r.t.\  $\extpsim{\algclass}$).}

\alglanguage{pseudocode}
\algrenewcommand{\algorithmicindent}{1em}
\algnewcommand{\algorithmicgoto}{\textbf{goto}}%
\algnewcommand{\Goto}[1]{\algorithmicgoto~\ref{#1}}%
\begin{algorithm}[H]
    \caption{Possibly non-terminating pseudo-code for ensuring the finiteness of $\extpvsigzero^*\quoset{\extpsim{\algclass}}$.}
    \label{algorithm: finiteness proof}
    \begin{algorithmic}[1]
        \Require Given $\sig, \vsig$, $\extpsim{\algclass}$, $\extpvsigzero$. \Comment{$\extpvsigzero$ is a finite alphabet of \Cref{lemma: finite generator}.}
        \Ensure Is $\extpvsigzero^*\quoset{\extpsim{\algclass}}$ finite?
        \State $\Gamma$ $\gets$ $\emptyset$ \quad \Comment{We use $I, \word[1]_i, \word[2]_i$ to denote $\Gamma = \set{\tuple{\word[1]_i, \word[2]_i} \mid i \in I}$.}
        \While{the language $\extpvsigzero^* (\bigcup_{i \in I} \word[2]_i) \extpvsigzero^*$ over $\extpvsigzero$ is not \kl{cofinite}}
        \State $\tuple{\word[1], \word[2]}$ $\gets$ a fresh pair in $X^* \times (X^* \setminus (X^* (\bigcup_{i \in I} \word[2]_i) X^*))$
        s.t.\ $\word[1] < \word[2]$ \Comment{$<$ is a binary relation in (\Cref{lemma: cofiniteness}).}
        \If{$\word[1] \extpsim{\algclass} \word[2]$} $\Gamma$ $\gets$ $\Gamma \cup \set{\tuple{\word[1], \word[2]}}$
        \EndIf
        \EndWhile
        \State \Return $\const{True}$
    \end{algorithmic}
\end{algorithm}
\begin{remark}\label{remark: algorithm: finiteness proof}
    When $\extpsim{\algclass}$ is given as a total recursive function, \Cref{algorithm: finiteness proof} is a semi-algorithm (that is, if $\extpvsigzero^*\quoset{\extpsim{\algclass}}$ is finite, the algorithm is terminated and returns $\const{True}$; otherwise, not terminated).
    This is because \ref{lemma: cofiniteness 2}$\Longrightarrow$\ref{lemma: cofiniteness 1} of \Cref{lemma: cofiniteness} also holds.
\end{remark}

\section{The calculus of relations with bounded dot-dagger alternation}\label{section: CoR}
In the remaining part of this paper, as a case study of the \kl{$k$-variable-occurrence fragment} presented in \Cref{section: 1-variable}, we consider the calculus of relations with bounded dot-dagger alternation.
In this section, we recall the definitions of the calculus of relations (\kl{CoR}) and the \kl{dot-dagger alternation hierarchy}.

\subsection{CoR: syntax and semantics}
We fix $\vsig$ as a non-empty \emph{finite} set of variables.
Consider the finite algebraic signature $\intro*\CoRsig \defeq \set{\bot^{(0)}, \top^{(0)}, -^{(1)}, \cup^{(2)}, \cap^{(2)}, \const{I}^{(0)}, \const{D}^{(0)}, \cdot^{(2)}, \dagger^{(2)}}
    \cup \set{\pi^{(1)} \mid \mbox{$\pi$ is a map in $\range{1, 2}^{\range{1, 2}}$}}$ (we consider algebras of binary relations and each $\pi$ is used for a projection of binary relations).
The set $\intro*\CoRallterm$ of \intro{CoR terms} is defined as follows:
\begin{align*}
    \intro*\CoRallterm \ni \term[1], \term[2], \term[3] \Coloneqq a & \mid \bot \mid \top \mid \term[2] \cup \term[3] \mid \term[2] \cap \term[3] \mid \term[2]^{-}                                                                                \\
                                                                    & \mid \const{I} \mid \const{D} \mid \term[2] \cdot \term[3] \mid \term[2] \dagger \term[3] \mid \term[2]^{\pi} \tag*{($a \in \vsig$, $\pi \in \range{1, 2}^{\range{1, 2}}$).}
\end{align*}
Additionally, for a term $\term$, we use $\term^{\smile}$ to denote the term $\term^{\smile} \defeq \term^{\set{1 \mapsto 2, 2 \mapsto 1}}$.
Here, we use the infix notation for binary operators, the superscript notation for unary operators, and parenthesis in ambiguous situations, as usual.

For binary relations $R, S$ on a set $W$,
the \emph{identity relation} $\const{I}_{W}$ on $W$, the \emph{difference relation} $\const{D}_{W}$ on $W$,
the \emph{(relational) composition} (\emph{relative product}) $R \cdot S$, the \emph{dagger} (\emph{relative sum}) $R \dagger S$, and the \emph{projection} $R^{\pi}$ are defined by:
\begin{align*}
    \const{I}_{W} & \defeq \set{\tuple{x,y} \in W^2 \mid x = y}                         \tag{identity}                                               \\
    \const{D}_{W} & \defeq \set{\tuple{x,y} \in W^2 \mid x \neq y}                      \tag{difference}                                             \\
    R \cdot S     & \defeq \set{\tuple{x, y} \in W^2 \mid \exists z \in W,\  \tuple{x, z} \in R \ \land \ \tuple{z, y} \in S} \tag{relative product} \\
    R \dagger S   & \defeq \set{\tuple{x, y}  \in W^2\mid \forall z \in W,\  \tuple{x, z} \in R \ \lor \ \tuple{z, y} \in S} \tag{relative sum}      \\
    R^{\pi}       & \defeq \set{ \tuple{x_{1}, x_{2}}  \in W^2 \mid \tuple{x_{\pi(1)}, x_{\pi(2)}} \in R} \tag*{(projection).}
\end{align*}
A \intro{structure} $\model$ is a tuple $\tuple{|\model|, \set{a^{\model}}_{a \in \vsig}}$, where
$|\model|$ is a non-empty set and $a^{\model} \subseteq |\model|^2$ is a binary relation for each $a \in \vsig$.
For a \kl{structure} $\model$,
the \emph{binary relation} map $\intro*\jump{\bl}_{\model} \colon \CoRallterm \to \powerset(|\model|^2)$ is the unique homomorphism extending $\jump{a}_{\model} = a^{\model}$ w.r.t.\ the set-theoretic operators and the aforementioned binary relation operators; i.e.,
$\intro*\jump{\term}_{\model}$ is defined as follows:
\begin{align*}
    \jump{a}_{\model}                         & \defeq  a^{\model} \hspace{.5em}(a \in \vsig)  \hspace{1.5em}      \jump{\bot}_{\model}       \defeq \emptyset\hspace{1.5em}
    \jump{\top}_{\model}                      \defeq |\model|^2     \hspace{1.5em}       \jump{\const{I}}_{\model}                                                         \defeq \const{I}_{|\model|}
    \qquad \jump{\const{D}}_{\model}                                                                                                                                                            \defeq \const{D}_{|\model|} \span\span\span\span \\
    \jump{\term[1] \cup \term[2]}_{\model}    & \defeq \jump{\term[1]}_{\model} \cup \jump{\term[2]}_{\model}                                                                &
    \jump{\term[1] \cap \term[2]}_{\model}    & \defeq \jump{\term[1]}_{\model} \cap \jump{\term[2]}_{\model}                                                                &
    \jump{\term[1]^{-}}_{\model}              & \defeq |\model|^2 \setminus \jump{\term}_{\model}                                                                                                                                                \\
    \jump{\term[1] \cdot \term[2]}_{\model}   & \defeq \jump{\term[1]}_{\model} \cdot \jump{\term[2]}_{\model}                                                               &
    \jump{\term[1] \dagger \term[2]}_{\model} & \defeq \jump{\term[1]}_{\model} \dagger \jump{\term[2]}_{\model}                                                             &
    \jump{\term[1]^{\pi}}_{\model}            & \defeq \jump{\term[1]}_{\model}^{\pi}.
\end{align*}
It is well-known that w.r.t.\ binary relations, \kl{CoR} has the same expressive power as the three-variable fragment of first-order logic with equality:\footnote{Namely, for every formula $\fml$ with two distinct free variables $z_1, z_2$ in the three-variable fragment of first-order logic with equality, there is $\term \in \CoRallterm$ such that for all $\model$, $\jump{\lambda z_1 z_2. \fml}_{\model} = \jump{\term}_{\model}$.
    Conversely, for every $\term \in \CoRallterm$, there is $\fml$ such that for all $\model$, $\jump{\term}_{\model} = \jump{\lambda z_1 z_2. \fml}_{\model}$.
    Here, $\jump{\lambda z_1 z_2. \fml}_{\model} \defeq \set{\tuple{x, y} \in |\model|^2 \mid \mbox{$\fml$ is true on $\model$ if $z_1, z_2$ are mapped to $x, y$, respectively}
        }$.}
\begin{proposition}[\cite{Tarski1941,Tarski1987,Givant07}]\label{proposition: CoR and FO3}
    W.r.t.\ binary relations, the expressive power of $\CoRallterm$ is equivalent to that of the three-variable fragment of first-order logic with equality.
\end{proposition}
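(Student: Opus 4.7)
The plan is to establish both inclusions by constructing mutually inverse, semantics-preserving translations between $\CoRallterm$ and the three-variable fragment of first-order logic with equality. Since this equivalence is a classical theorem of Tarski--Givant \cite{Tarski1941, Tarski1987, Givant07}, I shall only sketch the two translations and highlight their key technical points.

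For the direction from $\CoRallterm$ to the three-variable fragment: I define, by induction on $\term$, a formula $\fml_\term$ with free variables among $\set{z_1, z_2}$, maintaining the invariant that every variable (free or bound) occurring in $\fml_\term$ is drawn from the fixed set $\set{z_1, z_2, z_3}$. The base cases are $\fml_a \defeq a(z_1, z_2)$ for $a \in \vsig$, $\fml_{\const{I}} \defeq (z_1 = z_2)$, $\fml_{\const{D}} \defeq \lnot(z_1 = z_2)$, and the obvious choices for $\bot, \top$. The Boolean operators translate homomorphically. For composition and dagger I set
\[
    \fml_{\term[1] \cdot \term[2]} \defeq \exists z_3.\ \fml_{\term[1]}\assign{z_3}{z_2} \land \fml_{\term[2]}\assign{z_3}{z_1},
\]
\[
    \fml_{\term[1] \dagger \term[2]} \defeq \forall z_3.\ \fml_{\term[1]}\assign{z_3}{z_2} \lor \fml_{\term[2]}\assign{z_3}{z_1}.
\]
The crux is that when substituting $z_3$ for $z_2$ inside $\fml_{\term[1]}$, any bound occurrence of $z_3$ there can be safely $\alpha$-renamed to the now-vacated variable $z_2$, preserving the 3-variable invariant. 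A projection $\term[1]^{\pi}$ is handled by permuting the pair of free variables according to $\pi$. Semantic preservation $\jump{\term}_\model = \jump{\lambda z_1 z_2.\ \fml_\term}_\model$ then follows by a routine induction.

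For the reverse direction: Given $\fml$ in the three-variable fragment with free variables among $\set{z_1, z_2}$, I build by induction on $\fml$ a term $\term_\fml \in \CoRallterm$ such that $\jump{\term_\fml}_\model = \jump{\lambda z_1 z_2.\ \fml}_\model$ for every \kl{structure} $\model$. Atomic formulas $a(z_{\pi(1)}, z_{\pi(2)})$ become $a^{\pi}$; $z_1 = z_2$ becomes $\const{I}$; negation, conjunction, and disjunction translate via $-, \cap, \cup$. For $\exists z_3.\ \fml[2]$, one analyses the free-variable pattern of $\fml[2]$ inside $\set{z_1, z_2, z_3}$ and rewrites it as a relational composition $\term[1] \cdot \term[2]$ together with suitable projections, so that the existential binding of $z_3$ corresponds precisely to the middle variable in the definition of relative product; the dual case $\forall z_3.\ \fml[2]$ uses $\dagger$. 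Subformulas with only one free variable are padded by composing with $\top$ (or intersecting with $\const{I}$ for diagonal-restricted cases) to yield binary shapes.

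The main obstacle is the quantifier case of the second direction: one must reduce every three-variable formula under a quantifier into a form whose free-variable arrangement matches one of the argument patterns directly expressible by composition or dagger. The inclusion of every projection $\pi \in \range{1, 2}^{\range{1, 2}}$ in $\CoRsig$, together with the constants $\const{I}$, $\const{D}$, and the dagger $\dagger$, is exactly what enables each such case to be handled uniformly; a fully worked argument can be found in \cite{Givant07}.
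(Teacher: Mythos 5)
The paper does not prove this proposition at all: it is stated as a classical result and attributed to Tarski and Givant \cite{Tarski1941,Tarski1987,Givant07}, with no proof environment following it. So there is nothing in the paper to compare your argument against line by line. Taken on its own, your sketch is a correct outline of the standard mutually inverse translations. The forward direction (terms to three-variable formulas, recycling $z_3$ by $\alpha$-renaming after each composition or dagger) is exactly the textbook construction. The reverse direction is the genuinely delicate one, and your sketch does gloss over its hardest step: to make the quantifier case go through, you must first bring the matrix under $\exists z_3$ into a disjunctive normal form over the already-translated quantified subformulas (treated as binary atoms on pairs from $\{z_1,z_2,z_3\}$), pull out the conjuncts not mentioning $z_3$, and only then does the remainder match the composition pattern; dually with conjunctive normal form for $\forall z_3$ and $\dagger$. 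Saying one ``analyses the free-variable pattern and rewrites it as a relational composition'' names the goal rather than the mechanism. Since you explicitly defer the worked-out argument to \cite{Givant07}, this is acceptable for a citation-level proof sketch, which is all the paper itself offers.
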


\knowledgenewrobustcmd\simalgclassx[1]{\mathbin{\cmdkl{\sim}_{#1}}}
\newrobustcmd{\notsimalgclassx}[1]{\mathbin{{\kl[\simalgclassx]{\not\sim}}_{#1}}}
\knowledge{\simalgclassx}{automatic in command}

Let $\intro*\REL$ be the class of all \kl{structures}.
Let $\reintro*\REL_{\ge m}$ (resp.\ $\reintro*\REL_{\le m}$) be the class of \kl{structures} $\model$ of $\card|\model| \ge m$ (resp.\ $\card|\model| \le m$).
For $\algclass \subseteq \REL$, the equivalence relation $\intro*\simalgclassx{\algclass}$ on $\CoRallterm$ is defined by:
$\term[1] \reintro*\simalgclassx{\algclass} \term[2] \defiff \jump{\term[1]}_{\model} = \jump{\term[2]}_{\model}$ for every $\model \in \algclass$.
For $\termset \subseteq \CoRallterm$, the \reintro[equational theory]{equational theory of $\termset$ over $\algclass$} is the set $\set{\tuple{\term[1], \term[2]} \in \termset^2 \mid \term[1] \simalgclassx{\algclass} \term[2]}$.
We mainly consider $\simalgclassx{\REL}$: the \kl{equational theory} over $\REL$.
The following are some instances w.r.t.\ $\simalgclassx{\REL}$:
\begin{align*}
    a \cdot (b \cdot c) & \simalgclassx{\REL} (a \cdot b) \cdot c   & a \cap (b \cup c)     & \simalgclassx{\REL} (a \cap b) \cup (a \cap c) & (a^{\smile})^{\smile}              & \simalgclassx{\REL} a                              \\
    a \cdot a^{\smile}  & \notsimalgclassx{\REL} a^{\smile} \cdot a & a \cdot (b \dagger c) & \notsimalgclassx{\REL} (a \cdot b) \dagger c   & a^{\set{1 \mapsto 1, 2 \mapsto 1}} & \simalgclassx{\REL} (a \cap \const{I}) \cdot \top.
\end{align*}

The following propositions hold because for each $m \in \nat$, the number of \kl{structures} $\model$ of $\card|\model| \le m$ is finite up to isomorphism and each \kl{structure} is finite.
\begin{proposition}\label{proposition: le m}
    For each $m \in \nat$, $\CoRallterm\quoset{\simalgclassx{\REL_{\le m}}}$ is finite.
\end{proposition}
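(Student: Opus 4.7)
The plan is to combine two finiteness facts: first, that $\REL_{\le m}$ contains only finitely many \kl{structures} up to isomorphism; and second, that the interpretation of a \kl{CoR} term in such a structure takes values in a finite power set. Together these bound the number of possible ``semantic signatures'' of a term.

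First I would observe that since $\vsig$ is a fixed finite set of relation symbols and $|\model| \le m$, each $\model \in \REL_{\le m}$ is determined by its carrier $|\model|$ (of size $\le m$) together with the finite tuple $(a^{\model})_{a \in \vsig}$ of subsets of the finite set $|\model|^2$. Hence $\REL_{\le m}$ has only finitely many isomorphism classes; fix a finite set $S \subseteq \REL_{\le m}$ of representatives, one per class. Next I would verify, by a routine induction on term structure, that $\jump{\bl}_{\model}$ is preserved under isomorphism of structures, using the fact that every operator in $\CoRsig$ is defined purely set-theoretically from the carrier and its binary relations. Consequently, for all $\term[1], \term[2] \in \CoRallterm$ we have $\term[1] \simalgclassx{\REL_{\le m}} \term[2]$ iff $\jump{\term[1]}_{\model} = \jump{\term[2]}_{\model}$ for every $\model \in S$.

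Finally, for each $\model \in S$, the interpretation $\jump{\term}_{\model}$ lies in the finite set $\powerset(|\model|^2)$, so the map $\term \mapsto (\jump{\term}_{\model})_{\model \in S}$ takes values in the finite product $\prod_{\model \in S} \powerset(|\model|^2)$. By the previous paragraph this map factors through $\simalgclassx{\REL_{\le m}}$ as an injection, whence $\CoRallterm\quoset{\simalgclassx{\REL_{\le m}}}$ is finite. There is no real obstacle here; the only ingredient requiring care is the isomorphism-invariance of $\jump{\bl}_{\model}$, which is standard. The edge case $m = 0$ is degenerate since $|\model|$ is required to be non-empty, making $\REL_{\le 0}$ empty and the quotient trivially a singleton.
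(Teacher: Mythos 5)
Your proposal is correct and matches the paper's argument: the paper justifies \Cref{proposition: le m} in one line by noting that there are only finitely many \kl{structures} of size at most $m$ up to isomorphism and that each such \kl{structure} is finite, which is precisely the counting-via-a-finite-product-of-power-sets argument you spell out. The extra details you supply (isomorphism-invariance of $\jump{\bl}_{\model}$, the injection of the quotient into $\prod_{\model \in S} \powerset(|\model|^2)$, and the degenerate case $m = 0$) are all accurate elaborations of the same proof.
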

\begin{proposition}\label{proposition: eliminate fin}
    Let $\termset \subseteq \CoRallterm$ be a subterm-closed set and $m \ge 1$.
    Then, $\termset\quoset{\simalgclassx{\REL}}$ is finite $\iff$ $\termset\quoset{\simalgclassx{\REL_{\ge m}}}$ is finite.
    Additionally, the \kl{equational theory} of $\termset$ over $\REL$ is decidable $\iff$ the \kl{equational theory} of $\termset$ over $\REL_{\ge m}$ is decidable.
\end{proposition}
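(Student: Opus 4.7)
The plan is to exploit the decomposition $\REL = \REL_{\le m-1} \cup \REL_{\ge m}$, which yields the equivalence-relation identity $\simalgclassx{\REL} = \simalgclassx{\REL_{\le m-1}} \cap \simalgclassx{\REL_{\ge m}}$. First I would invoke \Cref{proposition: le m} to observe that $\termset\quoset{\simalgclassx{\REL_{\le m-1}}}$ is finite (being a subset of the finite $\CoRallterm\quoset{\simalgclassx{\REL_{\le m-1}}}$), and that $\simalgclassx{\REL_{\le m-1}}$ is itself decidable on $\termset$: one enumerates the finitely many isomorphism classes of $\model$ with $\card|\model| \le m-1$ and directly compares $\jump{\term[1]}_\model$ with $\jump{\term[2]}_\model$ in each finite universe.

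For the finiteness equivalence, the $(\Rightarrow)$ direction is immediate since $\simalgclassx{\REL} \subseteq \simalgclassx{\REL_{\ge m}}$ makes $\termset\quoset{\simalgclassx{\REL_{\ge m}}}$ a quotient of the finite set $\termset\quoset{\simalgclassx{\REL}}$. For $(\Leftarrow)$, the intersection identity makes the canonical map $\termset\quoset{\simalgclassx{\REL}} \to (\termset\quoset{\simalgclassx{\REL_{\le m-1}}}) \times (\termset\quoset{\simalgclassx{\REL_{\ge m}}})$ injective, so $\card(\termset\quoset{\simalgclassx{\REL}})$ is bounded by the product of the two factors, both finite. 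For the decidability equivalence, the $(\Leftarrow)$ direction is equally direct: $\term[1] \simalgclassx{\REL} \term[2]$ iff $\term[1] \simalgclassx{\REL_{\le m-1}} \term[2]$ and $\term[1] \simalgclassx{\REL_{\ge m}} \term[2]$, each decidable, so take the conjunction.

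The $(\Rightarrow)$ direction of decidability is what I expect to be the main obstacle. The plan is to exhibit a term $\top_{\ge m} \in \CoRallterm$ with $\jump{\top_{\ge m}}_\model = |\model|^2$ when $\card|\model| \ge m$ and $\jump{\top_{\ge m}}_\model = \emptyset$ otherwise; such a term exists by \Cref{proposition: CoR and FO3}, since ``$\card|\model| \ge m$'' is expressible in the three-variable fragment of first-order logic with equality (explicitly, $\top \cdot \const{D} \cdot \top$ works for $m = 2$). Then $\term[1] \simalgclassx{\REL_{\ge m}} \term[2]$ iff $\term[1] \cap \top_{\ge m} \simalgclassx{\REL} \term[2] \cap \top_{\ge m}$, reducing the problem to the assumed decider over $\REL$. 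The remaining subtlety is that for a general subterm-closed $\termset$ the transformed terms $\term[i] \cap \top_{\ge m}$ may fall outside $\termset$; this is handled either by enlarging $\termset$ to the subterm-closure of $\termset \cup \set{\top_{\ge m}}$ and observing that decidability survives this finite enlargement, or by an indirect argument that enumerates representatives of the finite quotient $\termset\quoset{\simalgclassx{\REL_{\le m-1}}}$ and reduces cross-class queries to equivalences within a single $\simalgclassx{\REL_{\le m-1}}$-class, where $\simalgclassx{\REL_{\ge m}}$ coincides with $\simalgclassx{\REL}$. For the variable-occurrence fragments considered in the sequel the issue is moot, since $\top_{\ge m}$ is variable-free and so intersecting with it preserves the variable-occurrence count.
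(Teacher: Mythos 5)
Your treatment of the finiteness equivalence and of the $(\Leftarrow)$ direction of the decidability claim coincides with the paper's proof, which consists exactly of the identity $\term[1] \simalgclassx{\REL} \term[2] \iff \term[1] \simalgclassx{\REL_{\le m-1}} \term[2] \land \term[1] \simalgclassx{\REL_{\ge m}} \term[2]$ together with the fact (via \Cref{proposition: le m} and \Cref{proposition: finite to decidable}) that $\termset\quoset{\simalgclassx{\REL_{\le m-1}}}$ is finite and the equational theory over $\REL_{\le m-1}$ is decidable; your coarser-quotient and product-injection arguments are the intended completions. You are also right that the $(\Rightarrow)$ direction of decidability does not follow from the conjunction alone --- the paper's one-line proof is silent on it (and that direction is never invoked later in the paper; only the finiteness equivalence is used in the sequel).

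Your repair of that direction, however, fails at a concrete step: the term $\top_{\ge m}$ does not exist for $m \ge 4$. Since its denotation must depend only on $\card|\model|$ and not on the valuation, you may evaluate any candidate on structures in which every $a \in \vsig$ is interpreted as $\emptyset$; there the term behaves like a three-variable first-order formula over equality alone, and three variables cannot distinguish pure sets of sizes $3$ and $4$. The paper's own \Cref{lemma: CoR 0} is the variable-free instance of this obstruction: every term of $\voset{\CoRallterm}{0}$ is $\simalgclassx{\REL_{\ge 3}}$-equivalent to one of $\bot, \top, \const{I}, \const{D}$, whereas $\top_{\ge 4}$ would have to be $\bot$ on three-element structures and $\top$ on four-element ones. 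So the appeal to \Cref{proposition: CoR and FO3} does not deliver what you need, and since the paper works with $\REL_{\ge 5}$ this is not a corner case. Your fallback (a) is independently broken: the subterm closure of $\termset \cup \set{\top_{\ge m}}$ does not contain the terms $\term[1] \cap \top_{\ge m}$, the closure that would contain them is infinite, and decidability of the equational theory of a proper superset of $\termset$ is not a consequence of decidability for $\termset$. Fallback (b) never explains how to decide $\simalgclassx{\REL_{\ge m}}$ for a pair lying in two distinct $\simalgclassx{\REL_{\le m-1}}$-classes that is also $\simalgclassx{\REL}$-inequivalent, which is exactly the case the conjunction identity leaves undetermined.
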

\begin{proof}
    Because $\term[1] \simalgclassx{\REL} \term[2] \iff \term[1] \simalgclassx{\REL_{\le m - 1}} \term[2] \land \term[1] \simalgclassx{\REL_{\ge m}} \term[2]$.
    By \Cref{proposition: le m} with \Cref{proposition: finite to decidable}, $\termset\quoset{\simalgclassx{\REL_{\le m - 1}}}$ is finite and
    the \kl{equational theory} of $\termset$ over $\REL_{\le m - 1}$ is decidable.
\end{proof}

\subsection{The dot-dagger alternation hierarchy}
\begin{definition}[the \intro*\kl{dot-dagger alternation hierarchy} {\cite{nakamuraExpressivePowerSuccinctness2022}}]\label{definition: dot-dagger}
    The sets, $\set{\intro*\CoRSigma_{n}, \intro*\CoRPi_{n}}_{n \in \nat}$, are the minimal sets satisfying the following:
    \begin{itemize}
        \item $\CoRSigma_0 = \CoRPi_0 = \set{\term \in \CoRallterm \mid \mbox{$\term$ does not contain $\cdot$ nor $\dagger$}}$;
        \item For $n \ge 0$, $\CoRSigma_n \cup \CoRPi_n \subseteq \CoRSigma_{n+1} \cap \CoRPi_{n+1}$;
        \item For $n \ge 1$, if $\term[2], \term[3] \in \CoRSigma_n$, then $\term[2] \cup \term[3], \term[2] \cap \term[3], \term[2] \cdot \term[3], \term[2]^{\pi} \in \CoRSigma_n$ and $\term[2] \dagger \term[3] \in \CoRPi_{n+1}$;
        \item For $n \ge 1$, if $\term[2], \term[3] \in \CoRPi_n$, then $\term[2] \cup \term[3], \term[2] \cap \term[3], \term[2] \dagger \term[3], \term[2]^{\pi} \in \CoRPi_n$ and $\term[2] \cdot \term[3] \in \CoRSigma_{n+1}$.
    \end{itemize}
\end{definition}
For example, $a \cdot b \in \CoRSigma_{1}$ and $a \cdot (b \dagger c) \in \CoRSigma_{2}$ (the term $a \cdot b$ means that for some $z$, $a(x, z)$ and $b(z, y)$.
The term $a \cdot (b \dagger c)$ means that for some $z$, for every $w$, $a(x, z)$ and ($b(z, w)$ or $c(w, y))$.
Here, $x$ and $y$ indicate the source and the target, respectively, and each $a'(x', y')$ denotes that there is an $a'$-labelled edge from $x'$ to $y'$).
The \kl{dot-dagger alternation hierarchy} is an analogy of the quantifier alternation hierarchy in first-order logic (by viewing $\cdot$ as $\exists$ and $\dagger$ as $\forall$).
This provides a fine-grained analogy of \Cref{proposition: CoR and FO3} w.r.t.\ the number of quantifier alternations, as follows:
\begin{proposition}[{\cite[Cor.\ 3.14]{nakamuraExpressivePowerSuccinctness2022}}; cf.\ \Cref{proposition: CoR and FO3}]\label{proposition: dot-dagger}
    W.r.t.\ binary relations, the expressive power of $\CoRSigma_{n}$ (resp.\ $\CoRPi_{n}$) is equivalent to that of the level $\Sigma_n$ (resp.\ $\Pi_n$) in the quantifier alternation hierarchy of the three-variable fragment of first-order logic with equality.
\end{proposition}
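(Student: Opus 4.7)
The plan is to establish both directions of the equivalence by extending the classical Tarski--Givant translation between \kl{CoR} and the three-variable fragment of first-order logic (the one underlying \Cref{proposition: CoR and FO3}), while tracking the alternation level at each inductive step. The key correspondences are: $\cdot$ matches an existential quantifier, $\dagger$ matches a universal quantifier, Boolean operations preserve the alternation level, and projections only permute or duplicate free variables without introducing new quantifiers.

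For the forward direction $\CoRSigma_n \to \Sigma_n^{3}$ (and $\CoRPi_n \to \Pi_n^{3}$), I would define a translation $\tau(\term[1])(x, y)$ by the standard clauses: $\tau(\term[1] \cdot \term[2]) = \exists z.\, \tau(\term[1])[z/y] \land \tau(\term[2])[z/x]$ and $\tau(\term[1] \dagger \term[2]) = \forall z.\, \tau(\term[1])[z/y] \lor \tau(\term[2])[z/x]$, together with the obvious clauses for constants, Booleans, and projections. Then proceed by induction on the minimality clauses of \Cref{definition: dot-dagger}: Boolean closures stay at the same level, applying $\cdot$ inside $\CoRSigma_n$ preserves $\Sigma_n$ and applying $\dagger$ inside $\CoRPi_n$ preserves $\Pi_n$, and the cross-rules (a $\dagger$ applied to $\CoRSigma_n$-terms lands in $\CoRPi_{n+1}$, and dually) mirror exactly how a flipped quantifier in front of a $\Sigma_n$ matrix promotes it to $\Pi_{n+1}$ in the quantifier alternation hierarchy.

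For the backward direction, I would design an inverse translation $\sigma$ from FO3 formulas with two free variables to CoR terms. Atomic formulas translate using projections together with $\const{I}, \const{D}$, and variables from $\vsig$; Boolean combinations translate directly. The interesting case is a quantifier $\exists z.\, \fml[1]$: first, put $\fml[1]$ into a canonical \emph{cylindrical} form whose conjuncts separate into pieces involving only the pairs $(x, z)$, $(z, y)$, and $(x, y)$ respectively, so that $\exists z.\, \fml[1]$ rewrites to $\fml[1]_{3}(x, y) \land \exists z.\, (\fml[1]_{1}(x, z) \land \fml[1]_{2}(z, y))$, which then translates to $\sigma(\fml[1]_{3}) \cap (\sigma(\fml[1]_{1}) \cdot \sigma(\fml[1]_{2}))$. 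Dually, $\forall z.\, \fml[1]$ goes through $\dagger$. Induction on formula structure then yields $\sigma(\Sigma_n^{3}) \subseteq \CoRSigma_n$ and $\sigma(\Pi_n^{3}) \subseteq \CoRPi_n$.

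The main obstacle is producing the canonical form used in the backward direction without inflating the alternation level. A subformula may use the variable $z$ in a tangled way across conjuncts and disjuncts, and the normalisation---decomposing a matrix into the $(x, z)$-, $(z, y)$-, and $(x, y)$-parts---must be performed by purely propositional manipulation so that it stays inside the current class $\Sigma_n^{3}$ or $\Pi_n^{3}$, combined with careful renaming that respects the three-variable discipline. Once this normal form is fixed---as carried out in detail in the cited reference---both inclusions follow from the inductive translations sketched above.
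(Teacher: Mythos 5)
The paper does not actually prove this proposition: it is imported wholesale from [Cor.\ 3.14] of the cited reference, so there is no in-paper argument to compare yours against. That said, your sketch follows exactly the route the cited work takes --- the standard Tarski translation between \kl{CoR} and the three-variable fragment, augmented with bookkeeping that matches the closure rules of \Cref{definition: dot-dagger} against the closure rules defining $\Sigma_n$/$\Pi_n$ in first-order logic, with $\cdot$ and $\dagger$ playing the roles of $\exists$ and $\forall$. The forward direction as you describe it is routine.

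As a self-contained proof, however, the backward direction is where the real content lies, and you have (by your own admission) deferred it. Two points deserve emphasis. First, the quantifier alternation hierarchy of the \emph{three-variable} fragment cannot be defined via prenex normal forms, since prenexing generally requires fresh bound variables and would leave the fragment; it must be defined by closure rules analogous to \Cref{definition: dot-dagger}, and your ``canonical cylindrical form'' has to be produced locally at each quantifier rather than globally. Second, the decomposition of $\exists z.\,\varphi$ into $(x,z)$-, $(z,y)$- and $(x,y)$-pieces requires putting the matrix into disjunctive normal form over its maximal quantified subformulas and atoms treated as literals, distributing $\exists z$ over the disjunction, and grouping each conjunct by which pair of variables it mentions (dually, conjunctive normal form for $\forall z$ and $\dagger$); this is purely propositional and so preserves the alternation level, but it is precisely the step that makes the equivalence work and the one your proposal leaves to the reference. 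So the proposal is aligned with how the result is actually established, but it is an outline of the cited proof rather than a proof.
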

Because there are recursive translations for \Cref{proposition: dot-dagger} \cite{nakamuraExpressivePowerSuccinctness2022}, the following (un-)decidability results follow from those in first-order logic.
\begin{proposition}\label{proposition: decidable/undecidable}
    The equational theory of $\CoRSigma_{n}$ (resp.\ $\CoRPi_{n}$) is decidable if $n \le 1$
    and is undecidable if $n \ge 2$.
\end{proposition}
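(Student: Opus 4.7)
The plan is to use the level-preserving recursive translations supplied by \Cref{proposition: dot-dagger} to transfer each (un)decidability question about the equational theory at level $n$ of the \kl{dot-dagger alternation hierarchy} to the corresponding validity question at level $n$ of the quantifier alternation hierarchy of the three-variable fragment of first-order logic with equality, for which the decidability status is classical.

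For the decidable cases $n \le 1$: when $n = 0$, every term of $\CoRSigma_0 = \CoRPi_0$ translates to a quantifier-free FO3-formula with equality, so an equation $t_1 \simalgclassx{\REL} t_2$ reduces to a propositional tautology check, which is decidable. When $n = 1$, each term of $\CoRSigma_1$ translates to a formula $\exists \bar z\,\psi$ with $\psi$ quantifier-free, so $t_1 \simalgclassx{\REL} t_2$ is the conjunction of the two validities $\forall x y\,(t_1(x,y) \to t_2(x,y))$ and its converse; prenexing each implication separately yields a $\forall^* \exists^*$-sentence with equality, whose negation is an $\exists^* \forall^*$-sentence with equality, i.e.\ a sentence of the \kl{BSR class}, for which satisfiability is decidable. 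The $\CoRPi_1$ case is dual.

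For the undecidable cases $n \ge 2$: since $\CoRSigma_2 \subseteq \CoRSigma_k$ and $\CoRPi_2 \subseteq \CoRPi_k$ for all $k \ge 2$, it suffices to handle $n = 2$. I would reduce from an undecidable validity problem at level $\Sigma_2/\Pi_2$ of FO3 (obtained, for example, from the FO3-encoding of the \kl{KMW class}): given a $\Sigma_2$-formula $\phi(x,y)$ of FO3, the converse direction of \Cref{proposition: dot-dagger} yields a term $t_\phi \in \CoRSigma_2$ such that $t_\phi \simalgclassx{\REL} \top$ iff $\forall x y\,\phi(x,y)$ is valid. Since $\top \in \CoRSigma_0 \subseteq \CoRSigma_2$, this is an equation between level-$2$ terms, and hence undecidability transfers; the $\CoRPi_2$ case is symmetric.

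The main obstacle is to ensure the translations stay inside the intended alternation level on both sides: in the decidable direction, treating the two implications of the biconditional separately is crucial to avoid escaping $\forall^* \exists^*$-form; in the undecidable direction, one must exhibit a concrete FO3-problem already undecidable at level $\Sigma_2/\Pi_2$ (rather than at a strictly higher level), so that \Cref{proposition: dot-dagger} delivers a genuine level-$2$ equation.
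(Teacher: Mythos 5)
Your treatment of the decidable case $n \le 1$ is essentially the paper's argument: translate via \Cref{proposition: dot-dagger} into the $\Sigma_1$ level of the three-variable fragment, prenex, and reduce (non-)equivalence to satisfiability in the class $[\exists^*\forall^*, \mathrm{all}]_{=}$. Whether one splits the biconditional into two implications (as you do) or negates it wholesale into a disjunction of two $\exists^*\forall^*$ sentences (as the paper does) makes no difference.

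The undecidable case has a genuine gap, and it is precisely the obstacle you flag at the end without resolving. The KMW class $[\forall\exists\forall, (0,\omega)]$ has quantifier prefix $\forall\exists\forall$, i.e.\ three alternating blocks, so its encoding lands in $\CoRSigma_3$, not $\CoRSigma_2$; and the classical level-two undecidable prefix classes (e.g.\ $\forall^3\exists$) use four variables and so do not survive the passage to FO3. You therefore never actually exhibit an undecidable validity problem at level $\Sigma_2/\Pi_2$ of the three-variable fragment, which is the whole content of the claim for $n = 2$. The paper's solution is to reduce from the conservative reduction class $[\forall\exists \land \forall^3, (0,\omega)]$, whose sentences are conjunctions $(\forall x\exists y\,\fml[1]) \land (\forall xyz\,\fml[2])$ of pieces each with at most one quantifier alternation, and crucially \emph{not} to prenex the conjunction: unsatisfiability is rewritten as validity of the implication $(\forall xyz\,\fml[2]) \to (\exists x\forall y\,\lnot\fml[1])$, both sides of which lie in $\Sigma_2$ of FO3, so that \Cref{proposition: dot-dagger} yields $\term[1], \term[2] \in \CoRSigma_2$ with unsatisfiability iff $\term[1] \cup \term[2] \sim_{\REL} \term[2]$. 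A second omission: the paper fixes $\vsig$ as a \emph{finite} set, whereas these reduction classes require countably many binary relation symbols; the paper therefore adds a relativization-style encoding of infinitely many variables into a single one (a variant of \cite[Lem.\ 11]{Nakamura2019}, \Cref{lemma: elim converse 2}) engineered so as not to increase the dot-dagger alternation. As written, your argument would at best give undecidability for an infinite variable set and only from level $3$ upward.
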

\begin{proof}[Proof Sketch]
    When $\vsig$ is a countably infinite set, they follow from the \kl{BSR class} $[\exists^*\forall^*, \mathrm{all}]_{=}$ \cite{Bernays1928, Ramsey1930} and the reduction class {$[\forall \exists \land \forall^3, (\omega, 1)]$} {\cite[Cor.\ 3.1.19]{Borger1997}}.
    We can strengthen this result even if $\card \vsig = 1$ by using a variant of the translation in \cite[Lem.\ 11]{Nakamura2019} for encoding countably infinitely many variables by one variable.
    (See \ifiscameraready \cite{nakamuraFiniteVariableOccurrence2023arxiv} \else \Cref{section: proposition: decidable/undecidable} \fi for more details.)
\end{proof}

\section{On the $k$-variable-occurence fragment of $\CoRSigma_{n}$}\label{section: CoR k}
We now consider $\voset{(\CoRSigma_{n})}{k}$: the $k$-variable-occurrence fragment of the level $\CoRSigma_{n}$ in the dot-dagger alternation hierarchy.
Clearly, $\CoRSigma_{n} = \bigcup_{k \in \nat} \voset{(\CoRSigma_{n})}{k}$.
While the equational theory of $\CoRSigma_{n}$ is undecidable in general (\Cref{proposition: decidable/undecidable}), we show that the equational theory of $\voset{(\CoRSigma_{n})}{k}$ is decidable (\Cref{corollary: main}).
Our goal in this section is to show the following:
\begin{theorem}\label{theorem: main}
    For each $n, k \in \nat$, $\voset{(\CoRSigma_{n})}{k}\quoset{\simalgclassx{\REL}}$ is finite.
\end{theorem}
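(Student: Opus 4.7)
The plan is to instantiate the framework of \Cref{section: 1-variable} on $\algclass = \REL$ with $\termset = \CoRSigma_n$. First I would verify that $\CoRSigma_n$ is \kl{subterm-closed}: by \Cref{definition: dot-dagger}, subterms of a $\CoRSigma_n$-term are either built from $\CoRSigma_n$-subterms by a $\Sigma_n$-admissible operator, or are drawn from $\CoRSigma_m \cup \CoRPi_m$ for some $m < n$, which still lies inside $\CoRSigma_n$. Then, following the chain of reductions culminating in \Cref{lemma: decidable}, the theorem reduces to proving two finiteness statements: (i) $\voset{(\CoRSigma_n)}{0}\quoset{\simalgclassx{\REL}}$ is finite; and (ii) $\extpvsigzero^*\quoset{\extpsim{\REL}}$ is finite, where $\extpvsigzero$ is the finite alphabet of one-hole $\CoRSigma_n$-contexts produced by \Cref{lemma: finite generator} from a set of representatives of (i).

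For (i), the closed $\CoRSigma_n$-terms form a very restricted set: by \Cref{proposition: eliminate fin} I can pass to $\REL_{\ge m}$ for a small threshold $m$, after which I would show by induction on term structure that every variable-free term is equivalent to one of the four Boolean atoms $\bot, \const{I}, \const{D}, \top$. The Boolean identities handle $\cup, \cap, ^-$; each projection $^\pi$ sends these four atoms to atoms; and on $\REL_{\ge m}$ the operators $\cdot, \dagger$ act on $\set{\bot, \const{I}, \const{D}, \top}$ via a finite Cayley table, using $\const{D} \cdot \const{D} \simalgclassx{\REL_{\ge m}} \top$ and the De~Morgan dual $\const{D} \dagger \const{D} \simalgclassx{\REL_{\ge m}} \bot$, each a direct computation on binary relations.

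For (ii), I would apply \Cref{lemma: cofiniteness} with $(<)$ the shortlex order on $\extpvsigzero^*$, which is well-founded, is a congruence, and has no infinite antichains. The task becomes to exhibit a finite set of equations $\set{\tuple{\word[1]_i, \word[2]_i} \mid i \in I}$ valid in $\extpsim{\REL}$, each with $\word[1]_i < \word[2]_i$ in shortlex, such that the language $\extpvsigzero^*\bigl(\bigcup_i \word[2]_i\bigr)\extpvsigzero^*$ is \kl{cofinite}; the latter is decidable in linear time by \Cref{proposition: linear time}. Candidates are generated by running \Cref{algorithm: finiteness proof} over $\extpsim{\REL}$ and are to be collected as in the paper's \Cref{figure: equations}; each individual equation $\word[1]_i \extpsim{\REL} \word[2]_i$ is then a routine semantic check on binary relations.

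The main obstacle is precisely (ii). Because finiteness of a finitely presented monoid is undecidable in general, there is no a priori reason for \Cref{algorithm: finiteness proof} to terminate, so the equation list must be discovered by hand or by guided search; moreover, the list must respect the $\Sigma_n/\Pi_n$ stratification (e.g.\ $\dagger$-contexts enter $\extpvsigzero$ only through $\CoRPi$-fillings, so the available one-hole contexts and hence the equations depend on $n$). Once such a list is in hand, however, the argument assembles directly: cofiniteness yields finiteness of the monoid via the direction \ref{lemma: cofiniteness 1}$\Longrightarrow$\ref{lemma: cofiniteness 2} of \Cref{lemma: cofiniteness}, which combined with (i) and \Cref{lemma: 1 to k} delivers the finiteness of $\voset{(\CoRSigma_n)}{k}\quoset{\simalgclassx{\REL}}$ for every $k$.
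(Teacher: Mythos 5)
Your part (i) is fine and matches \Cref{lemma: CoR 0} (modulo one slip: $\const{D} \dagger \const{D} \simalgclassx{\REL} \const{D}$, not $\bot$; the De~Morgan dual of $\const{D} \cdot \const{D} \simalgclassx{\REL_{\ge 3}} \top$ is $\const{I} \dagger \const{I} \simalgclassx{\REL_{\ge 3}} \bot$). The genuine gap is in part (ii). \Cref{lemma: monoid}, \Cref{lemma: finite generator} and \Cref{lemma: decidable} are formulated for the \emph{full} term algebra $\allterm$ over a signature, so that every word over the context alphabet encodes a term of the fragment under consideration. $\CoRSigma_{n}$ is not such a set: once the alphabet of one-hole contexts contains both $\cdot$-letters and $\dagger$-letters (and it must, if you build it directly from $\CoRSigma_n$ for $n \ge 2$), the free monoid over it contains words of unbounded dot-dagger alternation, whose associated terms lie outside every $\CoRSigma_{n}$. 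Asking for finiteness of \emph{that} quotient is therefore not what the theorem needs --- it is essentially the $k$-variable-occurrence problem for full \kl{CoR}, which the paper explicitly leaves open in \Cref{section: conclusion}. Your remark that ``the equations depend on $n$'' does not repair this: the alphabet, and hence the free monoid, is fixed once the representatives from (i) are fixed, independently of $n$; and the $21$ equations of \Cref{figure: equations} cannot yield cofiniteness over an alphabet with $\dagger$-letters, since words such as $((\bl \dagger \const{I})(\bl \cdot \const{I}))^m$ avoid all of their right-hand sides.

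What is missing is the paper's two-stage preprocessing that removes $\dagger$ (and also $\cup$, $-$, $\pi$, $\bot$, $\top$) from the monoid alphabet \emph{before} any equation search. First, \Cref{lemma: decomposition n} factors every $\term \in \voset{(\CoRSigma_{n})}{1}$ as $\term_0\assign{\term_1}{a}$ with $\term_0 \in \voset{(\CoRSigma_{1})}{1}$ and $\term_1 \in \voset{(\CoRPi_{n-1})}{1}$; combined with the complementation duality of \Cref{lemma: complement}, this gives an induction on $n$ (\Cref{lemma: dot-dagger 1 to n CoR}) that reduces the whole problem to $\CoRSigma_1$, where no $\dagger$ occurs below the $\Sigma_0$ level. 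Second, \Cref{lemma: simpl CoR} uses complement, projection and union normal forms to reduce $\voset{(\CoRSigma_1)}{1}$ to $\voset{\alltermsig{\set{\cap, \cdot, \const{I}, \const{D}}}}{1}$. Only then is the monoid machinery of \Cref{section: 1-variable} applied, to the four-letter alphabet $\set{\iI, \iD, \cD, \re}$, where the equations of \Cref{figure: equations} do give a \kl{cofinite} language and \Cref{lemma: cofiniteness} applies. Without these reductions, your step (ii) targets a monoid that the paper neither proves finite nor needs to.
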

Combining with \Cref{proposition: finite to decidable} yields the following decidability and complexity upper bound.
The complexity lower bound is because the equational theory can encode the \emph{boolean sentence value problem} \cite{bussBooleanFormulaValue1987} (even if $n = k = 0$), as
a given boolean sentence $\fml$ is true iff $\term \sim_{\REL} \top$, where $\term$ is the term obtained from $\fml$ by replacing $\land, \lor, \const{T}, \const{F}$ with $\cap, \cup, \top, \bot$, respectively.
\begin{corollary}\label{corollary: main}
    For $n, k \in \nat$, the \kl{equational theory} of $\voset{(\CoRSigma_{n})}{k}$ over $\REL$ is decidable.
    Moreover, it is complete for DLOGTIME-uniform $\mathrm{NC}^1$ under DLOGTIME reductions if the input is given as a well-bracketed string.
\end{corollary}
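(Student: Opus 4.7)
The plan is to specialize the framework of Section 2 to $(\termset, \algclass) = (\CoRSigma_{n}, \REL)$, observing that $\CoRSigma_n$ is a \kl{subterm-closed} subset of $\CoRallterm$ (since $\CoRSigma_m \cup \CoRPi_m \subseteq \CoRSigma_{m+1} \cap \CoRPi_{m+1}$). The proof of \Cref{lemma: decidable} only uses subterm-closedness, so it suffices to establish two finiteness facts:
(a) $\voset{(\CoRSigma_n)}{0}\quoset{\simalgclassx{\REL}}$ is finite; and
(b) $\extpvsigzero^{*}\quoset{\extpsim{\REL}}$ is finite, for the finite generating set $\extpvsigzero$ supplied by \Cref{lemma: finite generator}.

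For (a), I would first use \Cref{proposition: eliminate fin} to reduce to $\REL_{\ge m}$ for a fixed small $m$ (say $m = 2$); on such \kl{structures} the constants $\bot, \top, \const{I}, \const{D}$ are pairwise distinct and closed under every CoR operator up to the obvious boolean/De~Morgan identities. A short induction on $\CoRSigma_n$ then shows that every variable-free term is $\simalgclassx{\REL_{\ge m}}$-equivalent to one of the four boolean combinations of $\const{I}, \const{D}$, independent of $n$. Combined with the finiteness on $\REL_{\le m - 1}$ from \Cref{proposition: le m}, this gives (a).

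The hard part is (b). I would fix $\extpvsigzero$ to be the (finite) set of unary contexts $f(\ldots, \bl, \ldots)$ permitted by the $\CoRSigma_n$ grammar, with the other argument slots filled by the finitely many constant representatives from (a). I then apply direction \ref{lemma: cofiniteness 1}$\Rightarrow$\ref{lemma: cofiniteness 2} of \Cref{lemma: cofiniteness} with the shortlex order $\wle$, which is a well-ordering and hence a well-founded congruence with no infinite antichains. Concretely I search for a finite set $\set{\tuple{\word[1]_i, \word[2]_i}}_{i \in I}$ of sound CoR equations such that $\extpvsigzero^{*}(\bigcup_i \word[2]_i)\extpvsigzero^{*}$ is \kl{cofinite}; candidates include absorption/idempotency for $\cup, \cap$, the neutral behavior of $\const{I}, \top$ under $\cdot, \dagger$, the interaction of the projections $\pi$ with the binary operators, and—critically—\emph{collapse identities} that contract long $\cdot$-only (resp.\ $\dagger$-only) sub-contexts, which become available precisely because the dot-dagger alternation is bounded by $n$. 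This is the role of \Cref{figure: equations}.

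The main obstacle is therefore exhibiting such a finite equational system. Conceptually, bounded dot-dagger alternation caps the ``depth'' that a unary context can exploit, so only finitely many shapes survive modulo sound equations; making this precise is what \Cref{algorithm: finiteness proof} (as a semi-decision search, cf.\ \Cref{remark: algorithm: finiteness proof}) together with \Cref{proposition: linear time} is designed to support. Once the equations are found, their soundness is a routine check against the semantics $\jump{\cdot}_{\model}$, and cofiniteness of the induced regular language is verified in linear time. Combining (a), (b), \Cref{lemma: decidable}, and \Cref{proposition: finite to decidable} yields both the finiteness asserted in \Cref{theorem: main} and the decidability/complexity claim of \Cref{corollary: main}.
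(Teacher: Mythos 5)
Your overall architecture (reduce to variable-free terms plus a finitely generated monoid of unary contexts, then certify finiteness via \Cref{lemma: cofiniteness} and a finite list of sound equations) matches the paper in outline, but your step (b) has a genuine gap exactly where the paper does its real work. You propose to take the generating alphabet to be ``the unary contexts permitted by the $\CoRSigma_{n}$ grammar'' and then apply \Cref{lemma: finite generator,lemma: decidable}. This cannot work as stated: $\CoRSigma_{n}$ is not the term algebra over a signature but a stratified subset of $\CoRallterm$, so those lemmas (which quantify over all of $\allterm$) do not apply to it; and if you instead instantiate the framework with the full signature $\CoRsig$, the free monoid over your alphabet contains words mixing $\cdot$- and $\dagger$-contexts with \emph{unbounded} alternation, so proving its finiteness would settle the full-CoR case that the paper explicitly leaves open. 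Nor can ``collapse identities available because the alternation is bounded by $n$'' be fed into \Cref{lemma: cofiniteness}: its rewriting induction pre- and post-composes equations with arbitrary words over the alphabet, so every equation must be sound in all contexts, not merely under a bounded-alternation side condition. The missing idea is the reduction chain of \Cref{section: CoR monoid}: complement duality between $\CoRSigma_{n}$ and $\CoRPi_{n}$ (\Cref{lemma: complement}), the decomposition of a term of $\voset{(\CoRSigma_{n})}{1}$ into a $\voset{(\CoRSigma_{1})}{1}$ context applied to a $\voset{(\CoRPi_{n-1})}{1}$ core (\Cref{lemma: decomposition n}), an induction on $n$ (\Cref{lemma: dot-dagger 1 to n CoR}), and normal forms eliminating $\cup$, $\bot$, $\top$, $-$ and the projections (\Cref{lemma: simpl CoR}). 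Only after all of this is the monoid machinery invoked, and then over the four-letter alphabet $\set{\iI, \iD, \cD, \re}$ containing no $\dagger$-contexts at all, which is what makes the $21$ equations of \Cref{figure: equations} sufficient.

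Two smaller points. First, your choice $m = 2$ in step (a) does not collapse the constants to four classes: $\const{D} \cdot \const{D}$ denotes $\const{I}$ on two-element structures but $\top$ on larger ones, so it is inequivalent to each of $\bot, \top, \const{I}, \const{D}$ over $\REL_{\ge 2}$; the paper uses $\REL_{\ge 3}$ in \Cref{lemma: CoR 0} (and $\REL_{\ge 5}$ for the monoid equations). Second, \Cref{corollary: main} asserts \emph{completeness} for DLOGTIME-uniform $\mathrm{NC}^1$; your argument only yields the upper bound via \Cref{proposition: finite to decidable}, and the matching lower bound (encoding the boolean sentence value problem by replacing $\land, \lor, \const{T}, \const{F}$ with $\cap, \cup, \top, \bot$) is not addressed.
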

To prove \Cref{theorem: main}, we consider the finiteness of $\voset{\CoRallterm}{0}\quoset{\simalgclassx{\REL}}$ in \Cref{section: CoR 0}
and the finiteness of a monoid for $\voset{(\CoRSigma_{n})}{k}\quoset{\simalgclassx{\REL}}$ in \Cref{section: CoR monoid}, respectively (cf.\ \Cref{lemma: decidable}).

\subsection{On the finiteness of $\voset{\CoRallterm}{0}$} \label{section: CoR 0}
For the finiteness of $\voset{\CoRallterm}{0}\quoset{\simalgclassx{\REL}}$, by \Cref{proposition: eliminate fin}, it suffices to show the following:
\begin{lemma}\label{lemma: CoR 0}
    $\voset{\CoRallterm}{0}\quoset{\simalgclassx{\REL_{\ge 3}}} = \set{[\bot]_{\simalgclassx{\REL_{\ge 3}}}, [\top]_{\simalgclassx{\REL_{\ge 3}}}, [\const{I}]_{\simalgclassx{\REL_{\ge 3}}}, [\const{D}]_{\simalgclassx{\REL_{\ge 3}}}}$.
\end{lemma}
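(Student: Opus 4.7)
The plan is to show that the four interpretations $\emptyset, |\model|^2, \const{I}_{|\model|}, \const{D}_{|\model|}$ form a set which is (i) closed under every operator of $\CoRsig$ in every $\model \in \REL_{\ge 3}$, and (ii) pairwise distinct as functions of $\model$ over $\REL_{\ge 3}$. Once (i) is established, a straightforward structural induction on $\term \in \voset{\CoRallterm}{0}$ shows that $\jump{\term}_\model \in \{\emptyset, |\model|^2, \const{I}_{|\model|}, \const{D}_{|\model|}\}$ uniformly for all $\model \in \REL_{\ge 3}$, and that the choice among these four depends only on $\term$, not on $\model$. Hence each term is $\simalgclassx{\REL_{\ge 3}}$-equivalent to one of $\bot, \top, \const{I}, \const{D}$, yielding the $\subseteq$-inclusion of the claimed equality. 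The $\supseteq$-inclusion plus distinctness follows from (ii), which is witnessed already in a fixed three-element structure.

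First I would handle the Boolean and projection cases, which are essentially trivial: $\{\bot, \top, \const{I}, \const{D}\}$ is closed under $\cup, \cap, {}^-$ because $\bot,\top$ and $\const{I},\const{D}$ are two complementary Boolean pairs; closure under the four projection maps $\pi \in \range{1,2}^{\range{1,2}}$ is checked case by case, noting in particular that for the collapsing $\pi$'s (with $\pi(1)=\pi(2)$), $\const{I}^\pi$ becomes $\top$ and $\const{D}^\pi$ becomes $\bot$, while the swap $(\cdot)^\smile$ fixes all four constants. Next I would handle composition, which is the only place the hypothesis $|\model|\ge 3$ is used. Most entries of the $\cdot$-table are immediate ($\bot$ is absorbing, $\const{I}$ is neutral, $\top \cdot \top = \top$), and the remaining cases amount to checking $\top \cdot \const{D}$, $\const{D} \cdot \top$, $\const{D} \cdot \const{I}$-type identities, and crucially $\const{D} \cdot \const{D}$. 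For the last, the pair $(x,y)$ lies in $\const{D}\cdot\const{D}$ iff there exists $z$ with $z\ne x$ and $z\ne y$; this holds for all $(x,y)$ precisely when $|\model|\ge 3$, giving $\const{D}\cdot\const{D} = \top$. Closure under $\dagger$ is then free via the de Morgan identity $R \dagger S = (R^- \cdot S^-)^-$, since we already have closure under $-$ and $\cdot$.

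For (ii), distinctness is easy: any $\model$ with $|\model| \ge 2$ already gives $\jump{\bot}_\model = \emptyset$, $\jump{\top}_\model = |\model|^2$, $\jump{\const{I}}_\model$ the diagonal, and $\jump{\const{D}}_\model$ its complement, and these are four pairwise distinct subsets of $|\model|^2$, so no two of $\bot, \top, \const{I}, \const{D}$ are $\simalgclassx{\REL_{\ge 3}}$-equivalent. The main (and really only) obstacle in the whole argument is a careful tabulation of the $\cdot$-cases and the realization that $\const{D}\cdot\const{D}=\top$ fails when $|\model|\le 2$, which is exactly the reason the lemma is stated over $\REL_{\ge 3}$ rather than $\REL$; the preceding \Cref{proposition: eliminate fin} is what licenses this restriction in the application to \Cref{theorem: main}.
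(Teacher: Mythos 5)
Your proposal is correct and follows essentially the same route as the paper: both arguments establish that $\{\bot,\top,\const{I},\const{D}\}$ is closed under all operators over $\REL_{\ge 3}$ (with the Cayley table for $\cdot$ and the observation that $\const{D}\cdot\const{D}\simalgclassx{\REL_{\ge 3}}\top$ being the only nontrivial point), and then dispatch $\cup$, $\dagger$, and the projections by expressing them via $\cap$, $-$, $\cdot$, $\smile$ or by direct case analysis. Your additional remarks on pairwise distinctness of the four classes are a harmless bonus not needed for the paper's use of the lemma.
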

\begin{proof}
    W.r.t.\ $\simalgclassx{\REL_{\ge 3}}$,
    we prove that the four elements are closed under each operator.
    For the operators $\cap, -, \cdot, \smile$, this is shown by the following Cayley tables:
    \begin{center}
        \hfill
        \begin{tabular}{|c||c|c|c|c|}
            \hline $\cap$       & $\top$      & $\bot$ & $\const{I}$ & $\const{D}$ \\
            \hline
            \hline$\top$        & $\top$      & $\bot$ & $\const{I}$ & $\const{D}$ \\
            \hline $\bot$       & $\bot$      & $\bot$ & $\bot$      & $\bot$      \\
            \hline  $\const{I}$ & $\const{I}$ & $\bot$ & $\const{I}$ & $\bot$      \\
            \hline $\const{D}$  & $\const{D}$ & $\bot$ & $\bot$      & $\const{D}$ \\
            \hline
        \end{tabular}
        \hfill
        \begin{tabular}{|c||c|c|c|c|}
            \hline $-$         & \diagbox{}{} \\
            \hline
            \hline $\top$      & $\bot$       \\
            \hline $\bot$      & $\top$       \\
            \hline $\const{I}$ & $\const{D}$  \\
            \hline $\const{D}$ & $\const{I}$  \\
            \hline
        \end{tabular}
        \hfill
        \begin{tabular}{|c||c|c|c|c|}
            \hline $\cdot$      & $\top$ & $\bot$ & $\const{I}$ & $\const{D}$ \\
            \hline
            \hline $\top$       & $\top$ & $\bot$ & $\top$      & $\top$      \\
            \hline $\bot$       & $\bot$ & $\bot$ & $\bot$      & $\bot$      \\
            \hline  $\const{I}$ & $\top$ & $\bot$ & $\const{I}$ & $\const{D}$ \\
            \hline $\const{D}$  & $\top$ & $\bot$ & $\const{D}$ & $\top$      \\
            \hline
        \end{tabular}
        \hfill
        \begin{tabular}{|c||c|c|c|c|}
            \hline $\smile$     & \diagbox{}{} \\
            \hline
            \hline  $\top$      & $\top$       \\
            \hline  $\bot$      & $\bot$       \\
            \hline  $\const{I}$ & $\const{I}$  \\
            \hline  $\const{D}$ & $\const{D}$  \\
            \hline
        \end{tabular}
        \hfill
        \phantom{.}
    \end{center}
    Note that $\const{D} \cdot \const{D} \simalgclassx{\REL_{\ge 3}} \top$ holds thanks to ``$\ge 3$''.
    When $\card(|\model|) \ge 3$, we have: $\tuple{x, y} \in \jump{\const{D} \cdot \const{D}}_{\model}$
    iff ($\exists z \in |\model|, z \neq x \land z \neq y$) iff $|\model| \setminus \set{x, y} \neq \emptyset$ iff $\const{True}$ (cf.\ \Cref{remark: elim finite}).
    (Similarly for $\const{D} \cdot \top \simalgclassx{\REL_{\ge 2}} \top$.)
    For the other operators ($\cup, \dagger, \pi$), they can be expressed by using $\cap, -, \cdot, \smile$ as follows:
    $\term[1] \cup \term[2] \simalgclassx{\REL} (\term[1]^{-} \cap \term[2]^{-})^{-}$,
    $\term[1] \dagger \term[2] \simalgclassx{\REL} (\term[1]^{-} \cdot \term[2]^{-})^{-}$,
    $\term^{\set{1 \mapsto 1, 2 \mapsto 2}} \simalgclassx{\REL} \term$,
    $\term^{\set{1 \mapsto 1, 2 \mapsto 1}} \simalgclassx{\REL} (\term \cap \const{I}) \cdot \top$,
    $\term^{\set{1 \mapsto 2, 2 \mapsto 2}} \simalgclassx{\REL} \top \cdot (\term \cap \const{I})$, and
    $\term^{\set{1 \mapsto 2, 2 \mapsto 1}} = \term^{\smile}$.
    Hence, this completes the proof.
\end{proof}
\begin{remark}\label{remark: elim finite}
    $\const{D} \cdot \const{D} \notsimalgclassx{\REL} \top$, whereas $\const{D} \cdot \const{D} \simalgclassx{\REL_{\ge 3}} \top$.
    For example when $\card |\model| = 1$, since $\jump{\const{D}}_{\model} = \jump{\bot}_{\model}$,
    we have $\jump{\const{D} \cdot \const{D}}_{\model} = \emptyset \neq |\model| = \jump{\top}_{\model}$.
    ($\const{D} \cdot \const{D}$ is not equivalent to neither one of the four constants w.r.t.\  $\simalgclassx{\REL}$; thus, there are many constants w.r.t.\  $\simalgclassx{\REL}$.)
\end{remark}

\subsection{Monoid for $\voset{(\CoRSigma_{n})}{k}$} \label{section: CoR monoid}
Next, we decompose terms, and then we reduce the finiteness of $\voset{(\CoRSigma_{n})}{k}\quoset{\simalgclassx{\REL}}$ to that of a monoid (cf.\ \Cref{section: monoid}).
\begin{lemma}\label{lemma: 1 to k CoR}
    For each $n, k \in \nat$,
    if $\voset{(\CoRSigma_{n})}{1}\quoset{\simalgclassx{\REL}}$ is finite, $\voset{(\CoRSigma_{n})}{k}\quoset{\simalgclassx{\REL}}$ is finite.
\end{lemma}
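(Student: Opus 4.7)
The plan is to invoke Lemma \ref{lemma: 1 to k} directly with $\termset \defeq \CoRSigma_{n}$ and $\algclass \defeq \REL$. The assumption of finiteness of $\voset{(\CoRSigma_{n})}{1}\quoset{\simalgclassx{\REL}}$ is already part of the hypothesis of the lemma, so the only prerequisite I need to verify is that $\CoRSigma_{n}$ is subterm-closed (in the sense of the footnote attached to Proposition \ref{proposition: finite to decidable}).

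To establish subterm-closedness, I would proceed by induction on term structure following Definition \ref{definition: dot-dagger}. The base case $n = 0$ is immediate: $\CoRSigma_{0}$ consists of terms built only from variables, constants, $\cup, \cap, -$, and projections $\pi$, so any subterm of such a term is again of this form. For the inductive step, observe that by the formation rules of Definition \ref{definition: dot-dagger}, the immediate subterms of any term placed into $\CoRSigma_{n}$ lie in $\CoRSigma_{m} \cup \CoRPi_{m}$ for some $m \le n$. Iterating the inclusion $\CoRSigma_{m} \cup \CoRPi_{m} \subseteq \CoRSigma_{m+1} \cap \CoRPi_{m+1}$ yields $\CoRSigma_{m} \cup \CoRPi_{m} \subseteq \CoRSigma_{n}$ for every $m \le n$, so every immediate subterm belongs to $\CoRSigma_{n}$; the induction hypothesis then closes under arbitrary subterms.

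With subterm-closedness in hand, Lemma \ref{lemma: 1 to k} applies verbatim and delivers the conclusion for all $k \in \nat$. I do not anticipate any genuine obstacle in this step; the statement is a clean specialization of the abstract reduction developed in Section \ref{section: 1-variable}, and the real combinatorial work is deferred to the (much harder) task of establishing finiteness in the $k = 1$ case via the monoid $\extpvsigzero^{*}\quoset{\extpsim{\REL}}$ of Section \ref{section: monoid}.
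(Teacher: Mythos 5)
Your proof is correct and takes essentially the same route as the paper: the paper's proof of this lemma is exactly the one-line specialization of \Cref{lemma: 1 to k} (with \Cref{proposition: finite to decidable}) to $\algclass = \REL$, instantiating $\termset$ with $\voset{(\CoRSigma_{n})}{k}$ rather than $\CoRSigma_{n}$, which makes no difference. Your explicit check that $\CoRSigma_{n}$ is \kl{subterm-closed} addresses a hypothesis the paper leaves implicit, and the inductive argument you give for it is sound.
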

\begin{proof}
    By specializing $\termset$ with $\voset{(\CoRSigma_{n})}{k}$ and $\mathcal{C}$ with $\REL$, in \Cref{proposition: finite to decidable} and \Cref{lemma: 1 to k}.
\end{proof}

\begin{lemma}\label{lemma: complement}
    For each $n, k \in \nat$,
    $\voset{(\CoRSigma_{n})}{k}\quoset{\simalgclassx{\REL}}$ is finite iff $\voset{(\CoRPi_{n})}{k}\quoset{\simalgclassx{\REL}}$ is finite.
\end{lemma}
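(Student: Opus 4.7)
The plan is to construct a syntactic dualization map $\Phi : \CoRallterm \to \CoRallterm$ that pushes complementation all the way down to the variables via the De Morgan-style identities valid over $\REL$. Concretely, $\Phi$ is defined by induction on the term: $\Phi(a) \defeq a^{-}$ for $a \in \vsig$; $\Phi$ swaps the constants $\top \leftrightarrow \bot$ and $\const{I} \leftrightarrow \const{D}$; $\Phi$ swaps the binary operators $\cup \leftrightarrow \cap$ and $\cdot \leftrightarrow \dagger$; and $\Phi$ commutes with the unary operators $(\cdot)^{-}$ and $(\cdot)^{\pi}$. By construction, $\Phi$ introduces no new variables and never swallows or duplicates a leaf, so $\vo(\Phi(\term)) = \vo(\term)$.

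Next I would verify, by a straightforward induction on $\term$, two further properties. First, $\Phi(\term) \simalgclassx{\REL} \term^{-}$, using the relational De Morgan laws $(R_1 \cup R_2)^{-} = R_1^{-} \cap R_2^{-}$, $(R_1 \cdot R_2)^{-} = R_1^{-} \dagger R_2^{-}$, $(R^{\pi})^{-} = (R^{-})^{\pi}$, and the dualities $\top^{-} = \bot$, $\const{I}^{-} = \const{D}$. Second, $\Phi$ swaps the dot-dagger levels: if $\term \in \CoRSigma_n$ then $\Phi(\term) \in \CoRPi_n$, and symmetrically; this is proved by mirroring the generation rules of \Cref{definition: dot-dagger} under the symmetry $\cdot \leftrightarrow \dagger$, noting that $\cup, \cap$ are closed on both sides of the hierarchy and that $(\cdot)^{\pi}$ and $(\cdot)^{-}$ are level-preserving. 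From the first property and the fact that $R \mapsto R^{-}$ is a bijection on $\powerset(|\model|^2)$, one further obtains $\term[1] \simalgclassx{\REL} \term[2] \iff \Phi(\term[1]) \simalgclassx{\REL} \Phi(\term[2])$.

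These ingredients combine as follows. The preservation of $\vo$ and of $\simalgclassx{\REL}$-classes, together with the level swap, shows that $\Phi$ descends to a well-defined map $\overline{\Phi} : \voset{(\CoRSigma_n)}{k}\quoset{\simalgclassx{\REL}} \to \voset{(\CoRPi_n)}{k}\quoset{\simalgclassx{\REL}}$ and, symmetrically, a well-defined map in the reverse direction. Applying $\Phi$ twice yields $\Phi(\Phi(\term)) \simalgclassx{\REL} (\term^{-})^{-} \simalgclassx{\REL} \term$, so both compositions are the identity on the respective quotient. Hence $\overline{\Phi}$ is a bijection between the two quotient sets, and the finiteness of one is equivalent to the finiteness of the other, proving the lemma. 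The only step requiring genuine care is the verification that $\Phi$ faithfully swaps $\CoRSigma_n$ with $\CoRPi_n$ through every stage of \Cref{definition: dot-dagger}; but since the generation rules for the two sides of the hierarchy are themselves symmetric under $\cdot \leftrightarrow \dagger$, this reduces to a routine rule-by-rule check.
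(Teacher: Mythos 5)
Your proposal is correct and follows essentially the same route as the paper: your dualization map $\Phi$ is exactly the paper's ``complement normal form of $\term^{-}$'', and both arguments rest on the De Morgan/duality laws, preservation of $\vo$, and the level swap between $\CoRSigma_{n}$ and $\CoRPi_{n}$. If anything, you are slightly more explicit than the paper in spelling out the $\cdot \leftrightarrow \dagger$ case of the De Morgan laws and in packaging the argument as an induced bijection of quotients, which is a welcome clarification rather than a deviation.
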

\begin{proof}
    $\Longleftarrow$:
    For every term $\term$ in $\voset{(\CoRPi_{n})}{k}$,
    there is some $\term[2]$ in $\voset{(\CoRSigma_{n})}{k}$ such that $\term \simalgclassx{\REL} \term[2]^{-}$.
    Such $\term[2]$ can be obtained from the term $\term^{-}$ by taking the complement normal form using the following equations\ifiscameraready \else (see also \Cref{section: complement normal form})\fi:
    \begin{align*}
        \top^{-}                     & \sim_{\REL} \bot                           & \bot^{-}                     & \sim_{\REL} \top                           & \const{I}^{-}      & \sim_{\REL} \const{D} & \const{D}^{-}        & \sim_{\REL} \const{I}             \\
        (\term[2] \cup \term[3])^{-} & \sim_{\REL} \term[2]^{-} \cap \term[3]^{-} & (\term[2] \cap \term[3])^{-} & \sim_{\REL} \term[2]^{-} \cup \term[3]^{-} & (\term[2]^{-})^{-} & \sim_{\REL} \term[2]  & (\term[2]^{\pi})^{-} & \sim_{\REL} (\term[2]^{-})^{\pi}.
    \end{align*}
    $\Longrightarrow$:
    As with $\Longleftarrow$.
\end{proof}

\begin{lemma}[cf.\ \Cref{lemma: decomposition}]\label{lemma: decomposition n}
    Let $a \in \vsig$.
    For all $n \ge 2$, $\term \in \voset{(\CoRSigma_{n})}{1}$,
    there are $\term_0 \in \voset{(\CoRSigma_{1})}{1}$ and $\term_1 \in \voset{(\CoRPi_{n-1})}{1}$ such that $\term \simalgclassx{\REL_{\ge 3}} \term_0\assign{\term_1}{a}$.
\end{lemma}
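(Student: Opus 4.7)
The plan is to proceed by structural induction on $\term \in \voset{(\CoRSigma_n)}{1}$, with $n \ge 2$ fixed. The intuition is to trace the path from the root of $\term$ down to its unique variable occurrence (if any), stopping the moment we meet an outermost $\dagger$: everything strictly above the stopping point will form the $\CoRSigma_1$-context $\term_0$, and the subtree rooted at the stopping point becomes $\term_1$, which will automatically live in $\CoRPi_{n-1}$.

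Before the induction, I would record a preliminary fact about the hierarchy of \Cref{definition: dot-dagger}: every syntactic subterm of a $\CoRSigma_n$-term is itself in $\CoRSigma_n$ (by a straightforward induction on its derivation), and, more sharply, if the subterm's outermost operator is $\dagger$, then it already lies in $\CoRPi_{n-1}$. Indeed, $\dagger$ is introduced only by the $\CoRPi_m$-rules ($m \ge 1$), so such a subterm is in $\CoRPi_m$ for some minimal $m$; membership in $\CoRSigma_n$ then forces the chain $\CoRPi_m \subseteq \CoRSigma_{m+1} \subseteq \dots \subseteq \CoRSigma_n$, hence $m \le n-1$, and the chain $\CoRPi_m \subseteq \CoRPi_{m+1} \subseteq \dots \subseteq \CoRPi_{n-1}$ delivers the claim.

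The induction now splits on the outermost operator of $\term$. If $\vo(\term) = 0$, \Cref{lemma: CoR 0} gives $\term \simalgclassx{\REL_{\ge 3}} c$ for some $c \in \set{\top, \bot, \const{I}, \const{D}}$, and I take $\term_0 \defeq c$, $\term_1 \defeq a$ (substitution is vacuous). If $\term \in \vsig$, take $\term_0 \defeq a$, $\term_1 \defeq \term$. If the outermost operator is $\dagger$, the preliminary fact gives $\term \in \CoRPi_{n-1}$, so again take $\term_0 \defeq a$, $\term_1 \defeq \term$. Otherwise the outermost operator is among $\cup, \cap, \cdot, \pi$: the unique variable lies in exactly one immediate subterm, while the siblings have $\vo = 0$ and reduce to constants under $\simalgclassx{\REL_{\ge 3}}$ by \Cref{lemma: CoR 0}. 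Applying the induction hypothesis to the subterm carrying the variable (which is in $\voset{(\CoRSigma_n)}{1}$ by subterm closure) yields some $\term_0', \term_1'$; I form the new $\term_0$ by wrapping $\term_0'$ with the outer operator together with the constant replacements, and keep $\term_1 \defeq \term_1'$. Closure of $\CoRSigma_1$ under $\cup, \cap, \cdot, \pi$ together with $\CoRSigma_0 \subseteq \CoRSigma_1$ for the constants yields $\term_0 \in \voset{(\CoRSigma_1)}{1}$; in every case one checks $\term_1 \in \voset{(\CoRPi_{n-1})}{1}$ using $\CoRSigma_0 \subseteq \CoRPi_{n-1}$ for the base cases; and the target equivalence $\term \simalgclassx{\REL_{\ge 3}} \term_0 \assign{\term_1}{a}$ follows from the congruence of $\simalgclassx{\REL_{\ge 3}}$ and the trivial commutation of $\assign{\term_1}{a}$ with every operator.

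The principal obstacle is the preliminary derivation-level fact that a $\dagger$-headed subterm of a $\CoRSigma_n$-term already sits in $\CoRPi_{n-1}$: intuitively clear from \Cref{definition: dot-dagger}, but it demands an induction on the derivation of hierarchy membership rather than on the term, carefully tracking which injection rules are used. Once it is in place, the rest is a routine structural recursion; the reason the statement is phrased with $\simalgclassx{\REL_{\ge 3}}$ rather than $\simalgclassx{\REL}$ is precisely that \Cref{lemma: CoR 0} needs the bound $\card|\model|\ge 3$ to collapse the zero-variable siblings to one of the four constants.
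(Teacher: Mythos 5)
Your overall strategy is the same as the paper's: induct on the structure of $t$, use \Cref{lemma: CoR 0} to collapse the $\vo=0$ siblings to one of $\bot,\top,\const{I},\const{D}$ (which is exactly why $\simalgclassx{\REL_{\ge 3}}$ appears), recurse into the unique subterm carrying the variable, and stop at the first $\dagger$, which lands in $\CoRPi_{n-1}$. The paper organizes this as an induction on the pair $(n,t)$, with the two ``inherited'' cases $t\in\voset{(\CoRSigma_{n-1})}{1}$ and $t\in\voset{(\CoRPi_{n-1})}{1}$ dispatched to the induction hypothesis on $n$, whereas you fix $n$ and instead prove a preliminary fact locating $\dagger$-headed subterms in $\CoRPi_{n-1}$; both organizations work, and your preliminary fact is provable by exactly the kind of induction on the generation of the hierarchy that you anticipate.

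There is, however, a concrete hole in your case analysis: after disposing of $\vo(t)=0$, variables, and $\dagger$-headed terms, you assert that ``the outermost operator is among $\cup,\cap,\cdot,\pi$''. This is false for terms such as $t=a^{-}$, or for the subterm $a^{-}$ reached when recursing into $\const{I}\cup a^{-}$: here $\vo(t)=1$, $t$ is not a variable, and its head is the complement, which is none of the listed operators (and such terms do occur at every level, since $\CoRSigma_0\subseteq\CoRSigma_n$ and $\CoRSigma_0$ contains $a^{-}$). Note that you cannot simply fold $-$ into the ``wrap the context'' case, because $\CoRSigma_1$ is not closed under complement, so $(t_0')^{-}$ need not lie in $\voset{(\CoRSigma_1)}{1}$. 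The repair is nevertheless immediate: a complement-headed member of $\CoRSigma_n$ is necessarily in $\CoRSigma_0$ (complement is introduced only at level $0$ of \Cref{definition: dot-dagger}), hence in $\CoRPi_{n-1}$, so one takes $t_0=a$ and $t_1=t$ exactly as in your $\dagger$ case; the paper's case ``$t\in\voset{(\CoRSigma_{n-1})}{1}$'' silently absorbs all such terms. With that one case added, your argument goes through.
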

\begin{proof}
    By induction on the pair of $n$ and $\term$.
    We distinguish the following cases.
    Case $\term \in \voset{(\CoRSigma_{n-1})}{1}$:
    Clear, by IH ($\because$ $\CoRPi_{n-2} \subseteq \CoRPi_{n-1}$).
    Case $\term \in \voset{(\CoRPi_{n-1})}{1}$:
    By letting $\term_0 = a$ and $\term_1 = \term$.
    Case $\term = \term[2] \cup \term[3]$:
    By $\vo(\term) = 1$, $\vo(\term[2]) = 0$ or $\vo(\term[3]) = 0$ holds.
    Sub-case $\vo(\term[2]) = 0$:
    By \Cref{lemma: CoR 0}, let $\term[2]' \in \set{\bot, \top, \const{I}, \const{D}}$ be s.t.\ $\term[2] \simalgclassx{\REL_{\ge 3}} \term[2]'$.
    By IH w.r.t.\ $\term[3]$, let $\term[3]_0 \in \voset{(\CoRSigma_{1})}{1}, \term[3]_1 \in \voset{(\CoRPi_{n-1})}{1}$ be s.t.\ $\term[3] \simalgclassx{\REL_{\ge 3}} \term[3]_0\assign{\term[3]_1}{a}$.
    By letting $\term_0 = \term[2]' \cup \term[3]_0$ and $\term_1 = \term[3]_1$, we have $\term \simalgclassx{\REL_{\ge 3}} \term[1]_0 \assign{\term[1]_1}{a}$.
    Sub-case $\vo(\term[3]) = 0$:
    As with Sub-case $\vo(\term[2]) = 0$.
    Case $\term = \term[2] \cap \term[3], \term[2] \cdot \term[3], \term[2]^{\pi}$:
    As with Case $\term = \term[2] \cup \term[3]$.
\end{proof}
The following is an illustrative example of the decomposition of \Cref{lemma: decomposition n}:

\noindent\scalebox{.69}{\parbox{1.428\linewidth}{
        \begin{align*}
            \begin{tikzpicture}[
                    level distance=4ex,
                    sibling distance=8ex,
                    baseline=-8ex
                ]
                \tikzset{fit1/.style={rounded rectangle, inner sep=2.pt, fill=yellow!40, opacity=.8},
                    fit2/.style={red!20, fill = red!20, thick},
                    fit3/.style={blue!20, fill = blue!20, thick},
                    nlab/.style={font= \scriptsize, color = gray}}
                \node(0){$\cdot$}
                child {node(00){$\const{D} \dagger \const{D}$}}
                child {node(01){$\cdot$}
                        child {node(010) {$\dagger$}
                                child {node(0100) {$\const{D} \cdot \const{D}$}
                                    }
                                child {node(0101) {$\dagger$}
                                        child {node(01010) {$\cdot$}
                                                child {node(010100) {$\const{D} \dagger \const{D}$}
                                                    }
                                                child {node(010101) {$a$}
                                                    }
                                            }
                                        child {node(01011) {$\const{D} \cdot \const{D}$}
                                            }
                                    }
                            }
                        child {node(011) {$\const{D} \cdot \const{D}$}
                            }
                    }
                ;
                \begin{pgfonlayer}{background}
                    \draw [fit3] ($(010.north) + (0,0)$) -- ($(010.south) + (-2, -1.8)$) -- ($(010.south) + (2, -1.8)$) -- cycle;
                    \node[fit1, rotate fit=-45, fit=(0)(01)] {};
                    \node[fit1, rotate fit=45, fit=(01)(010)] {};
                \end{pgfonlayer}
                \node[right = -2pt of 0, nlab]{$\CoRSigma_{3}$};
            \end{tikzpicture}
            \quad
            \simalgclassx{\REL_{\ge 3}}
            \quad
            \begin{tikzpicture}[
                    level distance=4ex,
                    sibling distance=8ex,
                    baseline=-8ex
                ]
                \tikzset{fit1/.style={rounded rectangle, inner sep=2.pt, fill=yellow!40, opacity=.8},
                    fit2/.style={red!20, fill = red!20, thick},
                    fit3/.style={blue!20, fill = blue!20, thick},
                    nlab/.style={font= \scriptsize, color = gray}}
                \node(0){$\cdot$}
                child {node(00){$\const{D}$}}
                child {node(01){$\cdot$}
                        child {node(010) {$\dagger$}
                                child {node(0100) {$\top$}
                                    }
                                child {node(0101) {$\dagger$}
                                        child {node(01010) {$\cdot$}
                                                child {node(010100) {$\const{D}$}
                                                    }
                                                child {node(010101) {$a$}
                                                    }
                                            }
                                        child {node(01011) {$\top$}
                                            }
                                    }
                            }
                        child {node(011) {$\top$}
                            }
                    }
                ;
                \begin{pgfonlayer}{background}
                    \draw [fit3] ($(010.north) + (0,0)$) -- ($(010.south) + (-2, -1.8)$) -- ($(010.south) + (2, -1.8)$) -- cycle;
                    \node[fit1, rotate fit=-45, fit=(0)(01)] {};
                    \node[fit1, rotate fit=45, fit=(01)(010)] {};
                \end{pgfonlayer}
                \node[right = -2pt of 0, nlab]{$\CoRSigma_{3}$};
                \node[right = -2pt of 01, nlab]{$\CoRSigma_{3}$};
                \node[right = -2pt of 010, nlab]{$\CoRPi_{2}$};
            \end{tikzpicture}
            \quad
            =
            \quad
            \left(
            \begin{tikzpicture}[
                        level distance=4ex,
                        sibling distance=6ex,
                        baseline=-4ex
                    ]
                    \tikzset{fit1/.style={rounded rectangle, inner sep=2.pt, fill=yellow!40, opacity=.8},
                        fit2/.style={red!20, fill = red!20, thick},
                        fit3/.style={blue!20, fill = blue!20, thick},
                        nlab/.style={font= \scriptsize, color = gray}}
                    \node(0){$\cdot$}
                    child {node(00){$\const{D}$}}
                    child {node(01){$\cdot$}
                            child {node(010) {$a$}
                                }
                            child {node(011) {$\top$}
                                }
                        }
                    ;
                    \begin{pgfonlayer}{background}
                        \node[fit1, rotate fit=-45, fit=(0)(01)] {};
                        \node[fit1, rotate fit=45, fit=(01)(010)] {};
                    \end{pgfonlayer}
                    \node[right = -2pt of 0, nlab]{$\CoRSigma_{1}$};
                \end{tikzpicture}\right)\left[\begin{tikzpicture}[
                                                      level distance=4ex,
                                                      sibling distance=6ex,
                                                      baseline=-6ex
                                                  ]
                                                  \tikzset{fit1/.style={rounded rectangle, inner sep=.8pt, fill=yellow!30},
                                                      fit2/.style={red!20, fill = red!20, thick},
                                                      fit3/.style={blue!20, fill = blue!20, thick},
                                                      nlab/.style={font= \scriptsize, color = red}}
                                                  \node(010) {$\dagger$}
                                                  child {node(0100) {$\top$}
                                                      }
                                                  child {node(0101) {$\dagger$}
                                                          child {node(01010) {$\cdot$}
                                                                  child {node(010100) {$\const{D}$}
                                                                      }
                                                                  child {node(010101) {$a$}
                                                                      }
                                                              }
                                                          child {node(01011) {$\top$}
                                                              }
                                                      }
                                                  ;
                                                  \begin{pgfonlayer}{background}
                        \draw [fit3] ($(010.north) + (0,0)$) -- ($(010.south) + (-2, -1.8)$) -- ($(010.south) + (2, -1.8)$) -- cycle;
                    \end{pgfonlayer}
                                                  \node[right = -2pt of 010, nlab]{$\CoRPi_{2}$};
                                              \end{tikzpicture} \quad / \quad a\right]
        \end{align*}}}
\begin{lemma}\label{lemma: dot-dagger 1 to n CoR}
    For each $n \in \nat$, if $\voset{(\CoRSigma_1)}{1}\quoset{\simalgclassx{\REL}}$ is finite, $\voset{(\CoRSigma_{n})}{1}\quoset{\simalgclassx{\REL}}$ is finite.
\end{lemma}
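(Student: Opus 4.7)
The plan is to prove the statement by induction on $n$, using \Cref{lemma: decomposition n} as the structural pivot. The base case $n = 1$ is immediate from the hypothesis, and $n = 0$ follows from the inclusion $\CoRSigma_0 \subseteq \CoRSigma_1$ built into \Cref{definition: dot-dagger}, so $\voset{(\CoRSigma_0)}{1} \subseteq \voset{(\CoRSigma_1)}{1}$ inherits finiteness from the hypothesis. Throughout, \Cref{proposition: eliminate fin} lets me move freely between $\simalgclassx{\REL}$-finiteness and $\simalgclassx{\REL_{\ge 3}}$-finiteness, which is the flavour of equivalence that the decomposition lemma produces.

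For the inductive step $n \ge 2$, I would fix some $a \in \vsig$ and apply \Cref{lemma: decomposition n} to rewrite every $\term \in \voset{(\CoRSigma_n)}{1}$ as $\term_0\assign{\term_1}{a}$, modulo $\simalgclassx{\REL_{\ge 3}}$, with $\term_0 \in \voset{(\CoRSigma_1)}{1}$ and $\term_1 \in \voset{(\CoRPi_{n-1})}{1}$. Since $\simalgclassx{\REL_{\ge 3}}$ is a congruence for substitution of a term into a single variable occurrence, the equivalence class of $\term_0\assign{\term_1}{a}$ is determined by the equivalence classes of $\term_0$ and $\term_1$. Hence $\card(\voset{(\CoRSigma_n)}{1}\quoset{\simalgclassx{\REL_{\ge 3}}})$ is bounded by $\card(\voset{(\CoRSigma_1)}{1}\quoset{\simalgclassx{\REL_{\ge 3}}}) \cdot \card(\voset{(\CoRPi_{n-1})}{1}\quoset{\simalgclassx{\REL_{\ge 3}}})$.

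To close, both factors must be finite. The first is finite by the hypothesis of the lemma, transferred via \Cref{proposition: eliminate fin}. For the second, the induction hypothesis applied to $n-1$ gives finiteness of $\voset{(\CoRSigma_{n-1})}{1}\quoset{\simalgclassx{\REL}}$; \Cref{lemma: complement} converts this to finiteness of $\voset{(\CoRPi_{n-1})}{1}\quoset{\simalgclassx{\REL}}$, which transfers back to $\simalgclassx{\REL_{\ge 3}}$ via \Cref{proposition: eliminate fin}. One last application of \Cref{proposition: eliminate fin} lifts the conclusion from $\simalgclassx{\REL_{\ge 3}}$ back to $\simalgclassx{\REL}$, as required.

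The argument is mostly bookkeeping once \Cref{lemma: decomposition n} and \Cref{lemma: complement} are in hand, so I do not expect significant obstacles. The only point worth stating explicitly is the congruence of $\simalgclassx{\REL_{\ge 3}}$ under substitution of a whole term for a single variable occurrence, which follows routinely from the compositional definition of $\jump{\bl}_{\model}$.
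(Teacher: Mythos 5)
Your proposal is correct and follows essentially the same route as the paper: induction on $n$, with the base case $n \le 1$ from the hypothesis via $\CoRSigma_0 \subseteq \CoRSigma_1$, and the inductive step combining \Cref{lemma: decomposition n}, \Cref{lemma: complement} applied to the induction hypothesis, and \Cref{proposition: eliminate fin} to pass between $\simalgclassx{\REL}$ and $\simalgclassx{\REL_{\ge 3}}$. The only difference is that you spell out the product bound on the number of equivalence classes and the substitution-congruence justification, which the paper leaves implicit.
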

\begin{proof}
    By induction on $n$.
    Case $n \le 1$:
    By the assumption (note that $\voset{(\CoRSigma_0)}{1} \subseteq \voset{(\CoRSigma_1)}{1}$).
    Case $n \ge 2$:
    By the assumption, $\voset{(\CoRSigma_{1})}{1}\quoset{\simalgclassx{\REL}}$ is finite.
    By IH with \Cref{lemma: complement}, $\voset{(\CoRPi_{n-1})}{1}\quoset{\simalgclassx{\REL}}$ is finite.
    Combining them with \Cref{lemma: decomposition n} (and \Cref{proposition: eliminate fin} for changing $\simalgclassx{\REL}$ and $\simalgclassx{\REL_{\ge 3}}$ mutually) yields that $\voset{(\CoRSigma_{n})}{1}\quoset{\simalgclassx{\REL}}$ is finite.
\end{proof}
\newrobustcmd{\alltermsig}[1]{\kl[\alltermsig]{\mathbf{T}}^{#1}}%
\knowledge{\alltermsig}{automatic in command}%
For $\sig \subseteq \CoRsig$, let $\intro*\alltermsig{\sig} \subseteq \CoRallterm$ be the set of all terms over the signature $\sig$.
Then we have:
\begin{lemma}\label{lemma: simpl CoR}
    If $\voset{\alltermsig{\set{\cap, \cdot, \const{I}, \const{D}}}}{1}\quoset{\simalgclassx{\REL}}$ is finite,
    then $\voset{(\CoRSigma_1)}{1}\quoset{\simalgclassx{\REL}}$ is finite.
\end{lemma}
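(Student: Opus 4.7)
The plan is to reduce every $\voset{(\CoRSigma_1)}{1}$ term, modulo $\simalgclassx{\REL_{\ge 3}}$, to a canonical form $w[u] \cup c$, where $w$ is a $\set{\cap, \cdot, \const{I}, \const{D}}$-unary context, $u \in \voset{(\CoRSigma_0)}{1}$, and $c \in \set{\bot, \top, \const{I}, \const{D}}$. Each of these three factors will have finitely many equivalence classes, so by congruence the set of canonical forms is finite modulo $\simalgclassx{\REL_{\ge 3}}$, and \Cref{proposition: eliminate fin} then transfers the finiteness to $\simalgclassx{\REL}$.

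The first subgoal is that $\voset{(\CoRSigma_0)}{1}\quoset{\simalgclassx{\REL}}$ is finite. Since $\pi$ distributes over $\cup, \cap, -$, composes into a single projection, and commutes with $-$, every $\voset{(\CoRSigma_0)}{1}$ term is $\simalgclassx{\REL}$-equivalent to a Boolean combination of the twelve atoms $\set{a^\pi, (a^\pi)^- : \pi \in \range{1,2}^{\range{1,2}}} \cup \set{\bot, \top, \const{I}, \const{D}}$ for $a$ the unique variable occurring; the free Boolean algebra on twelve generators is finite, giving finitely many $\simalgclassx{\REL}$-classes.

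For the outer reduction, I isolate the maximal $\CoRSigma_0$-subterm $u$ containing the variable, so that each $\voset{(\CoRSigma_1)}{1}$ term reads as an outer $\CoRSigma_1$-context (built from $\cup, \cap, \cdot, \pi$ with $\voset{\CoRallterm}{0}$-leaves) applied to $u$. I then transform this outer context in four steps: (i) distribute $\cup$ outward using distributivity of $\cap, \cdot, \pi$ over $\cup$, eagerly collapsing any $\voset{\CoRallterm}{0}$-subterm to a constant via \Cref{lemma: CoR 0} so that distribution never duplicates the variable, yielding a form with $\cup$ only at the top, namely $s \cup c$ with $c \in \set{\bot, \top, \const{I}, \const{D}}$; (ii) push every $\smile$ appearing in the outer part of $s$ down into $u$ via $(p \cdot q)^\smile \simalgclassx{\REL} q^\smile \cdot p^\smile$, $(p \cap q)^\smile \simalgclassx{\REL} p^\smile \cap q^\smile$, $(p^\pi)^\smile \simalgclassx{\REL} p^{\smile \circ \pi}$, and the $\smile$-invariance of $\bot, \top, \const{I}, \const{D}$, noting that $u^\smile \in \voset{(\CoRSigma_0)}{1}$; (iii) eliminate the remaining projections via $t^{\set{1 \mapsto 1, 2 \mapsto 1}} \simalgclassx{\REL} (t \cap \const{I}) \cdot \top$ and $t^{\set{1 \mapsto 2, 2 \mapsto 2}} \simalgclassx{\REL} \top \cdot (t \cap \const{I})$; and (iv) absorb $\bot, \top$ via $\bot \simalgclassx{\REL} \const{I} \cap \const{D}$ and $\top \simalgclassx{\REL_{\ge 3}} \const{D} \cdot \const{D}$. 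The result is $s \simalgclassx{\REL_{\ge 3}} w[u]$ with $w$ a $\set{\cap, \cdot, \const{I}, \const{D}}$-context.

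Counting: the map $w \mapsto w[a]$ (for each fixed $a \in \vsig$) injects the $\extpsim{\REL}$-classes of $\set{\cap, \cdot, \const{I}, \const{D}}$-unary contexts into $\voset{\alltermsig{\set{\cap, \cdot, \const{I}, \const{D}}}}{1}\quoset{\simalgclassx{\REL}}$, which is finite by the hypothesis; combining with the finiteness of $\voset{(\CoRSigma_0)}{1}\quoset{\simalgclassx{\REL}}$, the four choices of $c$, and the finiteness of $\vsig$, congruence bounds the number of $\simalgclassx{\REL_{\ge 3}}$-classes of $w[u] \cup c$, and hence of $\voset{(\CoRSigma_1)}{1}$. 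The main obstacle is step (i): naive distribution risks duplicating the single variable when the distributed-over argument is not variable-free, so one must interleave distribution with the constant-collapse of \Cref{lemma: CoR 0} on every $\voset{\CoRallterm}{0}$-subterm before distributing, and appeal to De Morgan to flip any $\cup$ lying under a complement into $\cap$ so that no $\cup$ remains trapped inside a negated context.
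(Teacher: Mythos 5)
Your proof is correct and follows essentially the same route as the paper's: both reduce $\voset{(\CoRSigma_1)}{1}$ to the $\set{\cap, \cdot, \const{I}, \const{D}}$ fragment via the same battery of valid equations (projection/complement normal forms, distributivity of $\cup$ over $\cap$, $\cdot$, $\pi$, elimination of $\bot$ and $\top$, collapse of variable-free subterms to the four constants) and then count equivalence classes by congruence against the hypothesis. The differences are only organizational: the paper keeps a union of finitely many disjuncts over $\REL$ and splits the inner part into a complement-stack and a projection-stack, whereas you collapse the constant disjuncts into a single $c$ over $\REL_{\ge 3}$ (transferring back via \Cref{proposition: eliminate fin}) and treat the whole maximal $\CoRSigma_0$-subterm as one block whose finiteness you obtain from the free Boolean algebra on finitely many generators---all of which is sound.
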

\begin{proof}[Proof sketch]
    Note that $\term[1], \term[2] \in \CoRSigma_1 \Coloneqq \term[3] \mid \term[1] \cup \term[2] \mid \term[1] \cap \term[2] \mid \term[1] \cdot \term[2] \mid \term[1]^{\pi}$ (where $\term[3] \in \CoRSigma_0$, $\pi \in \range{1, 2}^{\range{1, 2}}$)
    and $\term[3], \term[3]' \in \CoRSigma_0 \Coloneqq a \mid \term[3] \cup \term[3]' \mid \term[3] \cap \term[3]' \mid \term[3]^{-} \mid \top \mid \bot \mid \term[3]^{\pi}$ (where $a \in \vsig$, $\pi \in \range{1, 2}^{\range{1, 2}}$).
    By taking the complement ($-$) and projection ($\pi$) normal form and replacing $\bot$ with $\const{I} \cap \const{D}$ and $\top$ with $\const{I} \cup \const{D}$, for each $\term \in \voset{(\CoRSigma_1)}{1}$ and $a \in \vsig$,
    there are $\term_0 \in \voset{\alltermsig{\set{\cup, \cap, \cdot, \const{I}, \const{D}}}}{1}$ and $\term_1 \in \voset{\alltermsig{\set{-} \cup \set{\pi \mid \pi \in \range{1, 2}^{\range{1, 2}}}}}{1}$ such that $\term \simalgclassx{\REL} \term_0\assign{\term_1}{a}$.
    Moreover, by the distributive law of $\cup$ w.r.t.\ $\cdot$ and $\cap$, for each $\term \in \voset{\alltermsig{\set{\cup, \cap, \cdot, \const{I}, \const{D}}}}{1}$,
    there are $n \in \nat$ and $\term_1, \dots, \term_n \in \voset{\alltermsig{\set{\cap, \cdot, \const{I}, \const{D}}}}{1}$ such that
    $\term \simalgclassx{\REL} \term_1 \cup \dots \cup \term_n$.
    Because $\voset{\alltermsig{\set{\cap, \cdot, \const{I}, \const{D}}}}{1}\quoset{\simalgclassx{\REL}}$ is finite (by the assumption) and $\voset{\alltermsig{\set{-} \cup \set{\pi \mid \pi \in \range{1, 2}^{\range{1, 2}}}}}{1}\quoset{\simalgclassx{\REL}}$ is clearly finite, $\voset{(\CoRSigma_1)}{1}\quoset{\simalgclassx{\REL}}$ is finite.
    Hence, this completes the proof.
    (See \ifiscameraready \cite{nakamuraFiniteVariableOccurrence2023arxiv} \else \Cref{section: lemma: simple CoR} \fi for more details of the proof.)
\end{proof}
Combining \Cref{lemma: 1 to k CoR,lemma: dot-dagger 1 to n CoR,lemma: simpl CoR} yields that to prove that $\voset{(\CoRSigma_{n})}{k}\quoset{\simalgclassx{\REL}}$ is finite,
it suffices to prove that $\voset{\alltermsig{\set{\cap, \cdot, \const{I}, \const{D}}}}{1}\quoset{\simalgclassx{\REL}}$ is finite.

\knowledgenewrobustcmd\extpvsigx{\cmdkl{\dot{\Sigma}}}
\newrobustcmd{\extpvsigzerox}{\kl[\extpvsigzerox]{\dot{\Sigma}}_{\kl[\extpvsigzerox]{0}}}
\knowledge{\extpvsigzerox}{automatic in command}
\knowledgenewrobustcmd{\extpsimx}[1]{\mathbin{\cmdkl{\dot{\sim}}}_{#1}}

Let $\intro*\extpvsigx$ be the set of characters of \Cref{definition: unary maps} from the signature $\set{\cap^{(2)}, \cdot^{(2)}, \const{I}^{(0)}, \const{D}^{(0)}, \smile^{(1)}}$.
That is, $\intro*\extpvsigx \defeq \set{(\bl \cap \term), (\term \cap \bl), (\bl \cdot \term), (\term \cdot \bl) \mid \term \in \voset{(\alltermsig{\set{\cap, \cdot, \const{I}, \const{D}, \smile}})}{0}} \cup \set{\smile}$.
(While $\smile$ does not occur in $\alltermsig{\set{\cap, \cdot, \const{I}, \const{D}}}$, we introduce $\smile$ for replacing the primitive character $(\const{D} \cdot \bl)$ with $\smile (\bl \cdot \const{D})$. This is not essential but is useful for reducing the number of equations and for simplifying the notation (\Cref{definition: sig 0}).)
Let $\intro*\extpsimx{\REL_{\ge 5}}$ be the equivalence relation on $\extpvsigx^*$ defined by:
$\word[1] \extpsimx{\REL_{\ge 5}} \word[2] \defiff \word[1]\context{a} \simalgclassx{\REL_{\ge 5}} \word[2]\context{a}$ where $a \in \vsig$ is any variable (recall \Cref{definition: monoid equation}).\footnote{The condition ``$\ge 5$'' is needed for some equations in \Cref{figure: equations}\ifiscameraready \else (see also \Cref{section: lemma: soundness})\fi.}
\begin{lemma}\label{lemma: CoR monoid}
    If $\extpvsigx^{*}\quoset{\extpsimx{\REL_{\ge 5}}}$ is finite, then $\voset{\alltermsig{\set{\cap, \cdot, \const{I}, \const{D}}}}{1}\quoset{\simalgclassx{\REL}}$ is finite.
\end{lemma}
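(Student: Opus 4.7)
The plan is to obtain the statement as a direct specialization of Lemma~\ref{lemma: monoid} with the signature $\sig \defeq \set{\cap^{(2)}, \cdot^{(2)}, \const{I}^{(0)}, \const{D}^{(0)}, \smile^{(1)}}$ and the class $\algclass \defeq \REL_{\ge 5}$ (viewed as a class of $\sig$-algebras of binary relations, each operator receiving its standard interpretation). With these choices, $\allterm$ is exactly $\alltermsig{\set{\cap, \cdot, \const{I}, \const{D}, \smile}}$, the alphabet $\extpvsig$ of \Cref{definition: unary maps} coincides with $\extpvsigx$, and the relation $\extpsim{\algclass}$ coincides with $\extpsimx{\REL_{\ge 5}}$. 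So, by Lemma~\ref{lemma: monoid}, the assumed finiteness of $\extpvsigx^{*}\quoset{\extpsimx{\REL_{\ge 5}}}$ immediately yields that $\voset{\alltermsig{\set{\cap, \cdot, \const{I}, \const{D}, \smile}}}{1}\quoset{\simalgclassx{\REL_{\ge 5}}}$ is finite.

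From here I would conclude in two small steps. First, because of the trivial inclusion $\voset{\alltermsig{\set{\cap, \cdot, \const{I}, \const{D}}}}{1} \subseteq \voset{\alltermsig{\set{\cap, \cdot, \const{I}, \const{D}, \smile}}}{1}$, the quotient $\voset{\alltermsig{\set{\cap, \cdot, \const{I}, \const{D}}}}{1}\quoset{\simalgclassx{\REL_{\ge 5}}}$ is also finite. Second, since $\voset{\alltermsig{\set{\cap, \cdot, \const{I}, \const{D}}}}{1}$ is \kl{subterm-closed} (subterms can only decrease $\vo$), Proposition~\ref{proposition: eliminate fin} applied with $m = 5$ promotes this finiteness from $\simalgclassx{\REL_{\ge 5}}$ to $\simalgclassx{\REL}$, which is the desired conclusion.

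There is no real obstacle: the argument is essentially bookkeeping, as Lemma~\ref{lemma: monoid} already captures the general principle in exactly the abstract form required. The only point worth commenting on is the role of $\smile$ in $\extpvsigx$. The generator $\smile$ is introduced, as the paragraph preceding the lemma explains, to keep the alphabet compact by rewriting $(\const{D} \cdot \bl)$ as $\smile(\bl \cdot \const{D})$; but since the lemma's conclusion concerns $\smile$-free terms and we use the inclusion into the larger, $\smile$-extended term set, the presence of $\smile$ widens the ambient signature harmlessly without affecting the finiteness argument.
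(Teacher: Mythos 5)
Your proposal is correct and follows essentially the same route as the paper's proof: instantiate \Cref{lemma: monoid} with the signature $\set{\cap^{(2)}, \cdot^{(2)}, \const{I}^{(0)}, \const{D}^{(0)}, \smile^{(1)}}$ and $\algclass = \REL_{\ge 5}$, restrict to the $\smile$-free terms by inclusion, and then pass from $\simalgclassx{\REL_{\ge 5}}$ to $\simalgclassx{\REL}$ via \Cref{proposition: eliminate fin}. Your explicit checks (that $\extpvsig$ specializes to $\extpvsigx$, and that the term set is \kl{subterm-closed} so \Cref{proposition: eliminate fin} applies) are just the bookkeeping the paper leaves implicit.
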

\begin{proof}
    Since $\extpvsigx^{*}\quoset{\extpsimx{\REL_{\ge 5}}}$ is finite,
    we have that $\voset{\alltermsig{\set{\cap, \cdot, \const{I}, \const{D}, \smile}}}{1}\quoset{\simalgclassx{\REL_{\ge 5}}}$ is finite (\Cref{lemma: monoid});
    thus, $\voset{\alltermsig{\set{\cap, \cdot, \const{I}, \const{D}}}}{1}\quoset{\simalgclassx{\REL_{\ge 5}}}$ is finite.
    Hence by \Cref{proposition: eliminate fin}, this completes the proof.
\end{proof}
We consider the following finite subset $\extpvsigzerox$ of $\extpvsigx$ (cf.\ \Cref{lemma: finite generator}):

\knowledgenewrobustcmd{\uI}{\cmdkl{\cup_{\const{I}}}}
\knowledgenewrobustcmd{\uD}{\cmdkl{\cup_{\const{D}}}}
\knowledgenewrobustcmd{\iI}{\cmdkl{\cap_{\const{I}}}}
\knowledgenewrobustcmd{\iD}{\cmdkl{\cap_{\const{D}}}}
\knowledgenewrobustcmd{\cD}{\cmdkl{\cdot_{\const{D}}}}
\knowledgenewrobustcmd{\re}{\cmdkl{\smile}}
\begin{definition}\label{definition: sig 0}
    Let $\extpvsigzerox \subseteq \extpvsigx$ be the finite set $\set{\intro*\iI, \intro*\iD, \intro*\cD, \intro*\re}$,
    where $\iI$, $\iD$, $\cD$ are abbreviations of $(\bl \cap \const{I})$, $(\bl \cap \const{D})$, $(\bl \cdot \const{D})$,
    respectively.
\end{definition}

\begin{lemma}\label{lemma: CoR monoid finite generator}
    If $\extpvsigzerox^{*}\quoset{\extpsimx{\REL_{\ge 5}}}$ is finite, then $\extpvsigx^{*}\quoset{\extpsimx{\REL_{\ge 5}}}$ is finite.
\end{lemma}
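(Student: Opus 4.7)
The plan is to strengthen \Cref{lemma: finite generator} from a ``one character'' reduction to a ``one word'' reduction: if for every $a \in \extpvsigx$ we can find a word $\word_a \in \extpvsigzerox^{*}$ with $a \extpsimx{\REL_{\ge 5}} \word_a$, then by the congruence law of $\extpsimx{\REL_{\ge 5}}$ any $\word \in \extpvsigx^{*}$ is equivalent to a word in $\extpvsigzerox^{*}$, and the hypothesis that $\extpvsigzerox^{*}\quoset{\extpsimx{\REL_{\ge 5}}}$ is finite then yields the conclusion. The proof of \Cref{lemma: finite generator} already contains this argument, so only the existence of each $\word_a$ has to be verified.

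To produce such a $\word_a$, I would first apply \Cref{lemma: CoR 0} (which is valid for $\REL_{\ge 5}$ since $\REL_{\ge 5} \subseteq \REL_{\ge 3}$) to replace every closed subterm appearing inside a character by one of $\bot, \top, \const{I}, \const{D}$. This reduces the task to $\smile$ itself together with the eight patterns $(\bl \cap t), (t \cap \bl), (\bl \cdot t), (t \cdot \bl)$ for $t \in \set{\bot, \top, \const{I}, \const{D}}$. Each of these admits a short choice of $\word_a$: the characters $\re, \iI, \iD, \cD$ are already in $\extpvsigzerox$; commutativity of $\cap$ reduces each $(t \cap \bl)$ to $(\bl \cap t)$; the empty word $\empword$ handles $(\bl \cap \top), (\bl \cdot \const{I}), (\const{I} \cdot \bl)$; the four $\bot$-cases reduce to $\iI \iD$ via $\const{I} \cap \const{D} \simalgclassx{\REL} \bot$ and the absorption $t \cdot \bot \simalgclassx{\REL} \bot$; $(\bl \cdot \top)$ reduces to $\cD \cD$ via $\const{D} \cdot \const{D} \simalgclassx{\REL_{\ge 3}} \top$ (as established in \Cref{lemma: CoR 0}); and finally $(\const{D} \cdot \bl)$ and $(\top \cdot \bl)$ reduce to $\re \cD \re$ and $\re \cD \cD \re$ using $(R \cdot S)^{\smile} = S^{\smile} \cdot R^{\smile}$ and $\const{D}^{\smile} = \const{D}$.

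Since every case is dispatched by a standard identity of the calculus of relations, I do not anticipate a genuine obstacle: the lemma is effectively a corollary of \Cref{lemma: CoR 0} combined with the general framework of \Cref{lemma: finite generator}, the only work being a short tabular case analysis. The closest thing to a subtle point is remembering that $\const{D} \cdot \const{D} \simalgclassx{\REL_{\ge 3}} \top$ requires the cardinality restriction, which is why the reduction is stated relative to $\REL_{\ge 5}$ (or, more generously, $\REL_{\ge 3}$) rather than $\REL$.
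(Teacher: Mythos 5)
Your proposal is correct and follows essentially the same route as the paper: reduce to showing that each character of $\extpvsigx$ is $\extpsimx{\REL_{\ge 5}}$-equivalent to a word over $\extpvsigzerox$ (via the argument of \Cref{lemma: finite generator}), classify the closed subterm using \Cref{lemma: CoR 0}, and dispatch the resulting cases with a small table, handling the $(\term \cap \bl)$ and $(\term \cdot \bl)$ shapes via commutativity respectively conjugation by $\re$. Your explicit words (e.g.\ $\iI\iD$ for the $\bot$-cases, $\cD\cD$ for $(\bl \cdot \top)$, $\re\cD\re$ and $\re\cD\cD\re$ for $(\const{D} \cdot \bl)$ and $(\top \cdot \bl)$) match the paper's table and its ``similarly for $(\term \cdot \bl)$'' remark.
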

\begin{proof}
    It suffices to prove the following: for every $a \in \extpvsigx$, there is $\word \in \extpvsigzerox^*$ such that $a \extpsimx{\REL_{\ge 5}} \word$.
    Case $a = (\bl \cap \term), (\bl \cdot \term)$:
    Since $\vo(\term) = 0$, by using \Cref{lemma: CoR 0}, they are shown by distinguishing the following four sub-cases, as follows:
    \begin{center}
        \hfill
        \begin{tabular}{|c||c|c|c|c|}
            \hline                         & $\term \simalgclassx{\REL_{\ge 3}} \bot$ & $\term \simalgclassx{\REL_{\ge 3}} \top$ & $\term \simalgclassx{\REL_{\ge 3}} \const{I}$ & $\term \simalgclassx{\REL_{\ge 3}} \const{D}$ \\
            \hline
            \hline $a = (\bl \cap \term)$  & $\iI\iD$                                 & $\empword$                               & $\iI$                                         & $\iD$                                         \\
            \hline $a = (\bl \cdot \term)$ & $\iI\iD$                                 & $\cD\cD$                                 & $\empword$                                    & $\cD$                                         \\
            \hline                                                                                                                                                                          %
        \end{tabular}
        \hfill
        \phantom{.}
    \end{center}
    Case $a = (\term \cap \bl), (\term \cdot \bl)$:
    By $(\term \cap \bl) = \re (\bl \cap \term)$ and applying the above case analysis for $(- \cap \term)$, this case can be proved (similarly for $(\term \cdot \bl)$).
    Case $a = \re$:
    Since $\re \in \extpvsigzerox$.
\end{proof}
Thus, our goal is to prove that $\extpvsigzerox^{*}\quoset{\extpsimx{\REL_{\ge 5}}}$ is finite.

\subsection{On the finiteness of the monoid}\label{section: fin}
For the finiteness of $\extpvsigzero^{*}\quoset{\extpsimx{\REL_{\ge 5}}}$ (cf.\ \Cref{lemma: cofiniteness}),
we present the $21$ equations in \Cref{figure: equations}.\footnote{The most technical part of the paper is to collect these equations;
    they are obtained by running a program based on \Cref{algorithm: finiteness proof} using ATP/SMT systems.}
For $i \in \range{1, 21}$, let $\word[1]_i, \word[2]_i$ be words such that $\word[1]_i = \word[2]_i$ denotes the $i$-th equation.
\begin{figure}[t]
    \noindent
    \begin{minipage}[t]{0.25\textwidth}
        \begin{align}
            \iI    & = \iI\iI \\
            \iD    & = \iD\iD \\
            \iI    & = \iI\re \\
            \iI    & = \re\iI \\
            \iI\iD & = \iD\iI \\
            \iD\re & = \re\iD
        \end{align}
    \end{minipage}\hfill
    \begin{minipage}[t]{0.28\textwidth}
        \begin{align}
            \empword & = \re\re    \\
            \iD\iI   & = \cD\iI\iD \\
            \iD\iI   & = \iI\cD\iI \\
            \iI\cD   & = \iI\cD\iD \\
            \cD\iI   & = \iD\cD\iI \\
            \cD\cD   & = \cD\iD\cD
        \end{align}
    \end{minipage}\hfill
    \begin{minipage}[t]{0.41\textwidth}
        \begin{align}
            \label{equationa: example cDcD cDcDcD}   \cD\cD & = \cD\cD\cD       \\
            \cD\cD\iD                                       & = \cD\cD\iI\cD    \\
            \iD\cD\cD                                       & = \cD\iI\cD\cD    \\
            \re\cD\iI                                       & = \iD\re\cD\iI    \\
            \cD\re\cD\re                                    & = \re\cD\re\cD    \\
            \re\cD\iI\cD                                    & = \iD\re\cD\cD\iD \\
            \cD\re\cD\cD\re                                 & = \re\cD\cD\re\cD
        \end{align}
    \end{minipage}

    \begin{align}
        \cD\iD\re\cD\iD\re\cD\iD\re\cD\iD\re\cD\iD & = \re\cD\iD\re\cD\iD\re\cD\iD\re\cD\iD\re\cD\iD\re \\
        \cD\iI\cD\re\cD\iI\cD\re\cD                & = \re\cD\iI\cD\re\cD\iI\cD\re\cD
    \end{align}
    \caption{Equations for the finiteness}
    \label{figure: equations}
\end{figure}
\begin{lemma}[soundness]\label{lemma: soundness}
    For each $i \in \range{1, 21}$, $\word[1]_i \extpsimx{\REL_{\ge 5}} \word[2]_i$.
\end{lemma}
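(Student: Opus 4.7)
The plan is to verify each of the 21 equations individually by unfolding the definitions of the four characters in $\extpvsigzerox$: by \Cref{definition: sig 0,definition: unary maps}, for any term $\term$ and structure $\model$ we have $\jump{\iI\context{\term}}_{\model} = \jump{\term}_{\model} \cap \const{I}_{|\model|}$, $\jump{\iD\context{\term}}_{\model} = \jump{\term}_{\model} \cap \const{D}_{|\model|}$, $\jump{\cD\context{\term}}_{\model} = \jump{\term}_{\model} \cdot \const{D}_{|\model|}$, and $\jump{\re\context{\term}}_{\model} = \jump{\term}_{\model}^{\smile}$. I would fix a variable $a$, set $R \defeq a^{\model}$, and translate each equation $\word[1]_i \context{a} = \word[2]_i\context{a}$ into a pointwise first-order statement about $R$ on any $\model \in \REL_{\ge 5}$.

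First, equations (1)--(7) involve only $\iI$, $\iD$, and $\re$ (never $\cD$), so the relations $\jump{\word_i \context{a}}_{\model}$ live inside $\const{I}_{|\model|}$ or are trivial converses. They follow from idempotence/commutativity/associativity of $\cap$, involutivity of $\smile$, and the identities $\const{I}_{|\model|}^{\smile} = \const{I}_{|\model|}$, $\const{D}_{|\model|}^{\smile} = \const{D}_{|\model|}$, $(R \cap S)^{\smile} = R^{\smile} \cap S^{\smile}$; these hold on arbitrary $\model$, with no cardinality assumption.

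For equations (8)--(19), I would first isolate the pointwise characterization $\tuple{x,y} \in \jump{R \cdot \const{D}}_{\model} \iff \exists z \in |\model| \setminus \set{y},\ \tuple{x,z} \in R$, and the ``saturation'' consequence $\jump{R \cdot \const{D} \cdot \const{D}}_{\model} = \set{\tuple{x,y} \mid \exists z,\ \tuple{x,z} \in R} = \jump{R \cdot \top}_{\model}$ whenever $\card|\model| \ge 3$ (since from any $\tuple{x,z} \in R$ one can choose $z' \ne y,z$). This immediately yields (13), and a similar saturation argument handles (8)--(12) and (14)--(19), each requiring only a small cardinality bound to guarantee enough fresh witnesses. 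In each case one rewrites both sides as an explicit first-order condition on $\tuple{x,y}$ and checks logical equivalence.

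The hard part will be (20) and (21), where the bound $\ge 5$ is really used: the words alternate $\cD$ with $\re$ several times, so the pointwise unfolding of (say) $\re\cD\iI\cD\re\cD\iI\cD\re\cD\context{a}$ nests several existential quantifiers over distinct elements and then takes a converse that swaps source and target roles. My plan for these two is to compute $\jump{\cD\iI\cD\re\context{a}}_{\model}$ and $\jump{\cD\iD\re\context{a}}_{\model}$ explicitly as relations $f(R)$ in closed form (e.g., expressions of the shape ``the pair $\tuple{x,y}$ such that $R$ meets a certain combinatorial configuration involving $x$ and $y$''), then observe that iterating these operations reaches a fixpoint once $\card|\model| \ge 5$, at which point prepending a $\re$ is absorbed. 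Provided the pointwise computations are carried out carefully, verifying the two long equations reduces to checking that these closed forms coincide, which is a finite case analysis on whether the row of $R$ at $x$ (and the column at $y$, after conversion) is empty, a singleton, or contains at least two elements---the $\ge 5$ threshold ensuring that enough distinct witnesses exist in every nested quantifier.
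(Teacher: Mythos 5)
Your proposal is correct and follows essentially the same route as the paper: the paper also verifies each equation by translating $\word[1]_i\context{a} = \word[2]_i\context{a}$ into a first-order equivalence via the standard translation, using associativity/idempotence and projection identities for the short equations and, for the long ones (20)--(21), a prenex-normal-form computation showing that both sides collapse to a simple symmetric condition (e.g., $\exists x\,y,\ a(x,y)\land x\neq y$) once $\card|\model|\ge 5$ guarantees a witness distinct from the up-to-four already-named elements. Your ``fixpoint reached, so a prepended $\re$ is absorbed'' observation is exactly the mechanism the paper exploits for those two equations.
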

\begin{proof}[Proof Sketch]
    We prove $\word[1]_i\context{a} \simalgclassx{\REL_{\ge 5}} \word[2]_i\context{a}$, where $a$ is any variable.
    This equation can be translated to the validity of a first-order sentence via the standard translation \cite{Tarski1941}.
    Here, we add the formula $\exists x_1, \dots, x_5, \land_{i,j \in \range{1, 5}; i \neq j} x_i \neq x_j$ as an axiom, for forcing $\extpsimx{\REL_{\ge 5}}$.
    Thanks to this encoding, each of them can also be tested by using ATP/SMT systems.
    Nevertheless, in the following, as an example, we give explicit proof for \Cref{equationa: example cDcD cDcDcD}.
    By using the standard translation,
    \Cref{equationa: example cDcD cDcDcD} is translated into the following formula in first-order logic, where $x_0, y_0$ are free variables:
    \begin{align*}
                              & (\exists y_1, (\exists y_2, a(x_0, y_2) \land y_2 \neq y_1) \land y_1 \neq y_0)                                    \\
        \leftrightarrow \quad & (\exists y_1, (\exists y_2, (\exists y_3, a(x_0, y_3) \land y_3 \neq y_2) \land y_2 \neq y_1) \land y_1 \neq y_0).
    \end{align*}
    This formula is valid under $\REL_{\ge 5}$, which can be shown by using the axiom above (notice that under $\REL_{\ge 5}$, $y_1$ on the left and $y_1, y_2$ on the right always exist, by taking a vertex not assigned by any variable occurring in each formula;
    thus, both formulas are equivalent to the formula $\exists y, a(x_0, y)$).
    Even without the encoding to first-order logic, this equation can also be shown as follows:
    \begin{align*}
        (\cD \cD)\context{a} = (a \cdot \const{D}) \cdot \const{D} & = a \cdot (\const{D} \cdot \const{D})                    \tag{associativity law}                                                                                                        \\
                                                                   & = a \cdot \top                                           \tag{$\top \simalgclassx{\REL_{\ge 3}} \const{D} \cdot \const{D}$}                                                             \\
                                                                   & = a \cdot (\const{D} \cdot \const{D} \cdot \const{D})    \tag{$\top \simalgclassx{\REL_{\ge 3}} \top \cdot \const{D}$ and $\top \simalgclassx{\REL_{\ge 3}} \const{D} \cdot \const{D}$} \\
                                                                   & = ((a \cdot \const{D}) \cdot \const{D}) \cdot \const{D} = (\cD \cD \cD)\context{a}. \tag{associativity law}
    \end{align*}
    See \ifiscameraready \cite{nakamuraFiniteVariableOccurrence2023arxiv} \else \Cref{section: lemma: soundness} \fi for all the equations.
\end{proof}
\begin{lemma}\label{lemma: finite lang}
    The language $\bigcup_{i \in \range{1, 21}} \extpvsigzero^* \word[2]_i \extpvsigzero^*$ over $\extpvsigzero$ is \kl{cofinite}.
\end{lemma}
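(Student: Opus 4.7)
The plan is to show that the complement language $L^{c} := \extpvsigzerox^{*} \setminus \bigcup_{i=1}^{21} \extpvsigzerox^{*} \word[2]_{i} \extpvsigzerox^{*}$, i.e., the set of words over the $4$-letter alphabet $\extpvsigzerox = \set{\iI,\iD,\cD,\re}$ that avoid every $\word[2]_{i}$ as a factor, is finite. Cofiniteness of $L$ is exactly the finiteness of $L^{c}$.

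First I would use the short equations to cut the transition graph. Equations (1)--(7) forbid the seven $2$-letter factors $\iI\iI,\iD\iD,\iI\re,\re\iI,\iD\iI,\re\iD,\re\re$, so in any word of $L^{c}$ the successor of $\iI$ lies in $\set{\iD,\cD}$, of $\iD$ in $\set{\cD,\re}$, of $\re$ is forced to $\cD$, and only $\cD$ keeps all four possible successors. Equations (8)--(15) then prune further: a short case check shows every occurrence of $\cD\iI$ is forced to extend to $\cD\iI\cD\re\cD$ (by (1), (3), (8), (9), (10), (15)); $\cD\iD$ must be followed by $\re$ (by (12)); $\cD\cD$ cannot be followed by $\cD$ (by (13)); and $\cD\cD\iI$ is forced to $\cD\cD\iI\cD$ (by (14)). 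Several dead ends also appear immediately: $\re\cD\re$ is unextendable since the forced next letter $\cD$ would produce the factor $\re\cD\re\cD$ forbidden by (17), and $\re\cD\cD\re$ is a dead end by (19).

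After this pruning, the only remaining infinite branches are the two periodic families $(\re\cD\iD)^{*}$ and $(\re\cD\iI\cD)^{*}$, which are precisely cut off by the long right-hand sides $\word[2]_{20}$ and $\word[2]_{21}$, limiting them to at most four and one full periods respectively. Rather than enumerate every residual trajectory by hand, the cleanest finish is to invoke \Cref{proposition: linear time}: build the Aho--Corasick DFA $\mathcal{A}$ for the pattern set $\set{\word[2]_{1}, \dots, \word[2]_{21}}$, complement its accepting condition to obtain a DFA $\mathcal{A}^{c}$ recognising $L^{c}$, prune $\mathcal{A}^{c}$ to the states that are both reachable from the start and co-reachable to an accepting state, and verify by depth-first search that the resulting graph is acyclic. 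Acyclicity of this pruned automaton is equivalent to finiteness of $L^{c}$, hence to cofiniteness of $L$.

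The main obstacle is purely bookkeeping: the pruned automaton has on the order of a few dozen states and one must trace the extensions carefully. The conceptually delicate content --- the observation that only the long equations $\word[2]_{20}$ and $\word[2]_{21}$ close off the last two periodic escape families --- has already been absorbed into the design of \Cref{figure: equations}; given the figure, the remaining finiteness check is mechanical and falls within the linear-time procedure of \Cref{proposition: linear time}.
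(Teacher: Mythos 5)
Your proposal is correct and takes essentially the same route as the paper: both reduce cofiniteness to a mechanical finiteness/acyclicity check on the DFA of the complement (factor-avoiding) language --- the paper verifies via Z3 that no avoiding word of length $\ge 29$ exists and exhibits the (acyclic up to the accepting state) minimal DFA, while you invoke the Aho--Corasick construction of \Cref{proposition: linear time} plus a depth-first acyclicity check, with your hand pruning serving only as non-load-bearing motivation. One small slip in that warm-up analysis: equation (14) \emph{forbids} the factor $\cD\cD\iI\cD$ rather than forcing it, so together with (8) the word $\cD\cD\iI$ is in fact a dead end; this does not affect the argument since you defer the actual verification to the automaton.
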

\begin{proof}[Proof Sketch]
    It suffices to prove that for some $n \in \nat$, the following hold:
    there is no word $\word \in \extpvsigzero^* \setminus (\bigcup_{i \in \range{1, 21}} \extpvsigzero^* \word[2]_i \extpvsigzero^*)$ such that $\wlen{\word} \ge n$ (since the set $\set{\word \in \extpvsigzero^* \mid \wlen{\word} \le n - 1}$ is finite).
    This holds when $n \ge 29$, which can be tested by using Z3 (an ATP/SMT system) \cite{demouraZ3EfficientSMT2008}
    and can be checked by drawing its DFA (see \ifiscameraready \cite{nakamuraFiniteVariableOccurrence2023arxiv}\else \Cref{appendix: SMT-LIB}\fi, for more details).
\end{proof}

Thus, we have obtained the following:
\begin{lemma}\label{lemma: result}
    $\extpvsig^*\quoset{\extpsimx{\REL_{\ge 5}}}$ is finite.
\end{lemma}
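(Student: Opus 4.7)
The plan is to assemble \Cref{lemma: soundness}, \Cref{lemma: finite lang}, \Cref{lemma: cofiniteness}, and \Cref{lemma: CoR monoid finite generator}. First, take the shortlex order on $\extpvsigzerox^*$, induced by the generator ordering $\iI < \iD < \cD < \re$, as the well-founded relation $<$ required by \Cref{lemma: cofiniteness}: shortlex is a well-ordering on words over a finite alphabet (hence it is well-founded), it satisfies the congruence law, and as a total well-order it has no infinite antichains.

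Next, verify that each of the $21$ equations in \Cref{figure: equations} is oriented from smaller to larger under shortlex. For all equations other than (5), (6), (17), (19) the right-hand side is strictly longer than the left-hand side, so the shortlex comparison is settled by length alone; for the four equal-length equations the chosen generator ordering ensures that the left-hand side is lexicographically smaller (its first differing generator is smaller than that of the right-hand side, e.g.\ $\iI\iD \lexl \iD\iI$ for (5), $\iD\re \lexl \re\iD$ for (6), $\cD\re\cD\re \lexl \re\cD\re\cD$ for (17) and $\cD\re\cD\cD\re \lexl \re\cD\cD\re\cD$ for (19)). Combined with \Cref{lemma: soundness}, which certifies $\word[1]_i \extpsimx{\REL_{\ge 5}} \word[2]_i$ for every $i \in \range{1, 21}$, this produces a finite set
\[\set{\tuple{\word[1]_i, \word[2]_i} \mid i \in \range{1, 21}} \subseteq (<) \cap (\extpsimx{\REL_{\ge 5}}).\]

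By \Cref{lemma: finite lang}, the language $\extpvsigzerox^* \bigl(\bigcup_{i \in \range{1, 21}} \word[2]_i\bigr) \extpvsigzerox^*$ is \kl{cofinite}, so condition \ref{lemma: cofiniteness 1} of \Cref{lemma: cofiniteness} is met, and the direction \ref{lemma: cofiniteness 1} $\Longrightarrow$ \ref{lemma: cofiniteness 2} delivers finiteness of $\extpvsigzerox^*\quoset{\extpsimx{\REL_{\ge 5}}}$. A final application of \Cref{lemma: CoR monoid finite generator} lifts this from the generator subset $\extpvsigzerox$ to the full alphabet $\extpvsigx$, which is the claimed statement. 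All genuine difficulty is upstream, in the discovery of the equations and the checks of \Cref{lemma: soundness} and \Cref{lemma: finite lang}; the present lemma itself is a clean assembly, and I anticipate no real obstacle beyond bookkeeping the shortlex orientation of the $21$ equations.
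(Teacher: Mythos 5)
Your proof is correct and follows essentially the same route as the paper's: apply \Cref{lemma: cofiniteness} (direction \ref{lemma: cofiniteness 1}$\Longrightarrow$\ref{lemma: cofiniteness 2}) with the shortlex order induced by $\iI < \iD < \cD < \re$, feeding in \Cref{lemma: soundness} and \Cref{lemma: finite lang}; your explicit check of the four equal-length equations (5), (6), (17), (19) is precisely the detail the paper dismisses as ``clear by the form''. The only divergence is that you additionally invoke \Cref{lemma: CoR monoid finite generator} to lift from $\extpvsigzerox^*$ to $\extpvsigx^*$, matching the lemma's literal statement, whereas the paper's own proof establishes only the $\extpvsigzerox^*$ claim here and performs that lift later, in the proof of \Cref{theorem: main}.
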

\begin{proof}
    By \Cref{lemma: soundness,lemma: finite lang}, we can apply \Cref{lemma: cofiniteness},
    where $<$ is the shortlex order on $\extpvsigzero^*$ induced by: $\iI < \iD < \cD < \re$.
    By the form, $\word[1]_i < \word[2]_i$ is clear for each $i \in \range{1, 21}$.
\end{proof}
Finally, \Cref{theorem: main} is obtained as follows:
\begin{proof}[Proof of \Cref{theorem: main}]
    We have:
    $\extpvsigzero^*\quoset{\extpsimx{\REL_{\ge 5}}}$ is finite (\Cref{lemma: result})
    $\Longrightarrow$ $\extpvsig^*\quoset{\extpsimx{\REL_{\ge 5}}}$ is finite (\Cref{lemma: CoR monoid finite generator})
    $\Longrightarrow$ $\voset{\allterm^{\set{\cap, \cdot, \const{I}, \const{D}}}}{1}\quoset{\simalgclassx{\REL}}$ is finite (\Cref{lemma: CoR monoid})
    $\Longrightarrow$ $\voset{(\CoRSigma_1)}{1}\quoset{\simalgclassx{\REL}}$ is finite (\Cref{lemma: simpl CoR})
    $\Longrightarrow$ $\voset{(\CoRSigma_n)}{1}\quoset{\simalgclassx{\REL}}$ is finite (\Cref{lemma: dot-dagger 1 to n CoR})
    $\Longrightarrow$ $\voset{(\CoRSigma_n)}{k}\quoset{\simalgclassx{\REL}}$ is finite (\Cref{lemma: 1 to k CoR}).
\end{proof}

\begin{remark}
    The finite axiomatizability of the \kl{equational theory} of $\voset{(\CoRSigma_n)}{k}$ over $\REL$ immediately follows from the finiteness of $\voset{(\CoRSigma_n)}{k}\quoset{\simalgclassx{\REL}}$.
\end{remark}
\section{Conclusion}\label{section: conclusion}
We have introduced the \emph{\kl{$k$-variable-occurrence fragment}}
and presented an approach for showing the decidability of the equational theory from the finiteness.
As a case study, we have proved that the \kl{equational theory} of $\voset{(\CoRSigma_n)}{k}$ is decidable,
whereas that of $\CoRSigma_n$ is undecidable in general.
We leave the decidability open for the \kl{equational theory} of \kl{CoR} with \emph{full \kl{dot-dagger alternation}} (i.e., $\voset{\CoRallterm}{k}$, in this paper).
Our approach may apply to some other algebras/logics.
It would be interesting to consider the finite variable-occurrence fragment for other systems (e.g., \kl{CoR} with antidomain \cite{Hollenberg1997, Desharnais2009}, dynamic logics \cite{Harel2000}).
It would also be interesting to extend our result to \emph{first-order logic with equality} (cf.\ \Cref{proposition: CoR and FO3})---for example, is the \emph{$k$-atomic-predicate-occurrence fragment} of the $m$ variable fragment of first-order logic with equality decidable?
\bibliography{main-pand}

\begin{thebibliography}{10}

\bibitem{ahoEfficientStringMatching1975}
Alfred~V. Aho and Margaret~J. Corasick.
\newblock Efficient string matching: an aid to bibliographic search.
\newblock {\em Communications of the ACM}, 18(6):333–340, 1975.
\newblock \href {https://doi.org/10.1145/360825.360855}
  {\path{doi:10.1145/360825.360855}}.

\bibitem{Andreka1995}
Hajnal Andréka and D.~A. Bredikhin.
\newblock The equational theory of union-free algebras of relations.
\newblock {\em Algebra Universalis}, 33(4):516--532, 1995.
\newblock \href {https://doi.org/10.1007/BF01225472}
  {\path{doi:10.1007/BF01225472}}.

\bibitem{barbosaCvc5VersatileIndustrialStrength2022}
Haniel Barbosa, Clark Barrett, Martin Brain, Gereon Kremer, Hanna Lachnitt,
  Makai Mann, Abdalrhman Mohamed, Mudathir Mohamed, Aina Niemetz, Andres
  Nötzli, Alex Ozdemir, Mathias Preiner, Andrew Reynolds, Ying Sheng, Cesare
  Tinelli, and Yoni Zohar.
\newblock cvc5: A versatile and industrial-strength {SMT} solver.
\newblock In {\em TACAS}, volume 13243 of {\em LNCS}, pages 415--442. Springer,
  2022.
\newblock \href {https://doi.org/10.1007/978-3-030-99524-9_24}
  {\path{doi:10.1007/978-3-030-99524-9_24}}.

\bibitem{Bernays1928}
Paul Bernays and Moses Schönfinkel.
\newblock Zum {Entscheidungsproblem} der mathematischen {Logik}.
\newblock {\em Mathematische Annalen}, 99(1):342--372, 1928.
\newblock \href {https://doi.org/10.1007/BF01459101}
  {\path{doi:10.1007/BF01459101}}.

\bibitem{bookStringRewritingSystems1993}
Ronald~V. Book and Friedrich Otto.
\newblock {\em String-Rewriting Systems}.
\newblock Springer, 1993.
\newblock \href {https://doi.org/10.1007/978-1-4613-9771-7}
  {\path{doi:10.1007/978-1-4613-9771-7}}.

\bibitem{bussBooleanFormulaValue1987}
Samuel~R. Buss.
\newblock The boolean formula value problem is in {ALOGTIME}.
\newblock In {\em STOC}, page 123–131. ACM, 1987.
\newblock \href {https://doi.org/10.1145/28395.28409}
  {\path{doi:10.1145/28395.28409}}.

\bibitem{Buss92}
Samuel~R. Buss, Stepane~A. Cook, Anshul Gupta, and Vijaya Ramachandran.
\newblock An optimal parallel algorithm for formula evaluation.
\newblock {\em SIAM Journal on Computing}, 21(4):755--780, 1992.
\newblock \href {https://doi.org/10.1137/0221046} {\path{doi:10.1137/0221046}}.

\bibitem{Borger1997}
Egon Börger, Erich Grädel, and Yuri Gurevich.
\newblock {\em The Classical Decision Problem}.
\newblock Springer, 1997.
\newblock URL: \url{https://link.springer.com/9783540423249}.

\bibitem{Church36}
Alonzo Church.
\newblock A note on the {Entscheidungsproblem}.
\newblock {\em The Journal of Symbolic Logic}, 1(01):40--41, 1936.
\newblock \href {https://doi.org/10.2307/2269326} {\path{doi:10.2307/2269326}}.

\bibitem{demouraZ3EfficientSMT2008}
Leonardo de~Moura and Nikolaj Bjørner.
\newblock Z3: An efficient {SMT} solver.
\newblock In {\em TACAS}, volume 13243 of {\em LNCS}, pages 337--340. Springer,
  2008.
\newblock \href {https://doi.org/10.1007/978-3-540-78800-3_24}
  {\path{doi:10.1007/978-3-540-78800-3_24}}.

\bibitem{Desharnais2009}
Jules Desharnais, Peter Jipsen, and Georg Struth.
\newblock Domain and antidomain semigroups.
\newblock In {\em RelMiCS}, volume 5827 of {\em LNCS}, pages 73--87. Springer,
  2009.
\newblock \href {https://doi.org/10.1007/978-3-642-04639-1_6}
  {\path{doi:10.1007/978-3-642-04639-1_6}}.

\bibitem{Givant07}
Steven Givant.
\newblock The calculus of relations as a foundation for mathematics.
\newblock {\em Journal of Automated Reasoning}, 37(4):277--322, 2007.
\newblock \href {https://doi.org/10.1007/s10817-006-9062-x}
  {\path{doi:10.1007/s10817-006-9062-x}}.

\bibitem{Harel2000}
David Harel, Dexter Kozen, and Jerzy Tiuryn.
\newblock {\em Dynamic Logic}.
\newblock The MIT Press, 2000.
\newblock \href {https://doi.org/10.7551/mitpress/2516.001.0001}
  {\path{doi:10.7551/mitpress/2516.001.0001}}.

\bibitem{Hollenberg1997}
Marco Hollenberg.
\newblock An equational axiomatization of dynamic negation and relational
  composition.
\newblock {\em Journal of Logic, Language and Information}, 6(4):381--401,
  1997.
\newblock \href {https://doi.org/10.1023/A:1008271805106}
  {\path{doi:10.1023/A:1008271805106}}.

\bibitem{hofnerAutomatingCalculusRelations2008}
Peter Höfner and Georg Struth.
\newblock On automating the calculus of relations.
\newblock In {\em IJCAR}, volume 5195 of {\em LNCS}, pages 50--66. Springer,
  2008.
\newblock \href {https://doi.org/10.1007/978-3-540-71070-7_5}
  {\path{doi:10.1007/978-3-540-71070-7_5}}.

\bibitem{Kahr62}
A.~S. Kahr, Edward~F. Moore, and Hao Wang.
\newblock Entscheidungsproblem reduced to the {AEA} case.
\newblock {\em Proceedings of the National Academy of Sciences},
  48(3):365--377, 1962.
\newblock \href {https://doi.org/10.1073/pnas.48.3.365}
  {\path{doi:10.1073/pnas.48.3.365}}.

\bibitem{kovacsFirstOrderTheoremProving2013}
Laura Kovács and Andrei Voronkov.
\newblock First-order theorem proving and {Vampire}.
\newblock In {\em CAV}, volume 8044 of {\em LNCS}, pages 1--35. Springer, 2013.
\newblock \href {https://doi.org/10.1007/978-3-642-39799-8_1}
  {\path{doi:10.1007/978-3-642-39799-8_1}}.

\bibitem{lohreyParallelComplexityTree2001}
Markus Lohrey.
\newblock On the parallel complexity of tree automata.
\newblock In {\em RTA}, volume 2051, pages 201--215. Springer, 2001.
\newblock \href {https://doi.org/10.1007/3-540-45127-7_16}
  {\path{doi:10.1007/3-540-45127-7_16}}.

\bibitem{Maddux94}
Roger~D. Maddux.
\newblock Undecidable semiassociative relation algebras.
\newblock {\em The Journal of Symbolic Logic}, 59(02):398--418, 1994.
\newblock \href {https://doi.org/10.2307/2275397} {\path{doi:10.2307/2275397}}.

\bibitem{markovImpossibilityCertainAlgorithms1947}
A.~A. Markov.
\newblock On the impossibility of certain algorithms in the theory of
  associative systems.
\newblock {\em Doklady Akademii Nauk SSSR 55, 58}, pages 587--590, 353--356,
  1947.

\bibitem{Barrington90}
David~A. Mix~Barrington, Neil Immerman, and Howard Straubing.
\newblock On uniformity within $\mathsf{NC}^1$.
\newblock {\em Journal of Computer and System Sciences}, 41(3):274--306, 1990.
\newblock \href {https://doi.org/10.1016/0022-0000(90)90022-D}
  {\path{doi:10.1016/0022-0000(90)90022-D}}.

\bibitem{Nakamura2019}
Yoshiki Nakamura.
\newblock The undecidability of {FO3} and the calculus of relations with just
  one binary relation.
\newblock In {\em ICLA}, volume 11600 of {\em LNCS}, pages 108--120. Springer,
  2019.
\newblock \href {https://doi.org/10.1007/978-3-662-58771-3_11}
  {\path{doi:10.1007/978-3-662-58771-3_11}}.

\bibitem{nakamuraExpressivePowerSuccinctness2020}
Yoshiki Nakamura.
\newblock Expressive power and succinctness of the positive calculus of
  relations.
\newblock In {\em RAMiCS}, volume 12062 of {\em LNCS}, pages 204--220.
  Springer, 2020.
\newblock \href {https://doi.org/10.1007/978-3-030-43520-2_13}
  {\path{doi:10.1007/978-3-030-43520-2_13}}.

\bibitem{nakamuraExpressivePowerSuccinctness2022}
Yoshiki Nakamura.
\newblock Expressive power and succinctness of the positive calculus of binary
  relations.
\newblock {\em Journal of Logical and Algebraic Methods in Programming},
  127:100760, 2022.
\newblock \href {https://doi.org/10.1016/j.jlamp.2022.100760}
  {\path{doi:10.1016/j.jlamp.2022.100760}}.

\bibitem{nakamuraRepositoryFiniteVariableOccurrence2023}
Yoshiki Nakamura.
\newblock Repository for ``on the finite variable-occurrence fragment of the
  calculus of relations with bounded dot-dagger alternation'', 2023.
\newblock URL: \url{https://bitbucket.org/yoshikinakamura/k-
  vo_cor_with_bounded_dot_dagger/}.

\bibitem{nakamuraExistentialCalculiRelations2023}
Yoshiki Nakamura.
\newblock Existential calculi of relations with transitive closure: Complexity
  and edge saturations.
\newblock In {\em LICS}. IEEE, 2023 (to appear).
\newblock URL: \url{https://arxiv.org/abs/2304.12079}.

\bibitem{postRecursiveUnsolvabilityProblem1947}
Emil~L. Post.
\newblock Recursive unsolvability of a problem of {Thue}.
\newblock {\em The Journal of Symbolic Logic}, 12(1):1--11, 1947.
\newblock \href {https://doi.org/10.2307/2267170} {\path{doi:10.2307/2267170}}.

\bibitem{Pous2018}
Damien Pous.
\newblock On the positive calculus of relations with transitive closure.
\newblock In {\em STACS}, volume~96 of {\em LIPIcs}, pages 3:1--3:16. Schloss
  Dagstuhl, 2018.
\newblock \href {https://doi.org/10.4230/LIPICS.STACS.2018.3}
  {\path{doi:10.4230/LIPICS.STACS.2018.3}}.

\bibitem{Ramsey1930}
F.~P. Ramsey.
\newblock On a problem of formal logic.
\newblock {\em Proceedings of the London Mathematical Society},
  s2-30(1):264--286, 1930.
\newblock \href {https://doi.org/10.1112/plms/s2-30.1.264}
  {\path{doi:10.1112/plms/s2-30.1.264}}.

\bibitem{Tarski1941}
Alfred Tarski.
\newblock On the calculus of relations.
\newblock {\em The Journal of Symbolic Logic}, 6(3):73--89, 1941.
\newblock \href {https://doi.org/10.2307/2268577} {\path{doi:10.2307/2268577}}.

\bibitem{Tarski1987}
Alfred Tarski and Steven Givant.
\newblock {\em A Formalization of Set Theory without Variables}, volume~41.
\newblock American Mathematical Society, 1987.
\newblock \href {https://doi.org/10.1090/coll/041}
  {\path{doi:10.1090/coll/041}}.

\bibitem{Turing37}
Alan~M. Turing.
\newblock On computable numbers, with an application to the
  entscheidungsproblem.
\newblock {\em Proceedings of the London Mathematical Society},
  s2-42(1):230--265, 1937.
\newblock \href {https://doi.org/10.1112/plms/s2-42.1.230}
  {\path{doi:10.1112/plms/s2-42.1.230}}.

\end{thebibliography}
\ifiscameraready
\else
    \appendix

\section{Proof of \Cref{proposition: decidable/undecidable}}\label{section: proposition: decidable/undecidable}
(In this section, we refer, e.g.,\ \cite{Borger1997}, for standard terminologies in first-order logic.)
First, we show \Cref{proposition: decidable/undecidable} when $\vsig$ is a countably infinite set:
\begin{proposition}[\Cref{proposition: decidable/undecidable} for countably infinite set $\vsig$]\label{proposition: decidable/undecidable infinite}
    When $\vsig$ is a countably infinite set,
    the \kl{equational theory} of $\CoRSigma_{n}$ (resp.\ $\CoRPi_{n}$) is decidable if $n \le 1$ and is undecidable if $n \ge 2$.
\end{proposition}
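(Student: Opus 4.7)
The plan is to use \Cref{proposition: dot-dagger} as a bridge between equations in $\CoRSigma_n$ (resp.\ $\CoRPi_n$) and validity or satisfiability problems for first-order sentences at the matching level of the quantifier alternation hierarchy of the three-variable fragment of first-order logic with equality; this reduces the \kl{equational theory} to known results about prefix-vocabulary classes.

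For the decidability case ($n \le 1$), given $\term[1], \term[2] \in \CoRSigma_1$ I would first invoke \Cref{proposition: dot-dagger} to obtain $\Sigma_1$ formulas $\varphi_1(x, y)$ and $\varphi_2(x, y)$ such that $\jump{\term[1]}_{\model} = \set{\tuple{u, v} \mid \model \models \varphi_1(u, v)}$, and similarly for $\term[2]$; then rewrite $\term[1] \simalgclassx{\REL} \term[2]$ as validity of $\forall x \forall y\, (\varphi_1 \leftrightarrow \varphi_2)$. Writing $\varphi_i = \exists \vec{z}_i\, \psi_i$ with $\psi_i$ quantifier-free, each of the two implication directions prenexes to a $\forall^* \exists^*$ sentence, whose negation sits in the \kl{BSR class} $[\exists^* \forall^*, \mathrm{all}]_=$ and is thus decidable by \cite{Bernays1928, Ramsey1930}. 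The cases $n = 0$ (quantifier-free) and $\CoRPi_1$ (dual by complement) are then immediate.

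For the undecidability case ($n \ge 2$), I would reduce from the undecidable reduction class $[\forall \exists \land \forall^3, (\omega, 1)]$ of \cite[Cor.\ 3.1.19]{Borger1997}. A sentence $\fml[1]$ in this class is a conjunction of Krom-type binary clauses over countably many binary relation symbols; after a standard renaming of bound variables, each clause uses at most three variables simultaneously, so $\fml[1]$ lies in the $\Pi_2$ level of FO$^3$. The backwards direction of \Cref{proposition: dot-dagger} then produces a term in $\CoRPi_2$ (dually $\CoRSigma_2$), turning satisfiability of $\fml[1]$ into inequivalence of two such terms; the countable infinitude of $\vsig$ is essential for accommodating the $\omega$ binary predicates of $\fml[1]$. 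Upward propagation to $n \ge 3$ follows from $\CoRSigma_n \cup \CoRPi_n \subseteq \CoRSigma_{n+1} \cap \CoRPi_{n+1}$.

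The main obstacle I anticipate is the normalization step that brings the Krom reduction class into three-variable $\Pi_2$ shape without inflating the alternation depth. Although this is essentially folklore---three variables suffice once one carefully recycles bound names across the binary clauses---carrying out the translation precisely and verifying that the resulting CoR term lands in $\CoRPi_2$ rather than a higher level is where most of the bookkeeping lies.
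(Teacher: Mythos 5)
Your proposal follows essentially the same route as the paper: for $n \le 1$ it uses \Cref{proposition: dot-dagger} to turn both terms into $\Sigma_1$ formulas of the three-variable fragment, expresses equivalence as validity of $\forall z_1 z_2(\fml[1] \leftrightarrow \fml[2])$, and lands in the \kl{BSR class} after prenexing and negating; for $n \ge 2$ it reduces from the same conservative reduction class $[\forall\exists\land\forall^3,(\omega,1)]$ via the reverse translation. One correction on the latter: this is a prefix-vocabulary class, not a Krom class---its members are conjunctions $(\forall x\,\exists y\,\fml[1])\land(\forall x y z\,\fml[2])$ with $\fml[1],\fml[2]$ quantifier-free over $\omega$ unary and one binary predicate (the paper immediately passes to $(0,\omega)$ to drop the unary symbols). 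Such a sentence is therefore already a three-variable $\Pi_2$ sentence as given, so the ``normalization of Krom clauses'' obstacle you anticipate does not arise; the paper simply takes the $\CoRSigma_2$ terms $\term[1],\term[2]$ corresponding to $\lambda z_1 z_2.(\forall x y z\,\fml[2])$ and $\lambda z_1 z_2.(\exists x\,\forall y\,\lnot\fml[1])$ and encodes the (un)satisfiability question as the single equation $\term[1]\cup\term[2]\sim_{\REL}\term[2]$, which is the only small piece of bookkeeping actually needed.
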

\begin{proof}
    It suffices to consider $\CoRSigma_{n}$.
    Case $n \le 1$:
    Let $\term[1], \term[2] \in \CoRSigma_{1}$.
    Let $z_1, z_2$ be distinct variables of first-order formulas.
    By using the recursive translation of \Cref{proposition: dot-dagger} \cite{nakamuraExpressivePowerSuccinctness2022},
    let $\fml[1]$ (resp.\ $\fml[2]$) be the formula in the level $\Sigma_{1}$ of the three variable fragment of first-order logic with equality with two free variables $z_1, z_2$
    such that (w.r.t.\ binary relations) $\lambda z_1 z_2. \fml[1]$ (resp.\ $\lambda z_1 z_2. \fml[2]$)\footnote{We use $\lambda z_1 z_2. \fml[1]$ to denote the term having the semantics $\jump{\lambda z_1 z_2. \fml[1]}_{\model} \defeq \set{\tuple{x_1, x_2} \in |\model|^2 \mid \mbox{$\fml$ holds on $\model$ with all valuations mapping $z_1$ into $x_1$ and $z_2$ into $x_2$}}$ on a structure $\model$. (In \cite{nakamuraExpressivePowerSuccinctness2022}, the notation ``$[\fml]_{z_1 z_2}$'' is used instead of $\lambda z_1 z_2. \fml[1]$.)} is semantically equivalent to $\term[1]$ (resp.\ $\term[2]$) over $\REL$.
    Then, $\term[1] \sim_{\REL} \term[2]$ $\iff$ the formula $\forall z_1 z_2, \fml[1] \leftrightarrow \fml[2]$ is valid over $\REL$.
    By taking the prenex normal form\footnote{$\REL$ does not contain the empty structure; thus we can take the prenex normal form.},
    let $\fml[1] \sim_{\REL} \exists x_1 \dots x_n, \fml[1]_0$ (resp.\ $\fml[2]  \sim_{\REL} \exists y_1 \dots y_m, \fml[2]_0$), where $\fml[1]_0$ and $\fml[2]_0$ are quantifier-free formulas.
    Then,
    \begin{align*}
        \term[1] \not\sim_{\REL} \term[2]
         & \iff \lnot (\forall z_1 z_2, \fml[1] \leftrightarrow \fml[2]) \mbox{ is satisfiable}                                                     \\
         & \iff (\exists z_1 z_2, \fml[1] \land \lnot \fml[2]) \lor (\exists z_1 z_2, \lnot \fml[1] \land \fml[2]) \mbox{ is satisfiable}           \\
         & \iff (\exists z_1 z_2, (\exists x_1 \dots x_n, \fml_0) \land (\forall y_1 \dots y_m, \lnot \fml[2]_0))                                   \\
         & \qquad \lor (\exists z_1 z_2, (\exists y_1 \dots y_m, \lnot \fml[1]_0) \land (\forall x_1 \dots x_n, \fml[2]_0)) \mbox{ is satisfiable}.
    \end{align*}
    By taking the prenex normal form, the last sentence is equivalent to a sentence in the \kl{BSR class} $[\exists^*\forall^*, \mathrm{all}]_{=}$.
    Because the satisfiability problem of the \kl{BSR class} is decidable \cite{Bernays1928,Ramsey1930},
    the \kl{equational theory} of $\CoRSigma_{1}$ over $\REL$ is decidable.

    Case $n \ge 2$:
    The class $[\forall \exists \land \forall^3, (\omega, 1)]$ (hence, $[\forall \exists \land \forall^3, (0, \omega)]$) is a conservative reduction class \cite[Cor.\  3.1.19]{Borger1997}; thus, the satisfiability problem of the class $[\forall \exists \land \forall^3, (0, \omega)]$ is undecidable.
    Let $(\forall x, \exists y, \fml[1]) \land (\forall x y z, \fml[2])$ be a sentence in the class $[\forall \exists \land \forall^3, (0, \omega)]$, where $\fml[1], \fml[2]$ are quantifier-free.
    Since $(\forall x y z, \fml[2])$ and $(\exists x, \forall y, \lnot \fml[1])$ are in the level $\Sigma_{2}$ of the three variable fragment of first-order logic,
    by using the recursive translation of \Cref{proposition: dot-dagger} \cite{nakamuraExpressivePowerSuccinctness2022},
    let $\term[1]$ (resp.\ $\term[2]$) be the term in $\CoRSigma_{2}$
    such that $\lambda z_1 z_2. (\forall x y z, \fml[2])$ (resp.\ $\lambda z_1 z_2. \exists x, \forall y, \lnot \fml[1]$) is semantically equivalent to $\term[1]$ (resp.\ $\term[2]$) over $\REL$, where $z_1, z_2$ are any pairwise distinct variables.
    Then, we have:
    \begin{align*}
         & (\forall x, \exists y, \fml[1]) \land (\forall x y z, \fml[2]) \mbox{ is satisfiable} \iff \lnot((\forall x, \exists y, \fml[1]) \land (\forall x y z, \fml[2])) \mbox{ is valid} \\
         & \iff (\forall x y z, \fml[2]) \to (\exists x, \forall y, \lnot \fml[1]) \mbox{ is valid}                                                                                          \\
         & \iff ((\forall x y z, \fml[2]) \lor (\exists x, \forall y, \lnot \fml[1])) \leftrightarrow (\exists x, \forall y, \lnot \fml[1]) \mbox{ is valid}                                 \\
         & \iff \term[1] \cup \term[2] \sim_{\REL} \term[2].
    \end{align*}
    Thus, as we can give a reduction from the satisfiability problem of the class $[\forall \exists \land \forall^3, (0, \omega)]$,
    the \kl{equational theory} of $\CoRSigma_{2}$ over $\REL$ is undecidable.
\end{proof}

\subsection{Proof of \Cref{proposition: decidable/undecidable} when $\vsig$ is finite}

\begin{lemma}\label{lemma: elim converse}
    Assume that $\vsig$ is a countably infinite set.
    Then, the \kl{equational theory} over $\REL$ is undecidable
    for \emph{complement-free $\CoRPi_{2}$} (i.e., the class of terms in $\CoRPi_{2}$ not having the complement operator ($-$)).
\end{lemma}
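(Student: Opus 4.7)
The plan is to adapt the reduction of Proposition~\ref{proposition: decidable/undecidable infinite} to produce equations that avoid the complement operator, so that the undecidability established there carries over to complement-free $\CoRPi_{2}$. Recall that Proposition~\ref{proposition: decidable/undecidable infinite} encodes satisfiability of sentences of the form $(\forall x \exists y,\fml[1]) \land (\forall xyz,\fml[2])$ from the conservative reduction class $[\forall\exists \land \forall^3, (\omega, 1)]$ as an equation $\term[1] \cup \term[2] \sim_{\REL} \term[2]$ in $\CoRSigma_{2}$, and by the involution $\term \mapsto \term^{-}$ the same reduction yields undecidability for $\CoRPi_{2}$. However, under the standard FO-to-$\CoR$ translation underlying Proposition~\ref{proposition: dot-dagger}, each negated atomic predicate $\lnot R(u, v)$ in $\fml[1]$ or $\fml[2]$ translates to a complemented variable $a^{-}$ in $\CoR$, so the resulting terms use $-$ essentially.

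I would proceed in three stages. First, using De Morgan together with the operator dualities $\cup \leftrightarrow \cap$, $\cdot \leftrightarrow \dagger$, $\const{I} \leftrightarrow \const{D}$, $\top \leftrightarrow \bot$, and the fact that $\pi$ commutes with $-$, any term in $\CoRPi_{2}$ can be rewritten so that complement appears only on variable atoms. Second, for each variable $a \in \vsig$ that occurs negated, introduce a fresh variable $\overline{a} \in \vsig$ and substitute $a^{-} \mapsto \overline{a}$, producing complement-free terms $\widetilde{\term[1]}, \widetilde{\term[2]} \in \CoRPi_{2}$. Third, recover the original equation by enforcing the complement axioms $a \cup \overline{a} \sim_{\REL} \top$ and $a \cap \overline{a} \sim_{\REL} \bot$ for every introduced $\overline{a}$, combining these with the main equation into a single equation in complement-free $\CoRPi_{2}$.

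The main obstacle is the totality axiom $a \cup \overline{a} \sim_{\REL} \top$: its failure is the event ``some pair lies in neither $a$ nor $\overline{a}$'', which is anti-monotone in $a$ and $\overline{a}$, and so cannot be detected by any complement-free term. Indeed, every complement-free $\CoRPi_{2}$ term is monotone in each variable, by a straightforward induction on term structure using that $\cup$, $\cap$, $\cdot$, $\dagger$, and $\pi$ (together with the constants $\const{I}, \const{D}, \top, \bot$) are all monotone. This rules out the naive ``combined equation'' gadget at the $\CoR$ level. To sidestep this I would instead reshape the first-order source: I would retrace Borger et al.'s conservative reduction to $[\forall\exists \land \forall^3, (\omega, 1)]$ and show that it can be arranged so that atomic predicates occur only \emph{positively} in the matrices $\fml[1], \fml[2]$, while equalities and inequalities between variables are still permitted (these are handled in $\CoR$ via $\const{I}$ and $\const{D}$, requiring no use of $-$). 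Under such a restricted input, the translation of Proposition~\ref{proposition: dot-dagger} yields only complement-free $\CoRPi_{2}$ terms, and the equational reduction of Proposition~\ref{proposition: decidable/undecidable infinite} then goes through verbatim. The hardest step is verifying that this restricted class is still a conservative reduction class: one must introduce auxiliary predicates $\overline{R}$ for each atomic $R$ appearing negated, absorb the constraint ``$\overline{R} = \lnot R$'' into the $\forall^3$ prefix by axioms of the form $R(x, y) \lor \overline{R}(x, y)$ (totality) and $\lnot(R(x, y) \land \overline{R}(x, y))$ (disjointness), and then---either by an explicit short-circuit gadget, or by showing that the undecidable problem remains undecidable even if the totality axiom is weakened to a positive consequence---arrange that negated atoms disappear entirely from the final matrix.
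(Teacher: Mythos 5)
There is a genuine gap, and it lies exactly where you decided to abandon the direct gadget. Your first two stages (complement normal form, then replacing each $a^{-}$ by a fresh variable $\bar{a}$) coincide with the paper's proof. But your claim that the ``combined equation'' gadget is ruled out by monotonicity is incorrect, and the detour you propose instead---re-engineering the conservative reduction class $[\forall\exists \land \forall^3,(\omega,1)]$ so that atoms occur only positively---is left unverified at precisely the step you yourself call the hardest. The monotonicity objection fails because the gadget does not need to \emph{detect the failure} of totality (which is indeed anti-monotone and inexpressible without complement); it only needs to \emph{relativize the equation to models where the axioms hold}, and both relevant events are monotone. Concretely, the paper sets
\begin{align*}
\term[3] \defeq \bot \dagger \Bigl(\bigcap_{a} (a \cup \bar{a})\Bigr) \dagger \bot,
\qquad
\term[3]' \defeq \top \cdot \Bigl(\bigcup_{a} (a \cap \bar{a})\Bigr) \cdot \top,
\end{align*}
and compares $(\term[1]' \cap \term[3]) \cup \term[3]'$ with $(\term[2]' \cap \term[3]) \cup \term[3]'$. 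Here $\term[3]$ evaluates to $|\model|^2$ when every $a \cup \bar{a}$ is total and to $\emptyset$ otherwise---``totality holds everywhere'' is an upward-closed, hence monotone, property of the interpretations, and the $\bot \dagger (\cdot) \dagger \bot$ construction turns it into a $\top$/$\bot$-valued complement-free term in $\CoRPi_1$. Dually, $\term[3]'$ is $|\model|^2$ exactly when disjointness fails somewhere, again a monotone event, expressed in $\CoRSigma_1$. On any model violating totality both sides collapse to $\term[3]'$, and on any model violating disjointness both sides become $|\model|^2$; so a counter-model must satisfy $\bar{a} = a^{-}$, and conversely any counter-model for $\term[1] \not\sim_{\REL} \term[2]$ extends to one for the gadget equation by setting $\bar{a}^{\model} \defeq |\model|^2 \setminus a^{\model}$. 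Both combined terms stay in complement-free $\CoRPi_2$.

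In short: your instinct that every complement-free term is monotone is right, but the conclusion you draw from it is wrong, because the reduction never needs to express an anti-monotone event. Having discarded the workable construction, your proposal rests entirely on the unproven claim that the positive-matrix restriction of $[\forall\exists \land \forall^3,(\omega,1)]$ remains a conservative reduction class; as written, that claim circles back to the same totality problem (``$R(x,y) \lor \overline{R}(x,y)$'' is not a positive consequence you can simply weaken away), so the proof is incomplete.
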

\begin{proof}
    We give a reduction from that \kl{equational theory} of $\CoRPi_{2}$ (which is undecidable by \Cref{proposition: decidable/undecidable}).
    By taking the complement normal form (cf.\ \Cref{lemma: complement normal form}),
    it suffices to give a reduction from that \kl{equational theory} of terms in $\CoRPi_{2}$ such that the complement operator ($-$) only applies to variables.
    Let $\term[1], \term[2]$ be such terms.
    Let $\vsig_{\set{\term[1], \term[2]}}$ be the finite set of all variables occurring in $\term[1]$ or $\term[2]$.
    Let $\bar{\bullet} \colon \vsig_{\set{\term[1], \term[2]}} \to (\vsig \setminus \vsig_{\set{\term[1], \term[2]}})$ be an injective map.
    Let $\term[1]', \term[2]', \term[3], \term[3]'$ be the terms defined as follows:
    \begin{align*}
        \term[1]' & \defeq \mbox{ the term $\term[1]$ in which each occurrence $a^{-}$ has been replaced with the variable $\bar{a}$} \\
        \term[2]' & \defeq \mbox{ the term $\term[2]$ in which each occurrence $a^{-}$ has been replaced with the variable $\bar{a}$} \\
        \term[3]  & \defeq \bot \dagger (\bigcap_{a \in \vsig_{\set{\term[1], \term[2]}}} (a \cup \bar{a})) \dagger \bot              \\
        \term[3]' & \defeq \top \cdot (\bigcup_{a \in \vsig_{\set{\term[1], \term[2]}}} (a \cap \bar{a})) \cdot \top.
    \end{align*}
    ($\term[3]$ and $\term[3]'$ are used to force $\jump{a}_{\model} \cup \jump{\bar{a}}_{\model} = |\model|^2$ and $\jump{a}_{\model} \cap \jump{\bar{a}}_{\model} = \emptyset$, respectively.)
    Then, we have:
    \begin{align*}
        \term[1] \not\sim_{\REL} \term[2] & \iff (\term[1]' \cap \term[3]) \cup \term[3]' \not\sim_{\REL} (\term[2]' \cap \term[3]) \cup \term[3]'.
    \end{align*}
    $\Longrightarrow$:
    Let $\model$ be s.t.\ $\jump{\term[1]}_{\model} \neq \jump{\term[2]}_{\model}$.
    Let $\model'$ be the structure $\model$ in which $\bar{a}^{\model'} \defeq |\model'|^2 \setminus a^{\model'}$.
    Then, $\jump{\term[1]}_{\model} = \jump{\term[1]'}_{\model'}$ and $\jump{\term[2]}_{\model} = \jump{\term[2]'}_{\model'}$ hold by $\jump{a^{-}}_{\model} = \jump{\bar{a}}_{\model'}$.
    Also by $\bar{a}^{\model'} = |\model'|^2 \setminus a^{\model'}$, $\jump{\term[3]}_{\model'} = |\model'|^2$ and $\jump{\term[3]'}_{\model'} = \emptyset$ hold.
    Thus, $\jump{(\term[1]' \cap \term[3]) \cup \term[3]'}_{\model'} = \jump{\term[1]'}_{\model'} = \jump{\term[1]}_{\model} \neq \jump{\term[2]}_{\model} = \jump{\term[2]'}_{\model'} = \jump{(\term[2]' \cap \term[3]) \cup \term[3]'}_{\model'}$.
    $\Longleftarrow$:
    Let $\model$ be s.t.\ $\jump{(\term[1]' \cap \term[3]) \cup \term[3]'}_{\model} \neq \jump{(\term[2]' \cap \term[3]) \cup \term[3]'}_{\model}$.
    Then, we have $\jump{a \cap \bar{a}}_{\model} = \emptyset$ for $a \in \vsig_{\set{\term[1], \term[2]}}$
    ($\because$ If we assume $\jump{a \cap \bar{a}}_{\model} \neq \emptyset$,
    then $\jump{\term[3]'}_{\model} = |\model|^2$, and thus $\jump{(\term[1]' \cap \term[3]) \cup \term[3]'}_{\model} = |\model|^2 = \jump{(\term[2]' \cap \term[3]) \cup \term[3]'}_{\model}$;
    so reaching a contradiction); thus $\jump{\term[3]'}_{\model} = \emptyset$.
    Similarly, we have $\jump{a \cup \bar{a}}_{\model} = |\model|^2$ for $a \in \vsig_{\set{\term[1], \term[2]}}$
    ($\because$ If we assume $\jump{a \cup \bar{a}}_{\model} \neq |\model|^2$;
    then $\jump{\term[3]}_{\model} = \emptyset$, and thus $\jump{(\term[1]' \cap \term[3]) \cup \term[3]'}_{\model} = \emptyset = \jump{(\term[2]' \cap \term[3]) \cup \term[3]'}_{\model}$;
    so reaching a contradiction); thus $\jump{\term[3]}_{\model} = |\model|^2$.
    By $\jump{a \cap \bar{a}}_{\model} = \emptyset$ and $\jump{a \cup \bar{a}}_{\model} = |\model|^2$, we have $\jump{a^{-}}_{\model} = \jump{\bar{a}}_{\model}$,
    and thus $\jump{\term[1]}_{\model} = \jump{\term[1]'}_{\model}$ and $\jump{\term[2]}_{\model} = \jump{\term[2]'}_{\model}$.
    Thus, $\jump{\term[1]}_{\model} = \jump{\term[1]'}_{\model} = \jump{(\term[1]' \cap \term[3]) \cup \term[3]'}_{\model} \neq \jump{(\term[2]' \cap \term[3]) \cup \term[3]'}_{\model} = \jump{\term[2]'}_{\model} = \jump{\term[2]}_{\model}$.

    Finally, because $\term[1]', \term[2]' \in \CoRPi_{2}$, $\term[3] \in \CoRPi_{1}$, $\term[3]' \in \CoRSigma_{1}$,
    we have $((\term[1]' \cap \term[3]) \cup \term[3]'), ((\term[2]' \cap \term[3]) \cup \term[3]') \in \CoRPi_{2}$.
    Hence, this completes the proof.
\end{proof}

\newcommand{\trans}{\mathcal{T}}
\begin{lemma}\label{lemma: elim converse 2}
    If $\vsig$ is a non-empty set,
    the \kl{equational theory} of $\CoRSigma_{2}$ is undecidable.
\end{lemma}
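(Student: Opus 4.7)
I would prove this by reducing from the undecidability of the equational theory of complement-free $\CoRPi_{2}$ over a countably infinite variable set (Lemma~\ref{lemma: elim converse}) down to the equational theory of $\CoRSigma_{2}$ over a single fixed variable $a_0 \in \vsig$. The overall shape is: given an equation $\term[1] = \term[2]$ in complement-free $\CoRPi_{2}$ with variables $a_1, \ldots, a_m$, produce an effectively equivalent $\CoRSigma_{2}$ equation $\trans(\term[1]) = \trans(\term[2])$ over $\{a_0\}$.

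The core technical ingredient is an encoding of the $m$ source variables into one target variable. For each $i$, I would design an ``extraction'' term $E_i \in \CoRSigma_{2}$ over $\{a_0\}$ such that every structure $\model$ over $\{a_1, \ldots, a_m\}$ lifts to an encoded structure $\model'$ over $\{a_0\}$ with $\jump{E_i}_{\model'}$ recovering $\jump{a_i}_{\model}$ on an identified subset of $|\model'|$, and conversely every $\model'$ satisfying a sanity-check subterm decodes back to a valid $\model$. Following the variant of \cite[Lem.~11]{Nakamura2019} mentioned in the proof sketch of Proposition~\ref{proposition: decidable/undecidable}, a natural device is to attach to each $a_i$-edge a gadget whose shape depends on~$i$---for example, a chain of $i$ auxiliary vertices with a distinguished local $a_0$-pattern. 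The term $E_i$ is then a short composition that walks along this chain; crucially, the gadgets must be designed so that $E_i$ needs composition on the outside and at most a single inner dagger, ensuring $E_i \in \CoRSigma_{2}$.

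The translation $\trans$ is extended homomorphically by $\trans(a_i) = E_i$, commuting with $\cup, \cap, \dagger, \cdot, \pi$. Because the source is complement-free, no $E_i$ ever appears under a complement---this is essential, since a complement of $E_i$ in the encoded setting would be taken relative to the whole enlarged universe (including gadget vertices) and would no longer correspond to the complement of $a_i^\model$ in $|\model|^2$. I would then conjoin a fixed well-formedness clause $\term[3]$ (in the style of $\term[3]$ and $\term[3]'$ in the proof of Lemma~\ref{lemma: elim converse}) to restrict attention to valid encodings, and finally pass from the resulting $\CoRPi_{2}$ equation to an equivalent $\CoRSigma_{2}$ equation by the De~Morgan/duality trick used at the end of the proof of Proposition~\ref{proposition: decidable/undecidable}; since $\term[1], \term[2] \in \CoRPi_{2}$ their complements lie in $\CoRSigma_{2}$, and the sanity-check clause can be written so that the complemented equation remains in $\CoRSigma_{2}$.

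The main obstacle is the design of the $E_i$: one must simultaneously achieve (i) \emph{expressiveness}, so that all $E_i$ are pairwise semantically distinguishable for arbitrarily large $m$; (ii) \emph{syntactic economy}, so that each $E_i$ stays in $\CoRSigma_{2}$ despite encoding an $i$-dependent amount of information; and (iii) \emph{robustness}, so that the sanity-check clause together with the encoding rules out any spurious structure that could make $\trans(\term[1]) \simalgclassx{\REL} \trans(\term[2])$ for reasons unrelated to $\term[1] \simalgclassx{\REL} \term[2]$. Once a suitable gadget is fixed, bidirectional correctness of the reduction---each counterexample $\model$ for $\term[1] = \term[2]$ lifts to a counterexample $\model'$ for $\trans(\term[1]) = \trans(\term[2])$, and each $\model'$ violating $\trans(\term[1]) = \trans(\term[2])$ (together with the sanity-check) decodes to some $\model$ violating $\term[1] = \term[2]$---follows by a routine induction on the structure of the source term.
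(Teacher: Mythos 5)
Your overall strategy---reduce from complement-free $\CoRPi_{2}$ over countably many variables by encoding each $a_i$ as a length-$i$ gadget walk over a single variable, in the style of \cite[Lem.~11]{Nakamura2019}---is the route the paper takes (it uses $a_i \mapsto a^{\Delta} \cap (a \cdot ((a^{-} \cap \const{I}) \cdot a)^{i})$, where $a^{\Delta} \defeq a^{\set{1 \mapsto 1, 2 \mapsto 1}} \cap a^{\set{1 \mapsto 2, 2 \mapsto 2}}$ marks pairs of ``real'', i.e.\ self-looped, vertices). But two steps of your plan would fail as written. First, you allow the extraction terms $E_i$ to sit in $\CoRSigma_{2}$ with ``a single inner dagger''. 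That is too weak: the source $\CoRPi_{2}$ terms apply $\dagger$ to subterms containing variables, and by \Cref{definition: dot-dagger} a dagger of two properly $\CoRSigma_{2}$ arguments lands in $\CoRPi_{3}$. With $E_i \in \CoRSigma_{2} \setminus \CoRSigma_{1}$ the translated terms would therefore only be in $\CoRPi_{3}$, proving undecidability one level too high. The encoding must keep $E_i$ in $\CoRSigma_{1}$---the paper's $E_i$ is a pure composition of $\CoRSigma_{0}$ terms, with no dagger at all---and this is precisely what makes the level-preservation claim go through.

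Second, extending the translation homomorphically over $\dagger$ and then conjoining one global well-formedness clause is unsound. The operator $\dagger$ quantifies universally over the entire universe of the encoded structure, which now contains the gadget vertices; even on a perfectly valid encoding, an inner $\dagger$ in the translated term ranges over those extra vertices and its value changes. A top-level conjunct in the style of $\term[3]$, $\term[3]'$ from the proof of the previous lemma cannot repair the semantics of a dagger buried inside the term. What is needed is relativization threaded through the translation itself: the paper sends $\term[2] \dagger \term[3]$ to $a^{\Delta} \cap (((a^{\Delta})^{-} \cup \cdots) \dagger ((a^{\Delta})^{-} \cup \cdots))$ and likewise relativizes $\top$, $\const{I}$, $\const{D}$ and the projections (carefully, so as not to increase the alternation). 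Once this is done your worry (iii) about spurious structures disappears with no sanity-check clause at all: the paper proves the decoding $\model[2] \mapsto \model[2]^{\mathrm{t}}$ is faithful ($\jump{\term}_{\model[2]^{\mathrm{t}}} = \jump{\mathcal{T}(\term)}_{\model[2]}$) and surjective onto all multi-variable structures, which immediately yields both directions of the reduction.
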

\begin{proof}
    Let $\vsig' = \set{a_1, a_2, \dots}$ be a countably infinite set.
    We give a reduction from the \kl{equational theory} of complement-free $\CoRPi_{2}$ over $\vsig'$ (cf.\ \Cref{lemma: elim converse}).
    Let $a^{\Delta} \defeq (a^{\set{1 \mapsto 1, 2 \mapsto 1}} \cap a^{\set{1 \mapsto 2, 2 \mapsto 2}})$;
    note that $\jump{a^{\Delta}}_{\model} = \set{\tuple{x, y} \mid \tuple{x, x}, \tuple{y,y} \in \jump{a}_{\model}}$ for any $\model$.
    Let $\trans(\term[1])$ be the term defined by:
    \begin{align*}
        \trans(a_i)                       & \defeq a^{\Delta} \cap (a \cdot ((a^{-} \cap \const{I}) \cdot a)^{i})      \mbox{\quad  for $i \ge 1$}              &
        \trans(\term[2]^{\pi})            & \defeq a^{\Delta} \cap \trans(\term[2])^{\pi}                                                                         \\
        \trans(\top)                      & \defeq a^{\Delta}                                                                                                   &
        \trans(\term[2] \cup \term[3])    & \defeq \trans(\term[2]) \cup \trans(\term[3])                                                                         \\
        \trans(\bot)                      & \defeq \bot                                                                                                         &
        \trans(\term[2] \cap \term[3])    & \defeq \trans(\term[2]) \cap \trans(\term[3])                                                                         \\
        \trans(\const{I})                 & \defeq a^{\Delta} \cap \const{I}                                                                                    &
        \trans(\term[2] \cdot \term[3])   & \defeq \trans(\term[2]) \cdot \trans(\term[3])                                                                        \\
        \trans(\const{D})                 & \defeq a^{\Delta} \cap \const{D}                                                                                    &
        \trans(\term[2] \dagger \term[3]) & \defeq a^{\Delta} \cap (((a^{\Delta})^{-} \cup \trans(\term[2])) \dagger ((a^{\Delta})^{-} \cup \trans(\term[3]))).
    \end{align*}
    (Cf.\ \cite[Lem.\ 11]{Nakamura2019}.
    This construction can be viewed as a variant of the \emph{relativization} in logic---the translations $(\exists x, \fml) \leadsto (\exists x, A(x) \land \fml)$ and
    $(\forall x, \fml) \leadsto (\forall x, A(x) \to \fml$).
    The construction above is refined not to increase the dot-dagger alternation.)
    \begin{claim}\label{claim: preserve k}
        For every $\term[1]$ such that the complement operator does not occur in $\term[1]$,
        \begin{itemize}
            \item  if $\term[1] \in \CoRPi_0$, then $\trans(\term[1]) \in \CoRSigma_1$;
            \item  if $\term[1] \in \CoRSigma_1$, then $\trans(\term[1]) \in \CoRSigma_1$;
            \item  if $\term[1] \in \CoRPi_2$, then $\trans(\term[1]) \in \CoRPi_2$.
        \end{itemize}
    \end{claim}
    \begin{claimproof}
        By straightforward induction on $\term[1]$ using $\trans(a_i) \in \CoRSigma_{1}$.
        Note that $a^{\Delta} \in \CoRSigma_0 (= \CoRPi_0)$.
    \end{claimproof}
    For a structure $\model$, let $\model^{\mathrm{t}}$ be the structure defined by:
    \begin{align*}
        |\model^{\mathrm{t}}| & \defeq \set{x \mid \tuple{x, x} \in a^{\model}} & a_i^{\model^{\mathrm{t}}} & \defeq \jump{\trans(a_i)}_{\model} \mbox{\quad ($a_i \in \vsig'$)}.
    \end{align*}
    Then, the following two hold:
    \begin{claim}\label{claim: elim converse 2 faithful}
        For every $\model$ and $\term$, $\jump{\term}_{\model^{\mathrm{t}}} = \jump{\trans(\term)}_{\model}$.
    \end{claim}
    \begin{claimproof}
        By induction on $\term$.
        Case $\term = \top$:
        \begin{align*}
            \jump{\top}_{\model^{\mathrm{t}}} = |\model^{\mathrm{t}}|^2 & = \jump{a^{\Delta}}_{\model} \tag{By Def.\ of $\model^{\mathrm{t}}$} \\
                                                                        & = \jump{\trans(\top)}_{\model}.
        \end{align*}
        Case $\term = \bot, \const{I}, \const{D}$:
        As with the case of $\term = \top$.
        Case $\term = \term[2] \cdot \term[3]$:
        \begin{align*}
            \jump{\term[2] \cdot \term[3]}_{\model^{\mathrm{t}}} = \jump{\term[2]}_{\model^{\mathrm{t}}} \cdot \jump{\term[3]}_{\model^{\mathrm{t}}} & = \jump{\trans(\term[2])}_{\model} \cdot \jump{\trans(\term[3])}_{\model}       \tag{IH} \\
                                                                                                                                                     & = \jump{\trans(\term[2])\cdot \trans(\term[3])}_{\model}                                 \\
                                                                                                                                                     & = \jump{\trans(\term[2] \cdot \term[3])}_{\model}.
        \end{align*}
        Case $\term = \term[2] \cup \term[3], \term[2] \cap \term[3]$:
        As with the case of $\term = \term[2] \cdot \term[3]$.
        Case $\term = \term[2] \dagger \term[3]$ (note that the operator $\dagger$ depends on the universe):
        Note that if $\jump{\term[2]}_{\model^{\mathrm{t}}} = \jump{\trans(\term[2])}_{\model}$, we have the following:
        \begin{align*}
            \jump{\term[2]^{-}}_{\model^{\mathrm{t}}} = |\model^{\mathrm{t}}|^2 \setminus \jump{\term[2]}_{\model^{\mathrm{t}}} & =  |\model^{\mathrm{t}}|^2 \setminus \jump{\trans(\term[2])}_{\model} \tag{By assumption} \\
                                                                                                                                & = |\model^{\mathrm{t}}|^2 \cap \jump{\trans(\term[2])^{-}}_{\model}                       \\
                                                                                                                                & = \jump{a^{\Delta} \cap \trans(\term[2])^{-}}_{\model}.
        \end{align*}
        By using this, we have:
        \begin{align*}
            \jump{\term[2] \dagger \term[3]}_{\model^{\mathrm{t}}} & = |\model^{\mathrm{t}}|^2 \setminus (\jump{\term[2]^{-}}_{\model^{\mathrm{t}}} \cdot \jump{\term[3]^{-}}_{\model^{\mathrm{t}}})                                                        \\
                                                                   & = |\model^{\mathrm{t}}|^2 \setminus ( \jump{a^{\Delta} \cap \trans(\term[2])^{-}}_{\model} \cdot  \jump{a^{\Delta} \cap \trans(\term[3])^{-}}_{\model})     \tag{By IH with the above} \\
                                                                   & = \jump{a^{\Delta} \cap ((a^{\Delta} \cap \trans(\term[2])^{-}) \cdot (a^{\Delta} \cap \trans(\term[3])^{-}))^{-}}_{\model}                                                            \\
                                                                   & = \jump{a^{\Delta} \cap (((a^{\Delta})^{-} \cup \trans(\term[2])) \dagger ((a^{\Delta})^{-} \cup \trans(\term[3])))}_{\model}                                                          \\
                                                                   & = \jump{\trans(\term[2] \dagger \term[3])}_{\model}.
        \end{align*}
        Hence, this completes the proof.
    \end{claimproof}
    \begin{claim}[cf.\ {\cite[Lem.\ 11]{Nakamura2019}}]\label{claim: elim converse 2 surjective}
        For every $\model$, there is a structure $\model[2]$ such that $\model = \model[2]^{\mathrm{t}}$.
    \end{claim}
    \begin{claimproof}
        Let $\model[2] = \tuple{|\model[2]|, \set{a^{\model[2]}}_{a \in \set{a}}}$ be the structure defined by:
        \begin{align*}
            |\model[2]|   & \defeq |\model| \cup \set{\tuple{x, y, i, j} \mid \tuple{x, y} \in \jump{a_i}_{\model}, j \in \range{1, i}};                       \\
            a^{\model[2]} & \defeq \const{I}_{|\model|} \cup \set{\tuple{x, \tuple{x, y, i, 1}}, \tuple{\tuple{x, y, i, i}, y} \mid
            \tuple{x, y} \in \jump{a_i}_{\model}}                                                                                                              \\
                          & \quad \cup \set{\tuple{\tuple{x, y, i, j-1}, \tuple{x, y, i, j}} \mid   \tuple{x, y} \in \jump{a_i}_{\model}, j \in \range{2, i}}.
        \end{align*}
        Here, we assume that $|\model|$ and $\set{\tuple{x, y, i, j} \mid \tuple{x, y} \in \jump{a_i}_{\model}, j \in \range{1, i}}$ are disjoint (the other case can be shown in the same manner by renaming vertices).
        The following is an illustration of an example of the conversion from $\model[1]$ to $\model[2]$:\\
        \begin{tikzpicture}[
                baseline=-4ex
            ]
            \tikzset{fit1/.style={rounded rectangle, inner sep=0.8pt, fill=yellow!30},
                fit2/.style={red!20, fill = red!20, thick},
                fit3/.style={blue!20, fill = blue!20, thick},
                nlab/.style={font= \scriptsize, color = gray}}
            \graph[grow right = 5.0cm, branch down = 6ex, nodes={inner sep = .1cm}]{
            {0/{}[draw, circle]}-!-{1/{}[draw, circle], 2/{}[draw, circle]}
            };
            \path (0) edge [draw = white, opacity = 0] node[pos = .5, inner sep = .05cm](011){} (1);
            \path (0) edge [draw = white, opacity = 0] node[pos = .33, inner sep = .05cm](021){} (2);
            \path (0) edge [draw = white, opacity = 0] node[pos = .66, inner sep = .05cm](022){} (2);
            \path (0) edge [draw = white, opacity = 0] node[pos= 0.5, inner sep = 1.5pt, font = \scriptsize, opacity = 1](a1){$a_1$}(1);
            \path (0) edge [draw = white, opacity = 0, above left, out=150, in=120, loop] node[pos= 0.5, inner sep = 1.5pt, font = \scriptsize, opacity = 1](a2){$a_1$}(0);
            \path (0) edge [draw = white, opacity = 0] node[pos= 0.5, inner sep = 1.5pt, font = \scriptsize, opacity = 1](a3){$a_2$}(2);
            \graph[use existing nodes, edges={color=black, pos = .5, earrow}, edge quotes={fill=white, inner sep=1pt,font= \scriptsize}]{
            0 -- a1 -> 1;
            0 --[bend right] a2 ->[bend right] 0;
            0 -- a3 -> 2;
            };
        \end{tikzpicture}
        \quad
        $\leadsto$
        \begin{tikzpicture}[
                baseline=-4ex
            ]
            \tikzset{fit1/.style={rounded rectangle, inner sep=0.8pt, fill=yellow!30},
                fit2/.style={red!20, fill = red!20, thick},
                fit3/.style={blue!20, fill = blue!20, thick},
                nlab/.style={font= \scriptsize, color = gray}}
            \graph[grow right = 5.0cm, branch down = 6ex, nodes={inner sep = .1cm}]{
            {0/{}[draw, circle]}-!-{1/{}[draw, circle], 2/{}[draw, circle]}
            };
            \path (0) edge [draw = white, opacity = 0] node[pos = .5, inner sep = .05cm, draw=black, circle, opacity = 1](011){} (1);
            \path (0) edge [draw = white, opacity = 0] node[pos = .33, inner sep = .05cm, draw=black, circle, opacity = 1](021){} (2);
            \path (0) edge [draw = white, opacity = 0] node[pos = .66, inner sep = .05cm, draw=black, circle, opacity = 1](022){} (2);
            \path (0) edge [draw = white, opacity = 0, above left, out=150, in=120, loop, looseness=30] node[pos= 0.5, inner sep = .05cm, font = \scriptsize, draw=black, circle, opacity = 1](001){}(0);
            \path (0) edge [draw = white, opacity = 0, above right, out=30, in=60, loop, looseness=20] node[pos = 0.5, inner sep = 1.5pt, font = \scriptsize, opacity = 1](0l){$a$}(0);
            \path (1) edge [draw = white, opacity = 0, above right, out=30, in=60, loop, looseness=20] node[pos = 0.5, inner sep = 1.5pt, font = \scriptsize, opacity = 1](1l){$a$}(1);
            \path (2) edge [draw = white, opacity = 0, above right, out=30, in=60, loop, looseness=20] node[pos = 0.5, inner sep = 1.5pt, font = \scriptsize, opacity = 1](2l){$a$}(2);
            \path (0) edge [draw = white, opacity = 0] node[pos= 0.5, inner sep = 1.5pt, font = \scriptsize, opacity = 1](a11){$a$}(011);
            \path (011) edge [draw = white, opacity = 0] node[pos= 0.5, inner sep = 1.5pt, font = \scriptsize, opacity = 1](a12){$a$}(1);
            \path (0) edge [draw = white, opacity = 0, bend left] node[pos= 0.5, inner sep = 1.5pt, font = \scriptsize, opacity = 1](a21){$a$}(001);
            \path (001) edge [draw = white, opacity = 0, bend left] node[pos= 0.5, inner sep = 1.5pt, font = \scriptsize, opacity = 1](a22){$a$}(0);
            \path (0) edge [draw = white, opacity = 0] node[pos= 0.5, inner sep = 1.5pt, font = \scriptsize, opacity = 1](a31){$a$}(021);
            \path (021) edge [draw = white, opacity = 0] node[pos= 0.5, inner sep = 1.5pt, font = \scriptsize, opacity = 1](a32){$a$}(022);
            \path (022) edge [draw = white, opacity = 0] node[pos= 0.5, inner sep = 1.5pt, font = \scriptsize, opacity = 1](a33){$a$}(2);
            \graph[use existing nodes, edges={color=black, pos = .5, earrow}, edge quotes={fill=white, inner sep=1pt,font= \scriptsize}]{
            0 -- a11 -> 011; 011 -- a12 -> 1;
            0 -- a21 -> 001; 001 -- a22 -> 0;
            0 -- a31 -> 021; 021 -- a32 -> 022; 022 -- a33 -> 2;
            0 --[bend right] 0l ->[bend right] 0;
            1 --[bend right] 1l ->[bend right] 1;
            2 --[bend right] 2l ->[bend right] 2;
            };
        \end{tikzpicture}

        \noindent By the construction of $\model[2]$, we have:
        \begin{align*}
            \set{x \mid \tuple{x, x} \in a^{\model[2]}} & = |\model|      &
            \jump{\mathcal{T}(a_i)}_{\model[2]}         & = a_i^{\model}.
        \end{align*}
        Thus we have $|\model[2]^{\mathrm{t}}| = |\model|$ and $a_i^{\model[2]^{\mathrm{t}}} = a_i^{\model}$.
        Hence, $\model[2]^{\mathrm{t}} = \model$.
    \end{claimproof}
    From these two, we have:
    \begin{align*}
        \term[1] \not\sim_{\REL} \term[2] & \iff \jump{\term[1]}_{\model} \not\sim_{\REL} \jump{\term[2]}_{\model} \mbox{ for some $\model$}                                                                                                                                                                          \\
                                          & \iff \jump{\term[1]}_{\model[2]^{\mathrm{t}}} \not\sim_{\REL} \jump{\term[2]}_{\model[2]^{\mathrm{t}}} \mbox{ for some $\model[2]$}  \tag{$\Longrightarrow$: \Cref{claim: elim converse 2 surjective}. $\Longleftarrow$: By letting $\model[1] = \model[2]^{\mathrm{t}}$} \\
                                          & \iff \jump{\trans(\term[1])}_{\model[2]} \not\sim_{\REL} \jump{\trans(\term[2])}_{\model[2]} \mbox{ for some $\model[2]$} \tag{\Cref{claim: elim converse 2 faithful}}                                                                                                    \\
                                          & \iff \trans(\term[1]) \not\sim_{\REL} \trans(\term[2]).
    \end{align*}
    Because $\term[1], \term[2]$ are in $\CoRPi_2$, we have that $\trans(\term[1]), \trans(\term[2])$ are in $\CoRPi_2$ (\Cref{claim: preserve k}).
    Hence, since $\trans(\term[1]), \trans(\term[2])$ contains the one variable $a$, this completes the proof.
\end{proof}
Therefore, \Cref{proposition: decidable/undecidable} holds even if $\vsig$ is an non-empty finite set (particularly if $\card \vsig = 1$):
\begin{proposition}[restatement of \Cref{proposition: decidable/undecidable}]
    Assume that $\vsig$ is a non-empty finite set.
    The \kl{equational theory} of $\CoRSigma_{n}$ (resp.\ $\CoRPi_{n}$) is decidable if $n \le 1$
    and is undecidable if $n \ge 2$.
\end{proposition}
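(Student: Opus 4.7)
The statement decouples into decidability at $n \le 1$ and undecidability at $n \ge 2$, and only the latter genuinely uses the restriction to finite $\vsig$. For the decidability half, my plan is to observe that any $\CoRSigma_1$ (resp.\ $\CoRPi_1$) term over a finite $\vsig$ is already a term over any countably infinite extension of the alphabet, and that the relation $\simalgclassx{\REL}$ is insensitive to such extensions because \kl{structures} interpret only the variables that actually occur in the terms being compared; hence the decision procedure of Proposition \ref{proposition: decidable/undecidable infinite} applies verbatim, and decidability transfers from the infinite-variable to the finite-variable case.

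For the undecidability half, the plan is to chain together the two reductions already prepared in the appendix. Lemma \ref{lemma: elim converse} transfers undecidability to \emph{complement-free} $\CoRPi_2$ over a countably infinite $\vsig$ by replacing each $a^{-}$ with a fresh variable $\bar{a}$ and forcing the boolean-duality constraints via an added $\CoRPi_1$-conjunct and $\CoRSigma_1$-disjunct that together pin down $\jump{\bar{a}}_{\model} = |\model|^2 \setminus \jump{a}_{\model}$ whenever the two sides of the reduced equation can differ. Lemma \ref{lemma: elim converse 2} then contracts the countable alphabet to a single control variable via a relativization-style translation that sends each $a_i$ to $a^{\Delta} \cap (a \cdot ((a^{-} \cap \const{I}) \cdot a)^{i})$ and relativizes every $\dagger$-subterm by $a^{\Delta}$; crucially, this translation preserves the $\CoRSigma_2 / \CoRPi_2$ level and is faithful in both directions. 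Given any non-empty finite $\vsig$, pick any one of its elements to play the role of that single control variable.

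Finally, to reach all $n \ge 2$ rather than merely $n = 2$, the plan is to invoke monotonicity: by the second clause of Definition \ref{definition: dot-dagger}, $\CoRSigma_2 \cup \CoRPi_2 \subseteq \CoRSigma_n \cap \CoRPi_n$ whenever $n \ge 2$, so hardness lifts uniformly to every higher level of both hierarchies. The only genuinely hard step in the overall argument is the single-variable encoding of Lemma \ref{lemma: elim converse 2}, especially verifying that the relativization does not increase the dot-dagger alternation; the proof of the restatement itself amounts to a clean assembly of that encoding with the infinite-variable base case.
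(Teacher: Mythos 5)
Your proposal is correct and takes essentially the same route as the paper: decidability for $n \le 1$ transfers trivially from the countably-infinite-variable case, and undecidability for $n \ge 2$ is obtained by chaining the complement-elimination reduction (\Cref{lemma: elim converse}) with the single-variable relativization (\Cref{lemma: elim converse 2}), then lifting to higher levels via the hierarchy's monotonicity. This is exactly the assembly the paper performs, whose proof of the restatement simply invokes \Cref{lemma: elim converse 2}.
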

\begin{proof}
    By \Cref{lemma: elim converse 2}.
\end{proof}

\section{Proof completion for \Cref{lemma: simpl CoR}}\label{section: lemma: simple CoR}
\subsection{Complement normal form}\label{section: complement normal form}
\begin{lemma}[equations for the complement normal form]\label{lemma: complement normal form}
    The following holds:
    \begin{align*}
        \top^{-} & \sim_{\REL} \bot & \bot^{-} & \sim_{\REL} \top & \const{I}^{-} & \sim_{\REL} \const{D} & \const{D}^{-} & \sim_{\REL} \const{I}
    \end{align*}
    \begin{align*}
        (\term[2] \cup \term[3])^{-} & \sim_{\REL} \term[2]^{-} \cap \term[3]^{-} & (\term[2] \cap \term[3])^{-} & \sim_{\REL} \term[2]^{-} \cup \term[3]^{-} & (\term[2]^{-})^{-} & \sim_{\REL} \term[2] & (\term[2]^{\pi})^{-} & \sim_{\REL} (\term[2]^{-})^{\pi}.
    \end{align*}
\end{lemma}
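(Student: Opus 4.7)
The plan is to verify each of the eight equations directly from the semantic definitions of $\jump{\bl}_{\model}$ given in \Cref{section: CoR}, showing that both sides denote the same subset of $|\model|^2$ on every structure $\model$. Since $\sim_{\REL}$ is defined by equality of the relational interpretations on every $\model$, each claim reduces to a set-theoretic identity on the subsets of the square $|\model|^2$.

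First I would handle the four constant equations. Unfolding the definitions gives $\jump{\top^{-}}_{\model} = |\model|^2 \setminus |\model|^2 = \emptyset = \jump{\bot}_{\model}$ and symmetrically $\jump{\bot^{-}}_{\model} = |\model|^2 = \jump{\top}_{\model}$. For the identity/difference pair, by the definitions of $\const{I}_{|\model|}$ and $\const{D}_{|\model|}$ in \Cref{section: CoR}, the set $\const{D}_{|\model|}$ is precisely the complement of $\const{I}_{|\model|}$ inside $|\model|^2$, which yields $\jump{\const{I}^{-}}_{\model} = \jump{\const{D}}_{\model}$ and $\jump{\const{D}^{-}}_{\model} = \jump{\const{I}}_{\model}$.

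Next I would handle the compound equations. The two De Morgan equations $(\term[2] \cup \term[3])^{-} \sim_{\REL} \term[2]^{-} \cap \term[3]^{-}$ and $(\term[2] \cap \term[3])^{-} \sim_{\REL} \term[2]^{-} \cup \term[3]^{-}$ follow from the standard De Morgan identities applied to the subsets $\jump{\term[2]}_{\model}, \jump{\term[3]}_{\model}$ of $|\model|^2$. Double negation $(\term[2]^{-})^{-} \sim_{\REL} \term[2]$ follows from $|\model|^2 \setminus (|\model|^2 \setminus S) = S$ for any $S \subseteq |\model|^2$. For the last equation, unfolding projection gives
\[
\jump{(\term[2]^{\pi})^{-}}_{\model} = |\model|^2 \setminus \{(x_1,x_2) \mid (x_{\pi(1)},x_{\pi(2)}) \in \jump{\term[2]}_{\model}\},
\]
which equals $\{(x_1,x_2) \mid (x_{\pi(1)},x_{\pi(2)}) \in |\model|^2 \setminus \jump{\term[2]}_{\model}\} = \jump{(\term[2]^{-})^{\pi}}_{\model}$, since $\pi$ maps into $\{1,2\}$ and so $(x_{\pi(1)},x_{\pi(2)}) \in |\model|^2$ always holds.

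There is no real obstacle here: every equation is a direct consequence of the semantic clauses in \Cref{section: CoR} together with elementary Boolean identities on subsets of $|\model|^2$. The only minor point worth stating carefully is the projection case, where one must note that the tuple $(x_{\pi(1)}, x_{\pi(2)})$ is always an element of $|\model|^2$ so complementing before or after the re-indexing yields the same set.
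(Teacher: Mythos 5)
Your proof is correct and matches the paper's intent: the paper's own proof of this lemma is simply ``Easy,'' meaning exactly the direct semantic verification you carry out, unfolding $\jump{\cdot}_{\model}$ and applying elementary Boolean identities on subsets of $|\model|^2$. Your careful note on the projection case (that $(x_{\pi(1)},x_{\pi(2)})$ always lies in $|\model|^2$) is a sound and welcome elaboration of the only step that is not purely propositional.
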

\begin{proof}
    Easy.
\end{proof}
\begin{remark}
    Whereas $(\term[1] \cdot \term[2])^{-} \sim_{\REL} \term[1]^{-} \dagger \term[2]^{-}$ and $(\term[1] \dagger \term[2])^{-} \sim_{\REL} \term[1]^{-} \cdot \term[2]^{-}$,
    we do not need them in the following because the complement operator does not apply to either $\cdot$ or $\dagger$ by the definition of $\CoRPi_{k}$ and $\CoRSigma_{k}$.
\end{remark}
\begin{lemma}\label{lemma: decomposition complement normal form}
    Let $\term \in \voset{(\CoRSigma_1)}{1}$ and $a \in \vsig$.
    There are $\term_0 \in \voset{\allterm^{\set{\cup, \cap, \cdot, \bot, \top, \const{I}, \const{D}} \cup \set{\pi \mid \pi \in \range{1, 2}^{\range{1, 2}}}}}{1}$ and $\term_1 \in \voset{\allterm^{\set{-}}}{1}$ such that $\term \sim_{\REL} \term_0\assign{\term_1}{a}$.
\end{lemma}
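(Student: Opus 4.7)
The plan is to push every complement operator toward the leaves using the equations of \Cref{lemma: complement normal form} (De Morgan, involution $(\term[2]^-)^- \sim_\REL \term[2]$, and commutation $(\term[2]^\pi)^- \sim_\REL (\term[2]^-)^\pi$), and then to absorb the resulting complements on constants via $\bot^-\sim_\REL\top$, $\top^-\sim_\REL\bot$, $\const{I}^-\sim_\REL\const{D}$, $\const{D}^-\sim_\REL\const{I}$. A key structural observation is that by the grammar of $\CoRSigma_1$, the operator ``$-$'' only occurs inside $\CoRSigma_0$ subterms, so normalization only needs to be performed within those subterms; the outer $\cdot, \pi, \cup, \cap$ structure is left untouched.

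I would proceed by induction on $\term$. When $\vo(\term)=0$, pushing complements to the leaves gives only complemented constants, which are eliminated to produce a complement-free $\term_0 \sim_\REL \term$; set $\term_1 = a$, which does not occur in $\term_0$. When $\vo(\term)=1$ and the head of $\term$ is a $\CoRSigma_1$ operator ($\cup,\cap,\cdot$, or $\pi$), I would send the unique variable to the child that contains it, apply IH to that child, normalize the other child (if any) via the $\vo(\term)=0$ case, and recombine using the same head operator.

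The remaining case is $\term \in \CoRSigma_0$ with $\vo(\term)=1$. Repeated use of De Morgan, involution, and the projection-commutation rule drives every complement to a leaf; termination follows from a standard well-founded measure such as the sum of sizes of subterms occurring directly under a complement. After normalization, the only surviving complements stack on the single variable occurrence $b$, forming a tower $\term_1 \defeq (\cdots(b^-)\cdots)^- \in \voset{\allterm^{\set{-}}}{1}$; then $\term_0$ is the normalized term with this tower replaced by a fresh variable $a$, which clearly belongs to the declared signature $\set{\cup,\cap,\cdot,\bot,\top,\const{I},\const{D}} \cup \set{\pi \mid \pi \in \range{1,2}^{\range{1,2}}}$. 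The main obstacle is the bookkeeping in this $\CoRSigma_0$ case---verifying termination and preservation of $\sim_\REL$ under the rewrite system---but both follow directly from \Cref{lemma: complement normal form} applied locally.
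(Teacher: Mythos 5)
Your proof is correct and takes essentially the same route as the paper: the paper likewise first treats the $\CoRPi_0$ ($=\CoRSigma_0$) case by rewriting with the equations of \Cref{lemma: complement normal form} until complements apply only to the variable, then splits off the complement tower as $\term_1$ and handles general $\CoRSigma_1$ terms by an easy induction on the outer structure. The only cosmetic difference is that after the involution rule the tower on the variable collapses to $b$ or $b^{-}$, which does not affect the argument.
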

\begin{proof}
    For $\term \in \voset{(\CoRPi_0)}{1}$:
    By applying the rewriting rules induced from the equations in \Cref{lemma: complement normal form} (from left to right) as much as possible,
    we can obtain a term $\term' \in \voset{(\CoRPi_0)}{1}$ such that
    $\term \sim_{\REL} \term'$ and the complement operators ($-$) only apply to a variable.
    (Note that $\CoRPi_0 = \allterm^{\set{\cup, \cap, -, \bot, \top, \const{I}, \const{D}} \cup \set{\pi \mid \pi \in \range{1, 2}^{\range{1, 2}}}}$.)
    Then, there are $\term_0 \in \voset{\allterm^{\set{\cup, \cap, \bot, \top, \const{I}, \const{D}} \cup \set{\pi \mid \pi \in \range{1, 2}^{\range{1, 2}}}}}{1}$ and $\term_1 \in \voset{\allterm^{\set{-}}}{1}$ such that $\term' = \term_0\assign{\term_1}{a}$.
    Since $\term \sim_{\REL} \term_0\assign{\term_1}{a}$, we have obtained such $\term_0$ and $\term_1$.
    For $\term \in \voset{(\CoRSigma_1)}{1}$:
    By easy induction on $\term$ using the case for $\term \in \voset{(\CoRPi_0)}{1}$.
\end{proof}

\subsection{Projection normal form (PNF)}

\begin{lemma}[equations for the projection normal form]\label{lemma: projection normal form}
    The following holds:
    \begin{align*}
        \top^{\pi} & \sim_{\REL} \top & \bot^{\pi} & \sim_{\REL} \bot & \const{I}^{\pi} & \sim_{\REL} \begin{cases}
                                                                                                          \const{I} & (\pi(1) \neq \pi(2)) \\
                                                                                                          \top      & (\pi(1) = \pi(2))
                                                                                                      \end{cases} & \const{D}^{\pi} & \sim_{\REL} \begin{cases}
                                                                                                                                                      \const{D} & (\pi(1) \neq \pi(2)) \\
                                                                                                                                                      \bot      & (\pi(1) = \pi(2))
                                                                                                                                                  \end{cases}
    \end{align*}
    \begin{align*}
        (\term[2] \cup \term[3])^{\pi} & \sim_{\REL} \term[2]^{\pi} \cup \term[3]^{\pi} & (\term[2] \cap \term[3])^{\pi} & \sim_{\REL} \term[2]^{\pi} \cap \term[3]^{\pi} & (\term[2]^{-})^{\pi} & \sim_{\REL} (\term[2]^{\pi})^{-} & (\term[2]^{\pi})^{\pi'} & \sim_{\REL} \term[2]^{\pi \circ \pi'}
    \end{align*}
    \begin{align*}
        (\term[2] \cdot \term[3])^{\pi} & \sim_{\REL} \begin{cases}
                                                          \term[2] \cdot \term[3]                      & (\pi = \set{1 \mapsto 1, 2 \mapsto 2}) \\
                                                          \term[3]^{\smile} \cdot \term[2]^{\smile}    & (\pi = \set{1 \mapsto 2, 2 \mapsto 1}) \\
                                                          (\term[2] \cap \term[3]^{\smile}) \cdot \top & (\pi = \set{1 \mapsto 1, 2 \mapsto 1}) \\
                                                          \top \cdot (\term[2]^{\smile} \cap \term[3]) & (\pi = \set{1 \mapsto 2, 2 \mapsto 2})
                                                      \end{cases}
    \end{align*}
    \begin{align*}
        (\term[2] \dagger \term[3])^{\pi} & \sim_{\REL} \begin{cases}
                                                            \term[2] \dagger \term[3]                      & (\pi = \set{1 \mapsto 1, 2 \mapsto 2}) \\
                                                            \term[3]^{\smile} \dagger \term[2]^{\smile}    & (\pi = \set{1 \mapsto 2, 2 \mapsto 1}) \\
                                                            (\term[2] \cup \term[3]^{\smile}) \dagger \bot & (\pi = \set{1 \mapsto 1, 2 \mapsto 1}) \\
                                                            \bot \dagger (\term[2]^{\smile} \cup \term[3]) & (\pi = \set{1 \mapsto 2, 2 \mapsto 2})
                                                        \end{cases}.
    \end{align*}
\end{lemma}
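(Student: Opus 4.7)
The plan is to verify each listed equation by a direct semantic argument, unfolding the definition of $\jump{-}_{\model}$ on both sides and checking they coincide as subsets of $|\model|^2$ for every structure $\model$. Since $\pi$ ranges over the finite set $\range{1,2}^{\range{1,2}}$ with only four elements, each case can be handled by explicit case analysis; the overall lemma is really a verification exercise rather than a result with a conceptual difficulty.

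First I would dispose of the constants. For $\top^{\pi}$ and $\bot^{\pi}$: by definition $R^{\pi} = \set{\tuple{x_1,x_2} \in |\model|^2 \mid \tuple{x_{\pi(1)},x_{\pi(2)}} \in R}$, so projecting $|\model|^2$ yields $|\model|^2$ and projecting $\emptyset$ yields $\emptyset$. For $\const{I}^{\pi}$ and $\const{D}^{\pi}$, I split on whether $\pi(1)=\pi(2)$: when the two values differ, $\pi$ is either the identity or the swap, both of which preserve the identity and the difference relation; when $\pi(1)=\pi(2)$, the condition $x_{\pi(1)} = x_{\pi(2)}$ is always true and $x_{\pi(1)} \neq x_{\pi(2)}$ always false, giving $\top$ and $\bot$, respectively.

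Next I would handle the Boolean row. The equations $(\term[2] \cup \term[3])^{\pi} \sim_{\REL} \term[2]^{\pi} \cup \term[3]^{\pi}$, $(\term[2] \cap \term[3])^{\pi} \sim_{\REL} \term[2]^{\pi} \cap \term[3]^{\pi}$, and $(\term[2]^{-})^{\pi} \sim_{\REL} (\term[2]^{\pi})^{-}$ all follow because projection is really a preimage under the map $\tuple{x_1,x_2} \mapsto \tuple{x_{\pi(1)}, x_{\pi(2)}}$ from $|\model|^2$ to itself, and preimages commute with $\cup$, $\cap$, and complement relative to the fixed universe $|\model|^2$. The double-projection equation $(\term[2]^{\pi})^{\pi'} \sim_{\REL} \term[2]^{\pi \circ \pi'}$ then follows by substitution: $\tuple{x_{\pi'(1)}, x_{\pi'(2)}} \in \jump{\term[2]^{\pi}}_{\model}$ iff $\tuple{x_{\pi(\pi'(1))}, x_{\pi(\pi'(2))}} \in \jump{\term[2]}_{\model}$.

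The bulk of the work, and the only place where anything nontrivial happens, is the case analysis for $(\term[2] \cdot \term[3])^{\pi}$ and $(\term[2] \dagger \term[3])^{\pi}$. For each of the four possibilities for $\pi$ I would unfold the definition of $R^{\pi}$ and then that of the outer binary operator. When $\pi$ is the identity, the equation is trivial; when $\pi$ is the swap $\set{1 \mapsto 2, 2 \mapsto 1}$, one obtains $\exists z,\ \tuple{x_2,z} \in \jump{\term[2]}_{\model} \land \tuple{z,x_1} \in \jump{\term[3]}_{\model}$, which by renaming the witness matches $\jump{\term[3]^{\smile} \cdot \term[2]^{\smile}}_{\model}$ (and dually for $\dagger$). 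The two collapsing cases $\pi = \set{1\mapsto 1,2\mapsto 1}$ and $\pi=\set{1\mapsto 2,2\mapsto 2}$ are the most instructive: after unfolding one gets the condition that there exists $z$ with $\tuple{x_1,z} \in \jump{\term[2]}_{\model}$ and $\tuple{z,x_1} \in \jump{\term[3]}_{\model}$, independent of $x_2$; rewriting $\tuple{z,x_1} \in \jump{\term[3]}_{\model}$ as $\tuple{x_1,z} \in \jump{\term[3]^{\smile}}_{\model}$ identifies this with $\jump{(\term[2] \cap \term[3]^{\smile}) \cdot \top}_{\model}$. The dagger case is analogous after applying the De Morgan-style duality $R \dagger S = |\model|^2 \setminus (R^- \cdot S^-)$, or equivalently by directly unfolding the universal quantifier.

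The main obstacle, if any, is purely bookkeeping: keeping track of how the projection indices $\pi(1), \pi(2)$ interact with the existential/universal witness in $\cdot$ and $\dagger$ across the four cases, and making sure that the converse $\smile$ is inserted on the correct operand. Since $\pi$ ranges over only four maps and the semantic definitions are given explicitly in the paper, each item reduces to a short first-order tautology, so no further machinery is needed beyond the definitions recalled in \Cref{section: CoR}.
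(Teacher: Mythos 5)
Your proposal is correct and takes essentially the same route as the paper's own proof: a direct semantic verification that unfolds $\jump{\cdot}_{\model}$ on both sides, treats the Boolean equations via the observation that projection is a preimage operation, and does the four-way case analysis on $\pi$ for $\cdot$ and $\dagger$, inserting $\smile$ in the swap and collapsing cases exactly as the paper does. No gaps.
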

\begin{proof}
    They are also easy, from the semantics.
    For example,
    for $(\term[2] \cup \term[3])^{\pi} \sim_{\REL} \term[2]^{\pi} \cup \term[3]^{\pi}$,
    \begin{align*}
        \tuple{x_1, x_2} \in \jump{(\term[2] \cup \term[3])^{\pi}}_{\model} & \iff \tuple{x_{\pi(1)}, x_{\pi(2)}} \in \jump{\term[2] \cup \term[3]}_{\model}                                                    \\
                                                                            & \iff \tuple{x_{\pi(1)}, x_{\pi(2)}} \in \jump{\term[2]}_{\model} \lor \tuple{x_{\pi(1)}, x_{\pi(2)}} \in \jump{\term[3]}_{\model}
        \\
                                                                            & \iff \tuple{x_1, x_2} \in \jump{\term[2]^{\pi}}_{\model} \lor \tuple{x_1, x_2} \in \jump{\term[3]^{\pi}}_{\model}                 \\
                                                                            & \iff \tuple{x_1, x_2} \in \jump{\term[2]^{\pi} \cup \term[3]^{\pi}}_{\model}.
    \end{align*}
    For $(\term[2]^{\pi})^{\pi'} \sim_{\REL} \term[2]^{\pi \circ \pi'}$,
    \begin{align*}
        \tuple{x_1, x_2} \in \jump{(\term[2]^{\pi})^{\pi'}}_{\model} & \iff \tuple{x_{\pi'(1)}, x_{\pi'(2)}} \in \jump{\term[2]^{\pi}}_{\model}                    \\
                                                                     & \iff \tuple{x_{\pi(\pi'(1))}, x_{\pi(\pi'(2))}} \in \jump{\term[2]}_{\model}
        \\
                                                                     & \iff \tuple{x_{(\pi \circ \pi')(1)}, x_{(\pi \circ \pi')(2)}}  \in \jump{\term[2]}_{\model} \\
                                                                     & \iff \tuple{x_1, x_2} \in \jump{\term[2]^{\pi \circ \pi'}}_{\model}.
    \end{align*}
    For $(\term[2] \cdot \term[3])^{\pi}$, we distinguish the following cases:
    Case $\pi = \set{1 \mapsto 1, 2 \mapsto 2}$: Clear.
    Case $\pi = \set{1 \mapsto 2, 2 \mapsto 1}$:
    \begin{align*}
        \tuple{x_1, x_2} \in \jump{(\term[2] \cdot \term[3])^{\pi}}_{\model} & \iff \tuple{x_{2}, x_{1}} \in \jump{\term[2] \cdot \term[3]}_{\model}                                                           \\
                                                                             & \iff \exists z,  \tuple{x_{2}, z} \in \jump{\term[2]}_{\model} \land  \tuple{z, x_{1}} \in \jump{\term[3]}_{\model}             \\
                                                                             & \iff \exists z,  \tuple{z, x_{2}} \in \jump{\term[2]^{\pi}}_{\model} \land  \tuple{x_{1}, z} \in \jump{\term[3]^{\pi}}_{\model} \\
                                                                             & \iff \tuple{x_{1}, x_{2}} \in \jump{\term[3]^{\pi} \cdot \term[2]^{\pi}}_{\model}.
    \end{align*}
    Case $\pi = \set{1 \mapsto 1, 2 \mapsto 1}$:
    \begin{align*}
        \tuple{x_1, x_2} \in \jump{(\term[2] \cdot \term[3])^{\pi}}_{\model} & \iff \tuple{x_{1}, x_{1}} \in \jump{\term[2] \cdot \term[3]}_{\model}                                                               \\
                                                                             & \iff \exists z,  \tuple{x_{1}, z} \in \jump{\term[2]}_{\model} \land  \tuple{z, x_{1}} \in \jump{\term[3]}_{\model}                 \\
                                                                             & \iff \exists z,  \tuple{x_{1}, z} \in \jump{\term[2]}_{\model} \land  \tuple{x_{1}, z} \in \jump{\term[3]^{\smile}}_{\model}        \\
                                                                             & \iff \exists z,  \tuple{x_{1}, z} \in \jump{\term[2] \cap \term[3]^{\smile}}_{\model} \land \tuple{z, x_2} \in \jump{\top}_{\model} \\
                                                                             & \iff \tuple{x_{1}, x_{2}} \in \jump{(\term[2] \cap \term[3]^{\smile}) \cdot \top}_{\model}.
    \end{align*}
    Case $\pi = \set{1 \mapsto 2, 2 \mapsto 2}$:
    As with the case of $\pi = \set{1 \mapsto 1, 2 \mapsto 1}$.
\end{proof}
\begin{lemma}[decomposition by projection normal form]\label{lemma: decomposition projection normal form}
    Let $\term \in \voset{\allterm^{\set{\cup, \cap, \cdot, \bot, \top, \const{I}, \const{D}} \cup \set{\pi \mid \pi \in \range{1, 2}^{\range{1, 2}}}}}{1}$ and $a \in \vsig$.
    Then, there are $\term_0 \in \voset{\allterm^{\set{\cup, \cap, \cdot, \bot, \top, \const{I}, \const{D}}}}{1}$ and $\term_1 \in \voset{\allterm^{\set{\pi \mid \pi \in \range{1, 2}^{\range{1, 2}}}}}{1}$ such that $\term \sim_{\REL} \term_0\assign{\term_1}{a}$.
\end{lemma}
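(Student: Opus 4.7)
The plan is to orient the equations of \Cref{lemma: projection normal form} from left to right as rewrite rules that push every projection $(-)^{\pi}$ toward the leaves of the syntax tree, and then to argue by structural induction on $\term$ that the resulting normal form has the required shape. In the base cases I would take $\term_0 \defeq a$ and $\term_1 \defeq a$ when $\term \in \vsig$, and $\term_0 \defeq \term$ with $\term_1 \defeq a$ (the choice of $\term_1$ is immaterial) when $\term$ is one of the constants $\bot, \top, \const{I}, \const{D}$. For $\term = \term[2] \star \term[3]$ with $\star \in \set{\cup, \cap, \cdot}$, the condition $\vo(\term) \le 1$ forces one of $\term[2], \term[3]$ to be variable-free, so applying the induction hypothesis to both sides and combining them yields $\term_0 \defeq \term[2]_0 \star \term[3]_0$ and $\term_1$ equal to whichever of $\term[2]_1, \term[3]_1$ carries the variable (or $a$ if neither does).

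The key inductive case is $\term = \term[2]^{\pi}$. By the induction hypothesis on $\term[2]$, I get $\term[2] \sim_{\REL} \term[2]_0 \assign{\term[2]_1}{a}$ with $\term[2]_0$ projection-free, so it remains to push the outer $\pi$ through $\term[2]_0$ while preserving the shape of $\term[2]_1$. I would do this by a subsidiary induction that applies \Cref{lemma: projection normal form}: at $\top, \bot, \const{I}, \const{D}$ the equations evaluate $(-)^{\pi}$ to a constant; at $\cup$ and $\cap$ the outer $\pi$ distributes over the children; at $\cdot$ one of the four cases applies and, in two of them, introduces fresh $\smile$ on proper subterms that are pushed inward recursively. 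Termination is guaranteed by the multiset of depths of projection occurrences, since each rewrite strictly decreases this measure. Any chain of projections that accumulates above the variable occurrence is finally collapsed to a single $a^{\pi'}$ via $(\term[3]^{\pi})^{\pi'} \sim_{\REL} \term[3]^{\pi \circ \pi'}$, giving $\term_1 \defeq a^{\pi'}$ (or $\term_1 \defeq a$ if the chain is empty), and $\term_0$ is obtained by abstracting $\term_1$ back to $a$ in the enveloping projection-free structure.

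The main obstacle will be verifying that pushing $\pi$ through $\cdot$ preserves the invariant $\vo \le 1$. The potentially problematic rewrites are those for $\pi \in \set{\set{1 \mapsto 1, 2 \mapsto 1}, \set{1 \mapsto 2, 2 \mapsto 2}}$, which turn $(\term[2] \cdot \term[3])^{\pi}$ into $(\term[2] \cap \term[3]^{\smile}) \cdot \top$ or $\top \cdot (\term[2]^{\smile} \cap \term[3])$; fortunately, each of $\term[2]$ and $\term[3]$ still occurs exactly once on the right-hand side, so since at most one of them contained the variable, the invariant is maintained. The case $\pi = \set{1 \mapsto 2, 2 \mapsto 1}$ gives $\term[3]^{\smile} \cdot \term[2]^{\smile}$ and is similar. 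With this invariant secured throughout, the final $\term_0$ lies in $\voset{\allterm^{\set{\cup, \cap, \cdot, \bot, \top, \const{I}, \const{D}}}}{1}$ and $\term_1$ lies in $\voset{\allterm^{\set{\pi \mid \pi \in \range{1, 2}^{\range{1, 2}}}}}{1}$, as required.
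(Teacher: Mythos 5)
Your proposal is correct and takes essentially the same route as the paper: orient the equations of \Cref{lemma: projection normal form} left-to-right as rewrite rules, normalize until every projection sits directly on a variable, and then read off $\term_0$ and $\term_1$. The paper's proof is simply a terser version of yours, omitting the explicit termination measure and the check that the $\cdot$-rewrites preserve $\vo \le 1$, which you rightly identify as the only delicate point.
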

\begin{proof}
    By applying the rewriting rules induced from the equations in \Cref{lemma: projection normal form} (from left to right) as much as possible,
    we can obtain a term $\term' \in \voset{\allterm^{\set{\cup, \cap, \cdot, \bot, \top, \const{I}, \const{D}} \cup \set{\pi \mid \pi \in \range{1, 2}^{\range{1, 2}}}}}{1}$ such that
    $\term \sim_{\REL} \term'$ and the projection operators $\pi$ only apply to a variable.
    Then,
    for the term $\term'$, there are $\term_0 \in \voset{\allterm^{\set{\cup, \cap, \bot, \top, \const{I}, \const{D}}}}{1}$ and $\term_1 \in \voset{\allterm^{\set{\pi \mid \pi \in \range{1, 2}^{\range{1, 2}}}}}{1}$ such that $\term' = \term_0\assign{\term_1}{a}$.
    Since $\term \sim_{\REL} \term_0\assign{\term_1}{a}$, we have obtained such $\term_0$ and $\term_1$.
\end{proof}

\subsection{Union normal form}
\begin{lemma}[eliminating $\bot$ and $\top$]\label{lemma: elim bot and top}
    For every $\term \in \voset{\allterm^{\set{\cup, \cap, \cdot, \bot, \top, \const{I}, \const{D}}}}{1}$,
    there is $\term_0 \in \voset{\allterm^{\set{\cup, \cap, \cdot, \const{I}, \const{D}}}}{1}$ such that $\term \sim_{\REL} \term_0$.
\end{lemma}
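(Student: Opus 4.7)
The plan is to eliminate $\bot$ and $\top$ by a direct substitution: replace every occurrence of $\top$ in $\term$ by $\const{I} \cup \const{D}$ and every occurrence of $\bot$ by $\const{I} \cap \const{D}$, and call the resulting term $\term_0$. The correctness of this substitution rests on the two base equivalences $\top \sim_{\REL} \const{I} \cup \const{D}$ and $\bot \sim_{\REL} \const{I} \cap \const{D}$, which follow immediately from the semantics of $\const{I}$ and $\const{D}$: on every structure $\model$, the relations $\const{I}_{|\model|}$ and $\const{D}_{|\model|}$ partition $|\model|^2$, so their union is $|\model|^2 = \jump{\top}_{\model}$ and their intersection is $\emptyset = \jump{\bot}_{\model}$. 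A point worth noting is that this holds uniformly over $\REL$, including on one-element models where $\const{D}_{|\model|} = \emptyset$, so we obtain genuine $\sim_{\REL}$-equivalence (unlike the candidate $\const{D} \cdot \const{D}$ for $\top$, which is only $\sim_{\REL_{\ge 3}}$-equivalent by \Cref{lemma: CoR 0} and \Cref{remark: elim finite}); this is why we must choose the boolean, rather than the multiplicative, witnesses.

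I would then carry out a straightforward structural induction on $\term$ to propagate the equivalence through the operators $\cup, \cap, \cdot$, using the congruence property of $\sim_{\REL}$. This yields $\term \sim_{\REL} \term_0$. It remains to check that $\term_0$ lies in the target fragment: since both replacement terms $\const{I} \cup \const{D}$ and $\const{I} \cap \const{D}$ are built solely from $\{\cup, \cap, \const{I}, \const{D}\}$ and have zero variable occurrences, the substitution neither introduces forbidden symbols nor changes the total number of variable occurrences, so $\term_0 \in \voset{\allterm^{\set{\cup, \cap, \cdot, \const{I}, \const{D}}}}{1}$.

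There is no real obstacle here, since the lemma amounts to pushing two semantic identities through a term by congruence. The only design choice is selecting the right replacement terms so that the equivalence is valid over all of $\REL$ rather than over a restricted class $\REL_{\ge m}$; the boolean witnesses $\const{I} \cup \const{D}$ and $\const{I} \cap \const{D}$ achieve this without any cardinality caveat.
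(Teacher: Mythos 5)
Your proposal is correct and matches the paper's proof exactly: both replace $\bot$ with $\const{I} \cap \const{D}$ and $\top$ with $\const{I} \cup \const{D}$ and invoke the congruence of $\sim_{\REL}$, noting that the replacement terms introduce no variable occurrences and no forbidden symbols. Your additional observation that these witnesses work over all of $\REL$ (unlike, say, $\const{D}\cdot\const{D}$ for $\top$) is a correct and worthwhile clarification, but does not change the argument.
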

\begin{proof}
    By $\bot \sim_{\REL} \const{I} \cap \const{D}$ and $\top \sim_{\REL} \const{I} \cup \const{D}$.
    let $\term_0$ be the term $\term$ in which $\bot$ has been replaced with $\const{I} \cap \const{D}$ and $\top$ has been replaced with $\const{I} \cup \const{D}$.
    Then, $\term \sim_{\REL} \term_0$ and $\term_0 \in \voset{\allterm^{\set{\cup, \cdot, \cap, \const{I}, \const{D}}}}{1}$.
\end{proof}

\begin{lemma}[equations for the union normal form]\label{lemma: union normal form}
    The following holds:
    \begin{align*}
        (\term[2]_1 \cup \term[2]_2) \cap \term[3]  & \sim_{\REL} (\term[2]_1 \cap \term[3]) \cup (\term[2]_2 \cap \term[3])   & \term[3] \cap (\term[2]_1 \cup \term[2]_2)  & \sim_{\REL} (\term[3] \cap \term[2]_1) \cup (\term[3] \cap \term[2]_2)    \\
        (\term[2]_1 \cup \term[2]_2) \cdot \term[3] & \sim_{\REL} (\term[2]_1 \cdot \term[3]) \cup (\term[2]_2 \cdot \term[3]) & \term[3] \cdot (\term[2]_1 \cup \term[2]_2) & \sim_{\REL} (\term[3] \cdot \term[2]_1) \cup (\term[3] \cdot \term[2]_2).
    \end{align*}
\end{lemma}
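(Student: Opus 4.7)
The plan is to verify each of the four displayed equivalences pointwise, directly from the semantic definitions of the operators given earlier in \Cref{section: CoR}. Since $\sim_{\REL}$ is defined by equality of the interpretations $\jump{\bl}_{\model}$ on every structure $\model$, it suffices to show, for an arbitrary $\model$ and arbitrary $\tuple{x,y} \in |\model|^2$, that the two sides have the same membership condition. All four equations are instances of classical distributive laws in relation algebras.

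First I would handle the two equations involving $\cap$: for $(\term[2]_1 \cup \term[2]_2) \cap \term[3] \sim_{\REL} (\term[2]_1 \cap \term[3]) \cup (\term[2]_2 \cap \term[3])$, the chain of equivalences
\[
\tuple{x,y} \in \jump{(\term[2]_1 \cup \term[2]_2) \cap \term[3]}_{\model} \iff (\tuple{x,y} \in \jump{\term[2]_1}_{\model} \lor \tuple{x,y} \in \jump{\term[2]_2}_{\model}) \land \tuple{x,y} \in \jump{\term[3]}_{\model}
\]
reduces the claim to the propositional distributive law $(P_1 \lor P_2) \land Q \leftrightarrow (P_1 \land Q) \lor (P_2 \land Q)$. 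The right-hand $\cap$-equation is proved symmetrically.

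For the two $\cdot$-equations, the argument is analogous but uses an existential quantifier. Unfolding the definition of relative product gives
\[
\tuple{x,y} \in \jump{(\term[2]_1 \cup \term[2]_2) \cdot \term[3]}_{\model} \iff \exists z \in |\model|,\ (\tuple{x,z} \in \jump{\term[2]_1}_{\model} \lor \tuple{x,z} \in \jump{\term[2]_2}_{\model}) \land \tuple{z,y} \in \jump{\term[3]}_{\model},
\]
which by propositional distributivity and the fact that $\exists$ commutes with $\lor$ is equivalent to $\tuple{x,y} \in \jump{(\term[2]_1 \cdot \term[3]) \cup (\term[2]_2 \cdot \term[3])}_{\model}$. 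The right-hand $\cdot$-equation is proved symmetrically, with the existentially quantified $z$ appearing between $\term[3]$ and $\term[2]_i$ instead.

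There is no real obstacle here; all four identities follow mechanically from the semantics by a propositional tautology (for the $\cap$ cases) or by that tautology together with the commutation of $\exists$ with $\lor$ (for the $\cdot$ cases). The main care is only in being explicit about which argument of the binary operator the $\cup$ sits on, since the $\cdot$-distributivity is used twice, once on each side, and unlike union and intersection, composition is not commutative.
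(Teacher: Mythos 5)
Your proof is correct and matches the paper's intent: the paper disposes of this lemma with the single word ``Easy,'' and the expected argument is exactly the pointwise semantic verification you give, reducing each identity to propositional distributivity (plus the commutation of $\exists$ with $\lor$ for the $\cdot$ cases). Nothing further is needed.
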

\begin{proof}
    Easy.
\end{proof}

\begin{lemma}[decomposition by union normal form]\label{lemma: decomposition union normal form}
    For every $\term \in \voset{\allterm^{\set{\cup, \cap, \cdot, \const{I}, \const{D}}}}{1}$,
    there are $n \ge 1$ and $\term_1, \dots, \term_n \in \voset{\allterm^{\set{\cap, \cdot, \const{I}, \const{D}}}}{1}$ such that $\term \sim_{\REL} \term_1 \cup \dots \cup \term_n$.
\end{lemma}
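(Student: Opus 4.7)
The plan is to prove the lemma by structural induction on $\term$, using the distributivity equations from the preceding lemma (equations for the union normal form) to push every occurrence of $\cup$ outward to the top of the term. The base cases are immediate: if $\term \in \vsig \cup \set{\const{I}, \const{D}}$, then taking $n = 1$ and $\term_1 = \term$ works, since $\term$ itself contains no $\cup$.

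For the inductive step, I would case split on the outermost operator. If $\term = \term[2] \cup \term[3]$, I would apply the induction hypothesis to $\term[2]$ and $\term[3]$ separately, obtaining $\term[2] \sim_{\REL} \bigcup_{i} \term[2]_i$ and $\term[3] \sim_{\REL} \bigcup_{j} \term[3]_j$ with each $\term[2]_i, \term[3]_j \in \voset{\allterm^{\set{\cap, \cdot, \const{I}, \const{D}}}}{1}$, and then concatenate the two lists. If $\term = \term[2] \star \term[3]$ for $\star \in \set{\cap, \cdot}$, I would apply the induction hypothesis to both children to get $\term[2] \sim_{\REL} \bigcup_{i} \term[2]_i$ and $\term[3] \sim_{\REL} \bigcup_{j} \term[3]_j$, and then use the four distributivity equations of \Cref{lemma: union normal form} repeatedly to conclude $\term \sim_{\REL} \bigcup_{i,j} (\term[2]_i \star \term[3]_j)$.

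The one subtle point to verify is that each resulting clause $\term[2]_i \star \term[3]_j$ still lies in $\voset{\allterm^{\set{\cap, \cdot, \const{I}, \const{D}}}}{1}$, i.e.\ has at most one variable occurrence; this is where the duplication introduced by distributivity could be a concern. However, since $\vo(\term[2] \star \term[3]) \le 1$, at least one of $\vo(\term[2]), \vo(\term[3])$ equals $0$. By the induction hypothesis applied to that child, every clause in its decomposition also has $\vo = 0$ (as $\vo$ can only decrease when passing to a summand, since $\vo(\term[2]_1 \cup \term[2]_2) = \vo(\term[2]_1) + \vo(\term[2]_2)$). Hence each $\term[2]_i \star \term[3]_j$ inherits $\vo \le 1$ from the single side carrying the variable, and the invariant is preserved. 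I expect no real obstacle here; the main care is just bookkeeping of variable occurrences across the distributivity rewrites.
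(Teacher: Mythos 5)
Your proposal is correct and matches the paper's proof, which is simply ``by straightforward induction on $\term$ using the equations in \Cref{lemma: union normal form}''. You in fact go further than the paper by explicitly checking that the duplication caused by distributivity cannot raise any clause above one variable occurrence (since the duplicated side necessarily has $\vo = 0$ when the other side carries the variable), which is exactly the bookkeeping the paper leaves implicit.
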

\begin{proof}
    By straightforward induction on $\term$ using the equations in \Cref{lemma: union normal form}.
\end{proof}

\subsection{Proof of \Cref{lemma: simpl CoR}}

\begin{lemma}[restatement of \Cref{lemma: simpl CoR}]
    If $\voset{\allterm^{\set{\cap, \cdot, \const{I}, \const{D}}}}{1}\quoset{\sim_{\REL}}$ is finite,
    $\voset{(\CoRSigma_1)}{1}\quoset{\sim_{\REL}}$ is finite.
\end{lemma}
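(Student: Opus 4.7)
The plan is to reduce every $\term \in \voset{(\CoRSigma_1)}{1}$, via a sequence of $\simalgclassx{\REL}$-valid rewrites, to a form $\term \simalgclassx{\REL} \bigl(\bigcup_{i=1}^{n} \term_0^{(i)}\bigr)\assign{\term_1}{a}$ in which each $\term_0^{(i)} \in \voset{\alltermsig{\set{\cap, \cdot, \const{I}, \const{D}}}}{1}$ and $\term_1 \in \voset{\alltermsig{\set{-} \cup \set{\pi \mid \pi \in \range{1,2}^{\range{1,2}}}}}{1}$. Given such a form, the conclusion follows: by hypothesis the number $N$ of $\simalgclassx{\REL}$-classes of the $\term_0^{(i)}$ is finite, so by idempotence of $\cup$ the number of classes of finite unions of these is at most $2^{N}$; the number of classes of $\term_1$ is a small constant because $-$ is involutive modulo $\simalgclassx{\REL}$ and any chain of projections collapses to a single $\pi$ via $\pi^{\pi'} \simalgclassx{\REL} \pi \circ \pi'$; and congruence of substitution bounds $\voset{(\CoRSigma_1)}{1}\quoset{\simalgclassx{\REL}}$ by the product of these two finite numbers.

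To build the required form, I would carry out three rewriting phases. \emph{Phase 1 (complement normal form)} pushes every $-$ down to atoms using the de Morgan-like identities $(\term[2] \cup \term[3])^{-} \simalgclassx{\REL} \term[2]^{-} \cap \term[3]^{-}$, $(\term[2] \cap \term[3])^{-} \simalgclassx{\REL} \term[2]^{-} \cup \term[3]^{-}$, $(\term[2]^{\pi})^{-} \simalgclassx{\REL} (\term[2]^{-})^{\pi}$, $(\term[2]^{-})^{-} \simalgclassx{\REL} \term[2]$, together with $\top^{-} \simalgclassx{\REL} \bot$, $\const{I}^{-} \simalgclassx{\REL} \const{D}$, etc.; this is sufficient because, by the grammar of $\CoRSigma_1$, $-$ can appear only inside a $\CoRSigma_0$-subterm. \emph{Phase 2 (projection normal form)} distributes every $\pi$ through $\cup, \cap, -$, composes consecutive projections via $\pi \circ \pi'$, handles the four constants, and uses the identities for $(\term[2] \cdot \term[3])^{\pi}$; note that $\dagger$ does not occur in $\CoRSigma_1$ at all, so no rule for it is required here. \emph{Phase 3} eliminates $\bot, \top$ using $\bot \simalgclassx{\REL} \const{I} \cap \const{D}$ and $\top \simalgclassx{\REL} \const{I} \cup \const{D}$. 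After the three phases the outer skeleton is in $\set{\cup, \cap, \cdot, \const{I}, \const{D}}$, and the $\vo \le 1$ invariant forces the sole variable occurrence (if any) to be dressed by a single chain of $-$'s and $\pi$'s, which we read off as $\term_1$. Iterated distributivity of $\cup$ over $\cdot$ and $\cap$ then splits the outer skeleton into the required finite union $\bigcup_i \term_0^{(i)}$.

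The main subtlety, rather than a deep obstacle, will be verifying that each rewrite preserves the bound $\vo \le 1$. Two-argument rules such as $(\term[2] \cup \term[3]) \cdot \term \simalgclassx{\REL} (\term[2] \cdot \term) \cup (\term[3] \cdot \term)$ duplicate a subterm, but duplication is legitimate precisely because $\vo(\term[2] \cup \term[3]) + \vo(\term) \le 1$ forces all but one child to be variable-free, so the duplicated side carries no variable. A secondary point requiring care is that the projection rule for $\cdot$ introduces $\smile$ (e.g.\ $(\term[2] \cdot \term[3])^{\set{1 \mapsto 1, 2 \mapsto 1}} \simalgclassx{\REL} (\term[2] \cap \term[3]^{\smile}) \cdot \top$); since $\smile$ is itself a projection, the extra decoration can be absorbed into the final $\term_1$-chain without leaving the intended fragment.
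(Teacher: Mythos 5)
Your proposal is correct and follows essentially the same route as the paper's proof: complement and projection normal forms, elimination of $\bot$ and $\top$ via $\const{I}\cap\const{D}$ and $\const{I}\cup\const{D}$, a union normal form obtained by distributing $\cdot$ and $\cap$ over $\cup$, and then a finite count of the resulting pieces. The only imprecision is in your closing remark: when distributing, the duplicated factor can be the one carrying the variable (so the rewritten term may have two variable occurrences overall), and the argument survives not because every rewrite preserves $\vo \le 1$ globally, but because each individual disjunct of the final union still lies in the $1$-variable-occurrence fragment, which is all the counting step requires.
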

\begin{proof}
    We have:
    \begin{align*}
         & \mbox{$\voset{\allterm^{\set{\cap, \cdot, \const{I}, \const{D}}}}{1}\quoset{\sim_{\REL}}$ is finite }  \tag*{($\because$ the assumption)}                                                                                                                                                                                                                                          \\
         & \mbox{$\Longrightarrow$ $\voset{\allterm^{\set{\cup, \cap, \cdot, \const{I}, \const{D}}}}{1}\quoset{\sim_{\REL}}$ is finite }  \tag*{($\because$ \Cref{lemma: decomposition union normal form})}                                                                                                                                                                                   \\
         & \mbox{$\Longrightarrow$ $\voset{\allterm^{\set{\cup, \cap, \cdot, \bot, \top, \const{I}, \const{D}}}}{1}\quoset{\sim_{\REL}}$ is finite} \tag*{($\because$ \Cref{lemma: elim bot and top})}                                                                                                                                                                                        \\
         & \mbox{$\Longrightarrow$ $\voset{\allterm^{\set{\cup, \cap, \cdot, \bot, \top, \const{I}, \const{D}} \cup \set{\pi \mid \pi \in \range{1, 2}^{\range{1, 2}}}}}{1}\quoset{\sim_{\REL}}$ is finite} \tag*{(\Cref{lemma: decomposition projection normal form} with the finiteness of $\voset{\allterm^{\set{\pi \mid \pi \in \range{1, 2}^{\range{1, 2}}}}}{1}\quoset{\sim_{\REL}}$)} \\
         & \mbox{$\Longrightarrow$ $\voset{(\CoRSigma_1)}{1}\quoset{\sim_{\REL}}$ is finite.} \tag*{(\Cref{lemma: decomposition complement normal form} with the finiteness of $\voset{\allterm^{\set{-}}}{1}\quoset{\sim_{\REL}}$)}
    \end{align*}
\end{proof}
\section{Proof Completion of \Cref{lemma: soundness}}\label{section: lemma: soundness}
\begin{remark}\label{remark: SMT}
    Using the standard encoding to formulas of first-order logic \cite{Tarski1941}, we can automatically check the validity of these equations.
    See \cite{nakamuraRepositoryFiniteVariableOccurrence2023} for the TPTP files---they are at least checked by Z3 (Z3tptp 4.8.11.0) \cite{demouraZ3EfficientSMT2008}, Vampire 4.7 (linked with Z3 4.8.13.0) \cite{kovacsFirstOrderTheoremProving2013}, and CVC5 1.0.3 \cite{barbosaCvc5VersatileIndustrialStrength2022}.
    Our encoding is based on the encoding into first-order formulas \cite{Tarski1941} (see also \cite{nakamuraExpressivePowerSuccinctness2020,nakamuraExpressivePowerSuccinctness2022}).
    Note that there is another earlier presented TPTP-encoding for the calculus of relations, by H{\"o}fner and Struth \cite{hofnerAutomatingCalculusRelations2008}, which is based on axioms of relation algebras.
\end{remark}
Apart from the automated checking above, in the following, we present an explicit proof for each equation (w.r.t.\ $\sim_{\REL_{\ge 5}}$).

\noindent
For $\iI = \iI\iI$:
\begin{align*}
    a \cap \const{I} & = a \cap (\const{I} \cap \const{I}) \tag{$\const{I} \sim_{\REL} \const{I} \cap \const{I}$} \\
                     & = (a \cap \const{I}) \cap \const{I}. \tag{associativity law}
\end{align*}

\noindent
For $\iD  = \iD\iD$:
\begin{align*}
    a \cap \const{D} & = a \cap (\const{D} \cap \const{D}) \tag{$\const{D} \sim_{\REL} \const{D} \cap \const{D}$} \\
                     & = (a \cap \const{D}) \cap \const{D}.  \tag{associativity law}
\end{align*}

\noindent
For $\iI        = \iI\re$:
\begin{align*}
    a \cap \const{I} & = a^{\smile} \cap \const{I}.  \tag{$\term \cap \const{I} \sim_{\REL} \term^{\smile} \cap \const{I}$}
\end{align*}

\noindent
For $\iI   = \re\iI$:
\begin{align*}
    a \cap \const{I} & = a^{\smile} \cap \const{I}                                                           \tag{$\term \cap \const{I} \sim_{\REL} \term^{\smile} \cap \const{I}$} \\
                     & = (a \cap \const{I})^{\smile}. \tag{PNF; \Cref{lemma: projection normal form}}
\end{align*}

\noindent
For $\iI\iD      = \iD\iI $:
\begin{align*}
    (a \cap \const{D}) \cap \const{I} & = a \cap (\const{D} \cap \const{I})                                                                     \tag{associativity law} \\
                                      & = a \cap (\const{I} \cap \const{D}) \tag{commutativity law}                                                                     \\
                                      & = (a \cap \const{I}) \cap \const{D}. \tag{associativity law}
\end{align*}

\noindent
For $\iD\re  = \re\iD $:
\begin{align*}
    a^{\smile} \cap \const{D} & = (a \cap \const{D})^{\smile}. \tag{PNF}
\end{align*}

\noindent
For $\empword = \re\re$:
\begin{align*}
    a & = a^{\smile\smile}. \tag{PNF}
\end{align*}

\noindent
For $\iD\iI     = \cD\iI\iD$:
\begin{align*}
    (a \cap \const{I}) \cap \const{D} & = a \cap (\const{I} \cap \const{D})  \tag{associativity law}                                             \\
                                      & = a \cap \bot \tag{$\bot \sim_{\REL} \const{I} \cap \const{D}$}                                          \\
                                      & = \bot \tag{$\bot \sim_{\REL} \term \cap \bot$}                                                          \\
                                      & = \bot \cdot \const{D}  \tag{$\bot \sim_{\REL} \bot \cdot \term$}                                        \\
                                      & = (a \cap \bot) \cdot \const{D}  \tag{$\bot \sim_{\REL} \term \cap \bot$}                                \\
                                      & = (a \cap (\const{D} \cap \const{I})) \cdot \const{D}  \tag{$\bot \sim_{\REL} \const{D} \cap \const{I}$} \\
                                      & = ((a \cap \const{D}) \cap \const{I}) \cdot \const{D}.  \tag{associativity law}
\end{align*}

\noindent
For $\iD\iI      = \iI\cD\iI $:
\begin{align*}
    (a \cap \const{I}) \cap \const{D} & = a \cap (\const{I} \cap \const{D})  \tag{associativity law}           \\
                                      & = a \cap \bot \tag{$\bot \sim_{\REL} \const{I} \cap \const{D}$}        \\
                                      & = \bot \tag{$\bot \sim_{\REL} \term \cap \bot$}                        \\
    \label{formula: star 1}           & = ((a \cap \const{I}) \cdot \const{D}) \cap \const{I}.  \tag{$\star$1}
\end{align*}
Here, for \Cref{formula: star 1}, we consider the translated first-order formula:
\[\const{False} \leftrightarrow ((\exists y_1, (a(x_0, y_1) \land x_0 = y_1) \land y_1 \neq y_0) \land x_0 = y_0).\]
This formula is valid over $\REL$ because $x_0 = y_1 \land y_1 \neq y_0 \land x_0 = y_0$ is unsatisfiable.

\noindent
For $\iI\cD      = \iI\cD\iD$:
\begin{align*}
    \label{formula: star 2} (a \cdot \const{D}) \cap \const{I} & = ((a \cap \const{D}) \cdot \const{D}) \cap \const{I} \tag{$\star$2}
\end{align*}
Here, for \Cref{formula: star 2}, we consider the translated first-order formula:
\[((\exists y_1, a(x_0, y_1) \land y_1 \neq y_0) \land x_0 = y_0) \leftrightarrow ((\exists y_1, (a(x_0, y_1) \land x_0 \neq y_1) \land y_1 \neq y_0) \land x_0 = y_0).\]
This formula is valid over $\REL$ because $(y_1 \neq y_0 \land x_0 = y_0) \leftrightarrow (x_0 \neq y_1 \land y_1 \neq y_0 \land x_0 = y_0)$ always holds.

\noindent
For $\cD\iI      = \iD\cD\iI$:
\begin{align*}
    \label{formula: star 3} (a \cap \const{I}) \cdot \const{D} & = ((a \cap \const{I}) \cdot \const{D}) \cap \const{D} \tag{$\star$3}
\end{align*}
Here, for \Cref{formula: star 3}, we consider the translated first-order formula:
\[((\exists y_1, a(x_0, y_1) \land x_0 = y_1) \land y_1 \neq y_0) \leftrightarrow ((\exists y_1, (a(x_0, y_1) \land x_0 = y_1) \land y_1 \neq y_0) \land x_0 \neq y_0).\]
This formula is valid over $\REL$ because $(y_1 \neq y_0 \land x_0 = y_0) \leftrightarrow (x_0 \neq y_1 \land y_1 \neq y_0 \land x_0 = y_0)$ always holds.

\noindent
For $\cD\cD      = \cD\iD\cD$:
\begin{align*}
    (a \cdot \const{D}) \cdot \const{D} & = a \cdot (\const{D} \cdot \const{D})  \tag{associativity law}          \\
                                        & = a \cdot \top \tag{$\top \sim_{\REL} \const{D} \cdot \const{D}$}       \\
    \label{formula: star 4}             & = ((a \cdot \const{D}) \cap \const{D}) \cdot \const{D}.  \tag{$\star$4}
\end{align*}
Here, for \Cref{formula: star 1}, we consider the translated first-order formula:
\[(\exists y_1, a(x_0, y_1)) \leftrightarrow (\exists y_2, ((\exists y_1, a(x_0, y_1) \land y_1 \neq y_2) \land x_0 \neq y_2) \land y_2 \neq y_0).\]
For the right-hand side formula, over $\REL_{\ge 4}$, we have:
\begin{align*}
     & (\exists y_2, ((\exists y_1, a(x_0, y_1) \land y_1 \neq y_2) \land x_0 \neq y_2) \land y_2 \neq y_0)                                                                   \\
     & \leftrightarrow (\exists y_2 \ y_1, a(x_0, y_1) \land y_1 \neq y_2 \land x_0 \neq y_2 \land y_2 \neq y_0) \tag{prenex normal form (prenex)}                            \\
     & \leftrightarrow (\exists y_1, a(x_0, y_1) \land (\exists y_2, y_2 \neq y_1 \land y_2 \neq x_0 \land y_2 \neq y_0)) \tag{prenex}                                        \\
     & \leftrightarrow (\exists y_1, a(x_0, y_1)). \tag{$\const{True} \leftrightarrow (\exists y_2, y_2 \neq y_1 \land y_2 \neq x_0 \land y_2 \neq y_0)$ over $\REL_{\ge 4}$}
\end{align*}
Hence, the translated formula for \Cref{formula: star 4} is valid over $\REL_{\ge 4}$.

\noindent
For $\cD\cD      = \cD\cD\cD$:
\begin{align*}
    (a \cdot \const{D}) \cdot \const{D} & = a \cdot \top   \tag{$\cdot_{\top} = \cD\cD$}                                          \\
                                        & = a \cdot (\top \cdot \const{D}) \tag{$\top \sim_{\REL_{\ge 2}} \top \cdot \const{D}$}  \\
                                        & = (a \cdot \top) \cdot \const{D} \tag{associativity law}                                \\
                                        & = ((a \cdot \const{D}) \cdot \const{D}) \cdot \const{D}.  \tag{$\cdot_{\top} = \cD\cD$}
\end{align*}
Here, ($\cdot_{\top} = \cD\cD$) means the following:
\begin{align*}
    \term \cdot \top & =  \term \cdot (\const{D} \cdot \const{D})  \tag{$\top \sim_{\REL_{\ge 3}} \const{D} \cdot \const{D}$} \\
                     & =  (\term \cdot \const{D}) \cdot \const{D}\tag{associativity law}
\end{align*}

\noindent
For $\cD\cD\iD     = \cD\cD\iI\cD$:
\begin{align*}
    ((a \cap \const{D}) \cdot \const{D}) \cdot \const{D} & = (a \cap \const{D}) \cdot \top \tag{$\cdot_{\top} = \cD\cD$}                                             \\
    \label{formula: star 5}                              & = ((a \cdot \const{D}) \cap \const{I}) \cdot \top  \tag{$\star$5}                                         \\
                                                         & = (((a \cdot \const{D}) \cap \const{I}) \cdot \const{D}) \cdot \const{D}.   \tag{$\cdot_{\top} = \cD\cD$}
\end{align*}
Here, for \Cref{formula: star 5}, we consider the translated first-order formula:
\[(\exists y_1, a(x_0, y_1) \land x_0 \neq y_1) \leftrightarrow (\exists y_2, (\exists y_1, a(x_0, y_1) \land y_1 \neq y_2) \land x_0 = y_2).\]
This formula is valid over $\REL$, which can be shown by using ${\fml}[x_0/y_2] \leftrightarrow (\exists y_2, \fml \land x_0 = y_2)$.

\noindent
For $\iD\cD\cD     = \cD\iI\cD\cD$:
\begin{align*}
    ((a \cdot \const{D}) \cdot \const{D}) \cap \const{D} & = (a \cdot \top) \cap \const{D} \tag{$\cdot_{\top} = \cD\cD$}                                           \\
    \label{formula: star 6}                              & = ((a \cdot \top) \cap \const{I}) \cdot \const{D}  \tag{$\star$6}                                       \\
                                                         & = ((a \cdot \const{D} \cdot \const{D}) \cap \const{I}) \cdot \const{D}.   \tag{$\cdot_{\top} = \cD\cD$}
\end{align*}
Here, for \Cref{formula: star 6}, we consider the translated first-order formula:
\[((\exists y_1, a(x_0, y_1)) \land x_0 \neq y_0) \leftrightarrow (\exists y_1, ((\exists y_2, a(x_0, y_2)) \land x_0 = y_1) \land y_1 \neq y_0).\]
This formula is valid over $\REL$, which can be shown by using ${\fml}[y_0/y_1] \leftrightarrow (\exists y_1, \fml \land y_1 = y_0)$.

\noindent
For $\cD\re\cD\re    = \re\cD\re\cD $:
\begin{align*}
    (a^{\smile} \cdot \const{D})^{\smile} \cdot \const{D} & = (\const{D} \cdot a) \cdot \const{D}        \tag{PNF}                \\
                                                          & = \const{D} \cdot (a \cdot \const{D})  \tag{associativity law}        \\
                                                          & =  ((a \cdot \const{D})^{\smile} \cdot \const{D})^{\smile}. \tag{PNF}
\end{align*}

\noindent
For $\re\cD\iI     = \iD\re\cD\iI$:
\begin{align*}
    ((a \cap \const{I}) \cdot \const{D})^{\smile} & = \const{D} \cdot (a^{\smile} \cap \const{I})                  \tag{PNF}      \\
    \label{formula: star 7}                       & = (\const{D} \cdot (a^{\smile} \cap \const{I})) \cap \const{D} \tag{$\star$7} \\
                                                  & = ((a \cap \const{I}) \cdot \const{D})^{\smile} \cap \const{D}. \tag{PNF}
\end{align*}
Here, for \Cref{formula: star 7}, we consider the translated first-order formula:
\[(\exists x_1, x_0 \neq x_1 \land (a(y_0, x_1) \land x_1 = y_0)) \leftrightarrow (\exists x_1, x_0 \neq x_1 \land (a(y_0, x_1) \land x_1 = y_0)) \land x_0 \neq y_0.\]
This formula is valid over $\REL$ because $(x_0 \neq x_1 \land x_1 = y_0) \leftrightarrow (x_0 \neq x_1 \land x_1 = y_0 \land x_0 \neq y_0)$ always holds.

\noindent
For $\re\cD\iI\cD   = \iD\re\cD\cD\iD$:
\begin{align*}
    (((a \cdot \const{D}) \cap \const{I}) \cdot \const{D})^{\smile} & = \const{D} \cdot ((\const{D} \cdot a^{\smile}) \cap \const{I})                \tag{PNF}                                                                   \\
    \label{formula: star 8}                                         & = (\top \cdot (a^{\smile} \cap \const{D})) \cap \const{D}   \tag{$\star$8}                                                                                 \\
                                                                    & = ((\const{D} \cdot \const{D}) \cdot (a^{\smile} \cap \const{D})) \cap \const{D}                \tag{$\top \sim_{\REL_{\ge 3}} \const{D} \cdot \const{D}$} \\
                                                                    & = (\const{D} \cdot (\const{D} \cdot (a^{\smile} \cap \const{D}))) \cap \const{D} \tag{associativity law}                                                   \\
                                                                    & = (((a \cap \const{D}) \cdot \const{D}) \cdot \const{D})^{\smile} \cap \const{D}. \tag{PNF}
\end{align*}
Here, for \Cref{formula: star 8}, we consider the translated first-order formula:
\begin{align*}
     & (\exists x_2, x_0 \neq x_2 \land ((\exists x_1, x_2 \neq x_1 \land a(y_0, x_1)) \land x_2 = y_0)) \\
     & \leftrightarrow ((\exists x_1, a(y_0, x_1) \land x_1 \neq y_0) \land x_0 \neq y_0).
\end{align*}
This formula is valid over $\REL$, which can be shown by using ${\fml}[y_0/x_2] \leftrightarrow (\exists x_2, \fml \land x_2 = y_0)$.

\noindent
For $\cD\re\cD\re = \re\cD\re\cD$:
\begin{align*}
    (a^{\smile} \cdot \const{D})^{\smile} \cdot \const{D} & = (\const{D} \cdot a) \cdot \const{D} \tag{PNF}                     \\
                                                          & = \const{D} \cdot (a \cdot \const{D}) \tag{associativity law}       \\
                                                          & = ((a \cdot \const{D})^{\smile} \cdot \const{D})^{\smile} \tag{PNF}
\end{align*}

\noindent
For $\re\cD\iI\cD = \iD\re\cD\cD\iD$:
\begin{align*}
    (((a \cdot \const{D}) \cap \const{I}) \cdot \const{D})^{\smile} & = \const{D} \cdot ((\const{D} \cdot a^{\smile}) \cap \const{I}) \tag{PNF}                                         \\
    \label{formula: star 9}                                         & = ((\top \cdot (a^{\smile} \cap \const{D})) \cap \const{D}) \tag{$\star$9}                                        \\
                                                                    & = (((a \cap \const{D}) \cdot \top)^{\smile} \cap \const{D}) \tag{PNF}                                             \\
                                                                    & = ((((a \cap \const{D}) \cdot \const{D}) \cdot \const{D})^{\smile} \cap \const{D})  \tag{$\cdot_{\top} = \cD\cD$}
\end{align*}
Here, for \Cref{formula: star 9}, we consider the translated first-order formula:
\begin{align*}
     & (\exists x_1, x_0 \neq x_1 \land ((\exists x_2, x_1 \neq x_2 \land a(y_0, x_2)) \land x_1 = y_0)) \\
     & \leftrightarrow ((\exists x_2, a(y_0, x_2) \land x_2 \neq y_0) \land x_0 \neq y_0).
\end{align*}
This formula is valid over $\REL$, which can be shown by using ${\fml}[y_0/x_1] \leftrightarrow (\exists x_1, \fml \land x_1 = y_0)$.

\noindent
For $ \cD\re\cD\cD\re = \re\cD\cD\re\cD$:
\begin{align*}
     & (((a \cdot \const{D})^{\smile} \cdot \const{D}) \cdot \const{D})^{\smile}                                \\
     & = ((a \cdot \const{D})^{\smile} \cdot \top)^{\smile} \tag{$\cdot_{\top} = \cD\cD$}                       \\
     & = \top \cdot (a \cdot \const{D})  \tag{PNF}                                                              \\
     & = (\top \cdot a) \cdot \const{D}   \tag{associativity law}                                               \\
     & = (a^{\smile} \cdot \top)^{\smile} \cdot \const{D}                                  \tag{PNF}            \\
     & = ((a^{\smile} \cdot \const{D}) \cdot \const{D})^{\smile} \cdot \const{D}. \tag{$\cdot_{\top} = \cD\cD$}
\end{align*}

\noindent
For $\cD\iD\re\cD\iD\re\cD\iD\re\cD\iD\re\cD\iD = \re\cD\iD\re\cD\iD\re\cD\iD\re\cD\iD\re\cD\iD\re$:
\begin{align*}
                             & (((((((((a \cap \const{D}) \cdot \const{D})^{\smile} \cap \const{D}) \cdot \const{D})^{\smile} \cap \const{D}) \cdot \const{D})^{\smile}
    \cap \const{D}) \cdot \const{D})^{\smile} \cap \const{D}) \cdot \const{D}                                                                                                                      \\
                             & = ((\const{D} \cdot (((((((a \cap \const{D}) \cdot \const{D})^{\smile} \cap \const{D}) \cdot \const{D})^{\smile} \cap \const{D}) \cdot \const{D})
    \cap \const{D})) \cap \const{D}) \cdot \const{D}  \tag{PNF}                                                                                                                                    \\
                             & = ((\const{D} \cdot ((((\const{D} \cdot (((a \cap \const{D}) \cdot \const{D}) \cap \const{D})) \cap \const{D}) \cdot \const{D})
    \cap \const{D})) \cap \const{D}) \cdot \const{D}  \tag{PNF}                                                                                                                                    \\
    \label{formula: star 10} & = \const{D} \cdot ((((\const{D} \cdot ((((\const{D} \cdot (a \cap \const{D})) \cap \const{D}) \cdot \const{D}) \cap \const{D}))
    \cap \const{D}) \cdot \const{D}) \cap \const{D})   \tag{$\star$10}                                                                                                                             \\
                             & = \const{D} \cdot ((((\const{D} \cdot (((((a^{\smile} \cap \const{D}) \cdot \const{D})^{\smile} \cap \const{D}) \cdot \const{D}) \cap \const{D}))
    \cap \const{D}) \cdot \const{D}) \cap \const{D})   \tag{PNF}                                                                                                                                   \\
                             & = \const{D} \cdot (((((((((a^{\smile} \cap \const{D}) \cdot \const{D})^{\smile} \cap \const{D}) \cdot \const{D})^{\smile} \cap \const{D}) \cdot \const{D})^{\smile}
    \cap \const{D}) \cdot \const{D}) \cap \const{D})   \tag{PNF}                                                                                                                                   \\
                             & = ((((((((((a^{\smile} \cap \const{D}) \cdot \const{D})^{\smile} \cap \const{D}) \cdot \const{D})^{\smile} \cap \const{D}) \cdot \const{D})^{\smile}
    \cap \const{D}) \cdot \const{D})^{\smile} \cap \const{D}) \cdot \const{D})^{\smile}.   \tag{PNF}
\end{align*}
Here, for \Cref{formula: star 10}, we consider the translated first-order formula:
\begin{align*}
     & (\exists y_1, ( (\exists x_1, x_0 \neq x_1 \land ((\exists y_2, (                                                               \\
     & \quad (\exists x_2, x_1 \neq x_2 \land ((\exists y_3, (a(x_2, y_3) \land x_2 \neq y_3) \land y_3 \neq y_2) \land x_2 \neq y_2)) \\
     & \quad \land x_1 \neq y_2) \land y_2 \neq y_1) \land x_1 \neq y_1)) \land x_0 \neq y_1) \land y_1 \neq y_0)                      \\
     & \leftrightarrow                                                                                                                 \\
     & (\exists x_1, (\exists y_1, ((\exists x_2, x_1 \neq x_2 \land ((\exists y_2, (                                                  \\
     & \quad (\exists x_3, x_2 \neq x_3 \land (a(x_3, y_2) \land x_3 \neq y_2))                                                        \\
     & \quad \land x_2 \neq y_2) \land y_2 \neq y_1) \land x_2 \neq y_1)) \land x_1 \neq y_1) \land y_1 \neq y_0) \land x_1 \neq y_0).
\end{align*}
By taking the prenex normal form on each side, this formula is equivalent to the following:
\begin{align*}
    \left(\begin{aligned}
               & \exists x_1 \ x_2 \ y_1 \ y_2 \ y_3, a(x_2, y_3)                \\
               & \quad \land x_0 \neq x_1 \land x_1 \neq x_2  \land x_2 \neq y_3 \\
               & \quad \land y_3 \neq y_2 \land y_2 \neq y_1 \land y_1 \neq y_0  \\
               & \quad \land x_0 \neq y_1 \land x_1 \neq y_1                     \\
               & \quad \land x_1 \neq y_2 \land x_2 \neq y_2
          \end{aligned}\right)
     & \leftrightarrow
    \left(\begin{aligned}
               & \exists x_1 \ x_2 \ x_3 \ y_1 \ y_2, a(x_3, y_2)               \\
               & \quad \land x_0 \neq x_1 \land x_1 \neq x_2 \land x_2 \neq x_3 \\
               & \quad \land x_3 \neq y_2 \land y_2 \neq y_1 \land y_1 \neq y_0 \\
               & \quad \land x_1 \neq y_0 \land x_1 \neq y_1                    \\
               & \quad \land x_2 \neq y_1 \land x_2 \neq y_2
          \end{aligned}\right).
\end{align*}
For the left-hand side formula, over $\REL_{\ge 5}$, we have:
\begin{align*}
    \left(\begin{aligned}
                   & \exists x_1 \ x_2 \ y_1 \ y_2 \ y_3, a(x_2, y_3)                \\
                   & \quad \land x_0 \neq x_1 \land x_1 \neq x_2  \land x_2 \neq y_3 \\
                   & \quad \land y_3 \neq y_2 \land y_2 \neq y_1 \land y_1 \neq y_0  \\
                   & \quad \land x_0 \neq y_1 \land x_1 \neq y_1                     \\
                   & \quad \land x_1 \neq y_2 \land x_2 \neq y_2
              \end{aligned}\right)
     & \leftrightarrow
    \left(\begin{aligned}
                   & \exists \ x_2 \ y_3, a(x_2, y_3)  \land x_2 \neq y_3                             \\
                   & \quad \land \exists y_2, y_2 \neq y_3 \land y_2 \neq x_2                         \\
                   & \quad\quad \land \exists x_1, x_1 \neq x_0 \land x_1 \neq x_2 \land x_1 \neq y_2 \\
                   & \quad\quad\quad \land \exists y_1,  y_1 \neq y_2                                 \\
                   & \hspace{5em}\land y_1 \neq y_0 \land y_1 \neq x_0 \land y_1 \neq x_1             \\
              \end{aligned}\right) \tag{prenex}                                                                                       \\
     & \leftrightarrow
    \left(\begin{aligned}
                   & \exists \ x_2 \ y_3, a(x_2, y_3)  \land x_2 \neq y_3                             \\
                   & \quad \land \exists y_2, y_2 \neq y_3 \land y_2 \neq x_2                         \\
                   & \quad\quad \land \exists x_1, x_1 \neq x_0 \land x_1 \neq x_2 \land x_1 \neq y_2
              \end{aligned}\right) \tag{$\const{True} \leftrightarrow \exists y_1,  y_1 \neq y_2 \land y_1 \neq y_0 \land y_1 \neq x_0 \land y_1 \neq x_1$ over $\REL_{\ge 5}$}         \\
     & \leftrightarrow
    \left(\begin{aligned}
                   & \exists \ x_2 \ y_3, a(x_2, y_3)  \land x_2 \neq y_3     \\
                   & \quad \land \exists y_2, y_2 \neq y_3 \land y_2 \neq x_2
              \end{aligned}\right) \tag{$\const{True} \leftrightarrow \exists x_1, x_1 \neq x_0 \land x_1 \neq x_2 \land x_1 \neq y_2$ over $\REL_{\ge 4}$}                             \\
     & \leftrightarrow (\exists \ x_2 \ y_3, a(x_2, y_3)  \land x_2 \neq y_3) \tag{$\const{True} \leftrightarrow \exists y_2, y_2 \neq y_3 \land y_2 \neq x_2$ over $\REL_{\ge 3}$} \\
     & \leftrightarrow (\exists \ x \ y, a(x, y)  \land x \neq y).
\end{align*}
By the same argument, for the right-hand side formula, over $\REL_{\ge 5}$, we have:
\begin{align*}
    \left(\begin{aligned}
               & \exists x_1 \ x_2 \ x_3 \ y_1 \ y_2, a(x_3, y_2)               \\
               & \quad \land x_0 \neq x_1 \land x_1 \neq x_2 \land x_2 \neq x_3 \\
               & \quad \land x_3 \neq y_2 \land y_2 \neq y_1 \land y_1 \neq y_0 \\
               & \quad \land x_1 \neq y_0 \land x_1 \neq y_1                    \\
               & \quad \land x_2 \neq y_1 \land x_2 \neq y_2
          \end{aligned}\right) & \leftrightarrow (\exists \ x \ y, a(x, y)  \land x \neq y).
\end{align*}
Combining them, the translated formula from \Cref{formula: star 9} is valid over $\REL_{\ge 5}$.

\noindent
For $\cD\iI\cD\re\cD\iI\cD\re\cD = \re\cD\iI\cD\re\cD\iI\cD\re\cD$:
\begin{align*}
                             & (((((((a \cdot \const{D})^{\smile} \cdot \const{D}) \cap \const{I}) \cdot \const{D})^{\smile} \cdot \const{D}) \cap \const{I}) \cdot \const{D})                                                   \\
                             & = ((((\const{D} \cdot ((\const{D} \cdot (a \cdot \const{D})) \cap \const{I})) \cdot \const{D}) \cap \const{I}) \cdot \const{D})                                                    \tag{PNF}      \\
    \label{formula: star 11} & = (\const{D} \cdot ((\const{D} \cdot ((((\const{D} \cdot a^{\smile}) \cdot \const{D}) \cap \const{I}) \cdot \const{D})) \cap \const{I}))                             \tag{$\star$11}              \\
                             & = (((((((a \cdot \const{D})^{\smile} \cdot \const{D}) \cap \const{I}) \cdot \const{D})^{\smile} \cdot \const{D}) \cap \const{I}) \cdot \const{D})^{\smile}.                             \tag{PNF}
\end{align*}
Here, for \Cref{formula: star 11}, we consider the translated first-order formula:
\begin{align*}
     & (\exists y_1, ( (\exists y_2, (\exists x_1, x_0 \neq x_1 \land ((\exists x_2,                                                                   \\
     & x_1 \neq x_2 \land (\exists y_3, a(x_2, y_3) \land y_3 \neq y_2)) \land x_1 = y_2)) \land y_2 \neq y_1) \land x_0 = y_1) \land y_1 \neq y_0)    \\
     & \leftrightarrow                                                                                                                                 \\
     & (\exists x_1, x_0 \neq x_1 \land ((\exists x_2, x_1 \neq x_2 \land (\exists y_1,                                                                \\
     & \quad ((\exists y_2, (\exists x_3, x_2 \neq x_3 \land a(y_2, x_3)) \land y_2 \neq y_1) \land x_2 = y_1) \land y_1 \neq y_0)) \land x_1 = y_0)).
\end{align*}
By taking the prenex normal form on each side, this formula is equivalent to the following:
\begin{align*}
    \left(\begin{aligned}
               & \exists x_1 \ x_2 \ y_1 \ y_2 \ y_3, a(x_2, y_3)               \\
               & \quad \land x_0 \neq x_1 \land x_1 \neq x_2                    \\
               & \quad \land y_0 \neq y_1 \land y_1 \neq y_2 \land y_2 \neq y_3 \\
               & \quad \land x_0 = y_1 \land x_1 = y_2
          \end{aligned}\right)
     & \leftrightarrow
    \left(\begin{aligned}
               & \exists x_1 \ x_2 \ x_3 \ y_1 \ y_2, a(y_2, x_3)               \\
               & \quad \land x_0 \neq x_1 \land x_1 \neq x_2 \land x_2 \neq x_3 \\
               & \quad \land y_0 \neq y_1 \land y_1 \neq y_2                    \\
               & \quad \land x_1 = y_0 \land x_2 = y_1
          \end{aligned}\right).
\end{align*}
For the left-hand side formula, we have:
\begin{align*}
    \left(\begin{aligned}
                   & \exists x_1 \ x_2 \ y_1 \ y_2 \ y_3, a(x_2, y_3)               \\
                   & \quad \land x_0 \neq x_1 \land x_1 \neq x_2                    \\
                   & \quad \land y_0 \neq y_1 \land y_1 \neq y_2 \land y_2 \neq y_3 \\
                   & \quad \land x_0 = y_1 \land x_1 = y_2
              \end{aligned}\right)
     & \leftrightarrow
    \left(\begin{aligned}
                   & \exists x_1 \ x_2 \ y_3, a(x_2, y_3)                           \\
                   & \quad \land x_0 \neq x_1 \land x_1 \neq x_2                    \\
                   & \quad \land y_0 \neq x_0 \land x_0 \neq x_1 \land x_1 \neq y_3
              \end{aligned}\right) \tag{${\fml}[z'/z] \leftrightarrow \exists z, \fml \land z = z'$}                                                                                                      \\
     & \leftrightarrow (\exists x_2 \ y_3, a(x_2, y_3)  \land x_0 \neq y_0) \tag{$\const{True} \leftrightarrow \exists x_1,  x_1 \neq x_0 \land x_1 \neq x_2 \land x_1 \neq y_3$ over $\REL_{\ge 5}$} \\
     & \leftrightarrow (\exists x \ y, a(x, y) \land x_0 \neq y_0). \tag{By renaming}
\end{align*}
For the right-hand side formula, we have:
\begin{align*}
    \left(\begin{aligned}
                   & \exists x_1 \ x_2 \ x_3 \ y_1 \ y_2, a(y_2, x_3)               \\
                   & \quad \land x_0 \neq x_1 \land x_1 \neq x_2 \land x_2 \neq x_3 \\
                   & \quad \land y_0 \neq y_1 \land y_1 \neq y_2                    \\
                   & \quad \land x_2 = y_1 \land x_1 = y_0                          \\
              \end{aligned}\right)
     & \leftrightarrow
    \left(\begin{aligned}
                   & \exists x_3 \ y_1 \ y_2, a(y_2, x_3)                           \\
                   & \quad \land x_0 \neq y_0 \land y_0 \neq y_1 \land y_1 \neq x_3 \\
                   & \quad \land y_0 \neq y_1 \land y_1 \neq y_2                    \\
              \end{aligned}\right) \tag{${\fml}[z'/z] \leftrightarrow \exists z, \fml \land z = z'$}                                                                                                                       \\
     & \leftrightarrow (\exists x_3 \ y_2, a(y_2, x_3) \land x_0 \neq y_0) \tag{$\const{True} \leftrightarrow \exists y_1, y_1 \neq y_0 \land y_1 \neq x_3 \land y_1 \neq y_0 \land y_1 \neq y_2$ over $\REL_{\ge 5}$} \\
     & \leftrightarrow (\exists x \ y, a(x, y) \land x_0 \neq y_0). \tag{By renaming}
\end{align*}

Hence, we have proved all the equations in \Cref{figure: equations}.

\section{The minimal DFA and SMT-LIB2 file for \Cref{lemma: finite lang}}\label{appendix: SMT-LIB}
(The files in this section can also be seen in \cite{nakamuraRepositoryFiniteVariableOccurrence2023}.)
\Cref{universal} is the SMT-LIB2 file for showing:
there is no \kl{word} $w$ of length $\wlen{w} \ge 29$ such that $w \in \extpvsigzero^* \setminus (\bigcup_{i \in \range{1, 21}} \extpvsigzero^* \word[2]_i \extpvsigzero^*)$.
\Cref{universal_out} is the output by Z3 \cite{demouraZ3EfficientSMT2008} (Z3 version 4.11.0).
Thus, we have that $\extpvsigzero^* \setminus (\bigcup_{i \in \range{1, 21}} \extpvsigzero^* \word[2]_i \extpvsigzero^*)$ is finite.

Additionally, \Cref{universal_28} is the SMT-LIB2 file for showing:
there is a \kl{word} $w$ of length $\wlen{w} \ge 28$ such that $w \in \extpvsigzero^* \setminus (\bigcup_{i \in \range{1, 21}} \extpvsigzero^* \word[2]_i \extpvsigzero^*)$.
\Cref{universal_28_out} is the output by Z3 \cite{demouraZ3EfficientSMT2008} (Z3 version 4.11.0).
For example, the following \kl{word} of length $28$ is in $\extpvsigzero^* \setminus (\bigcup_{i \in \range{1, 21}} \extpvsigzero^* \word[2]_i \extpvsigzero^*)$:
\[\iI\iD\cD\cD\re\cD\iI\cD\re\cD\cD\iD\re\cD\iD\re\cD\iD\re\cD\iD\re\cD\iD\re\cD\cD\iI.\]
(This \kl{word} can be obtained by uncommenting the last line of \Cref{universal_28}. See also \Cref{subsection: DFA}.)

\newpage

\lstinputlisting[ frame=h, caption=The SMT-LIB 2 file of \Cref{lemma: finite lang} for length $\ge 29$.,label=universal,captionpos=t,float,abovecaptionskip=-\medskipamount]{programs/universal.smt2}
\lstinputlisting[ frame=h, caption=The output by Z3 (Z3 version 4.11.0) for \Cref{universal}.,label=universal_out,captionpos=t,float,abovecaptionskip=-\medskipamount]{programs/universal.smt2.z3.out}

\lstinputlisting[ frame=h, caption=The SMT-LIB 2 file of \Cref{lemma: finite lang} for length $\ge 28$.,label=universal_28,captionpos=t,float,abovecaptionskip=-\medskipamount]{programs/universal\_28.smt2}
\lstinputlisting[ frame=h, caption=The output by Z3 (Z3 version 4.11.0) for \Cref{universal_28}.,label=universal_28_out,captionpos=t,float,abovecaptionskip=-\medskipamount]{programs/universal\_28.smt2.z3.out}

\clearpage
\subsection{Minimal DFA for \Cref{lemma: finite lang}} \label{subsection: DFA}
\Cref{fig:awesome_image} presents the minimal DFA of $(\bigcup_{i \in \range{1, 21}} \extpvsigzero^* \word[2]_i \extpvsigzero^*)$
(see online since the DFA is large; see also \cite{nakamuraRepositoryFiniteVariableOccurrence2023} for the dot file).
Its language is cofinite because this is \emph{acyclic} except the accepting state.
(The red colored edges denote the aforementioned word of length $28$ which is not accepted by the DFA.)
Thus, the minimal DFA also shows the cofiniteness of its language graphically.

\begin{sidewaysfigure}
    \centering

    \scalebox{0.168}{

% [inline block 0: 1 envs, 62318 chars -> data_tex | \begin{tikzpicture}[>=latex,line join=bevel,]   \pgfsetlinewidth{1bp}...]


     }
    \caption{The minimal DFA of $\bigcup_{i \in \range{1, 21}} \extpvsigzero^* \word[2]_i \extpvsigzero^*$.
        (See online).}
    \label{fig:awesome_image}
\end{sidewaysfigure}
  \fi

\end{document}